\documentclass[11pt,letterpaper,onecolumn]{article}
\usepackage[T1]{fontenc}
\usepackage[utf8]{inputenc}
\usepackage{lmodern,microtype}
\usepackage{graphicx}
\usepackage[margin=1in]{geometry}
\usepackage{amsmath,amssymb,amsthm,mathtools}
\usepackage{algorithm,algpseudocode,enumitem,booktabs}
\usepackage{needspace,etoolbox}
\usepackage{tikz}
\usetikzlibrary{arrows.meta}
\usepackage[hidelinks]{hyperref}
\hypersetup{pdftitle={Enumerating small potential maximal cliques with fixed-parameter delay},pdfauthor={Batya Kenig}}
\setlist{itemsep=0.25em,topsep=0.4em}
\newtheorem{theorem}{Theorem}[section]
\newtheorem{lemma}[theorem]{Lemma}
\newtheorem{corollary}[theorem]{Corollary}
\newenvironment{citedLemma}[1]
  {\begin{lemma}[{#1}]}
  {\end{lemma}}
\newcommand{\Local}{\texttt{Local}}
\newcommand{\Regions}{\texttt{Regions}}
\newcommand{\CheckPMC}{\texttt{Check}}
\newcommand{\Compress}{\texttt{Compress}}
\newcommand{\Tor}{\operatorname{Tor}}
\algrenewcommand\textproc{\texttt}
\AtBeginEnvironment{theorem}{\Needspace{8\baselineskip}}
\AtBeginEnvironment{lemma}{\Needspace{8\baselineskip}}
\makeatletter
\providecommand{\theHALG@line}{}
\renewcommand{\theHALG@line}{\thealgorithm.\arabic{ALG@line}}
\makeatother

\newcommand{\nodes}{\texttt{V}}
\newcommand{\edges}{\texttt{E}}
\newcommand{\sminus}{\scalebox{0.75}[1.0]{$-$}}
\def\set#1{\mathord{\{#1\}}}
\newcommand{\eat}[1]{}
\def\PMC{\mathrm{PMC}}

\def\set#1{\mathord{\{#1\}}}
\newcommand{\minlsep}[2]{\mathcal{S}_{#1}(#2)}

\def\eqdef{\mathrel{\stackrel{\textsf{\tiny def}}{=}}}

\newenvironment{repeatresult}[2]
{\vskip0.5em\par\textsc{#1} #2.\em}
{\vskip1em}

\newenvironment{reptheorem}[1]{\begin{repeatresult}{Theorem}{#1}}{\end{repeatresult}}

\title{Enumerating Small Potential Maximal Cliques with Fixed-Parameter Delay}
\author{Batya Kenig}
\date{}

\begin{document}
\maketitle

\begin{abstract}
	A \emph{potential maximal clique (PMC)} of a graph $G$ is a nonempty vertex set
	that forms a maximal clique in some minimal triangulation of $G$.
	Listing the PMCs of a graph is a key step in exact algorithms for
	computing treewidth, minimum fill-in, and related graph parameters.
	In this paper, we present a deterministic algorithm that enumerates
	all PMCs of cardinality at most $k$ in an $n$-vertex graph $G$,
	without repetition, with delay $2^{O(k^3)}n^4\log n$ for $n\ge2$
	and $k\ge1$. This bound also covers the preprocessing time before
	the first output, the time from the last output to termination, and
	the total running time when no such PMC exists. At the core of our
	enumeration algorithm is a procedure that solves the following
	constrained search problem within the same time bound: given sets
	of required and excluded vertices $I,O\subseteq\nodes(G)$, find a
	PMC $\Omega$ of size at most $k$ satisfying $I\subseteq\Omega$ and
	$\Omega\cap O=\emptyset$, or determine that none exists. Our
	enumeration algorithm directly supports these inclusion and exclusion
	constraints as well.
	
	Our approach builds on the \emph{recursive understanding} technique of
	Chitnis et al.~\cite{chitnis2016} and the reduction to \emph{unbreakable
	graphs} of Lokshtanov et al.~\cite{lokshtanov2018}.
	A graph is \emph{$(q,k)$-unbreakable} if no removal of at most $k$ vertices
	leaves two disjoint unions of connected components that each contain
	more than $q$ vertices. We first solve constrained PMC search on
	$(q,k)$-unbreakable graphs, and then apply this procedure to replace
	separated subgraphs with graphs of size $2^{O(k^2)}$ that
	preserve the existence of a constrained solution for every graph
	attached along the shared boundary. By adding a bounded number of
	excluded auxiliary vertices, we ensure that every constrained solution found in the
	reduced graph is a valid PMC of the original graph.
\end{abstract}
\section{Introduction}\label{sec:introduction}

A graph $G$ is \emph{chordal} if every cycle of length four or more
contains a chord, that is, an edge joining two nonconsecutive vertices
of the cycle. A \emph{triangulation} of $G$ is a chordal supergraph
obtained by adding edges to $G$, and it is \emph{minimal} if no proper
subset of the added edges yields a chordal graph~\cite{bouchitte2001}.
A \emph{potential maximal clique} (PMC) of $G$ is a nonempty vertex set
that forms a maximal clique in some minimal triangulation of $G$.
A PMC need not be a clique in $G$ itself: in a four-vertex cycle,
adding either diagonal produces a minimal triangulation consisting of
two triangles, so each of the four three-vertex subsets is a PMC even
though the cycle is triangle-free.
In the \emph{PMC enumeration problem}, the goal is to list all PMCs of
an input graph $G$ without repetition.

Potential maximal cliques serve as the candidate bags for dynamic
programming algorithms that construct optimal tree decompositions and
triangulations. Given the full set of PMCs of $G$, the classical
algorithm of Bouchitt\'e and Todinca~\cite{bouchitte2001} computes both
\emph{treewidth} (one less than the minimum, over all triangulations, of
the maximum clique size) and \emph{minimum fill-in} (the minimum number
of edges needed to triangulate $G$) in time polynomial in $n$ and the
number of PMCs, without explicitly enumerating the minimal
triangulations themselves. Fomin, Todinca, and Villanger~\cite{fomin2015}
substantially generalized this paradigm, showing that a broad class of
induced-subgraph and vertex-deletion problems—including Maximum
Independent Set, Minimum Feedback Vertex Set, and Longest Induced
Path—can likewise be solved in time polynomial in $n$ and the number of
PMCs.

Finding low-cost triangulations and tree decompositions is also central
to numerical linear algebra, probabilistic inference, and database query
processing. In Cholesky factorization of sparse symmetric positive
definite matrices, fill-in edges correspond to new nonzero entries
introduced during Gaussian elimination, so minimizing fill-in directly
reduces storage and can reduce arithmetic cost~\cite{rose1976}. In discrete
Bayesian networks, exact inference via junction trees passes messages
between potentials defined over the maximal cliques of a moralized,
triangulated graph; because a table over $b$ binary variables has $2^b$
entries, the maximum clique size governs both space and time
complexity~\cite{bodlaender2005}. Low-width tree
decompositions similarly define tractable classes of constraint
satisfaction problems and conjunctive queries with projections in
relational databases~\cite{grohe2014}. Motivated by these
applications, Korhonen, Berg, and J\"arvisalo~\cite{korhonen2019jea}
implemented and evaluated PMC-based exact solvers for several graph and
hypergraph width measures as well as total table size in Bayesian
networks; on standard benchmarks, these algorithms were competitive with problem-specific solvers.

Because a graph can have exponentially many PMCs~\cite{brosse2024},
generating the entire family $\PMC(G)$ quickly becomes prohibitive.
Indeed, Korhonen, Berg, and J\"arvisalo~\cite{korhonen2019ijcai}
identified PMC enumeration as the primary scalability bottleneck of the
Bouchitt\'e--Todinca framework in practice. For parameterized and
bounded-width problems, however, full enumeration is unnecessary:
deciding whether the treewidth of $G$ is at most $k-1$, for instance,
requires only the PMCs of size at most
$k$~\cite{korhonen2019ijcai}. In this work, we study the
problem of enumerating all PMCs of cardinality at most $k$ directly,
without generating larger PMCs.

\paragraph{Our results.}
We present a deterministic algorithm that lists all PMCs of size at most
$k$ in an arbitrary graph $G$ with \emph{fixed-parameter delay}, meaning
that the time elapsed before the first output, between any two
consecutive outputs, and after the last output is bounded by
$f(k)\cdot n^c$ for a computable function $f$ and a constant $c$
independent of $k$. We establish this result by reducing enumeration to
the constrained search problem, which we call \texttt{Extend}: given
subsets $I,O\subseteq\nodes(G)$ of required and excluded vertices, find
a PMC $\Omega$ of size at most $k$ such that $I\subseteq\Omega$ and
$\Omega\cap O=\emptyset$, or report that no such PMC exists. 

\begin{theorem}[Main result]\label{thm:intro-main}
Let $G$ be an $n$-vertex graph, where $n\ge2$, let $k\ge1$, and
let $I,O\subseteq \nodes(G)$. There is a deterministic algorithm that
finds a PMC of size at most $k$ containing $I$ and avoiding $O$, or
correctly reports that none exists, in time
\[
 2^{O(k^3)}n^4\log n.
\]
All such PMCs can be enumerated without repetition with delay
$2^{O(k^3)}n^4\log n$. This delay bound also covers the initial and
final waiting times, and thus $M$ outputs are produced in total time
$(M+1)2^{O(k^3)}n^4\log n$.
\end{theorem}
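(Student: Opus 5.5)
The plan has two parts. First I solve the constrained search problem \texttt{Extend} within the stated time bound. Then I obtain enumeration by the standard flashlight (binary partition) search driven by \texttt{Extend}.

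\textbf{Enumeration from \texttt{Extend}.} Fix an ordering $v_1,\dots,v_n$ of $\nodes(G)$. The search tree has one node per pair $(I,O)$, with $I\cup O=\{v_1,\dots,v_i\}$ and the original constraints folded in. At each node we call \texttt{Extend}$(I,O)$ and prune the node if it answers ``none''. Otherwise we branch on $v_{i+1}$ into $(I\cup\{v_{i+1}\},O)$ and $(I,O\cup\{v_{i+1}\})$, skipping the first branch when $|I|=k$. At a leaf ($i=n$) the set $I$ is exactly a PMC, and we output it. Every surviving node has a solution below it, so consecutive outputs are separated by at most $O(n)$ tree nodes. Each node costs one \texttt{Extend} call. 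This gives delay $O(n)\cdot T_{\texttt{Extend}}$.

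To reach delay $2^{O(k^3)}n^4\log n$ without the extra factor $n$, I would use the returned PMC $\Omega$ directly. Output $\Omega$ immediately, and follow the path determined by $\Omega$ without fresh calls along it. Only the siblings hanging off this path need \texttt{Extend} calls, and these can be amortized or deferred by alternating output order (Uno's trick). Also, $\Omega$ has only $k$ vertices, so the path has at most $k$ ``include'' branches. Compressing runs of ``exclude'' steps into a single call with a large $O$ gives $O(k)$ calls per output. The $k$ factor is absorbed into $2^{O(k^3)}$. Repetition-freeness is immediate because the leaves of the tree are distinct sets.

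\textbf{Solving \texttt{Extend}.} I follow the recursive understanding approach.

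First, I give an algorithm for $(q,k)$-unbreakable graphs with $q=2^{O(k^2)}$. In such a graph, a PMC $\Omega$ of size at most $k$ has components of $G-\Omega$ whose neighbourhoods are minimal separators of size below $k$. Unbreakability forces all but at most $q$ vertices into a single ``big'' component. So $\Omega$ together with all small components spans at most $q+k$ vertices. Guessing this structure is done deterministically via splitters or $(n,q+k,k)$-universal families, giving time $2^{O(k^2\log q)}=2^{O(k^3)}$ times a polynomial. Each guess is verified with the Bouchitt\'e--Todinca PMC test.

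Second, I handle graphs that are not unbreakable. I find a separation $(A,S,B)$ with $|S|\le k$ and both sides large, for which the known algorithm of Chitnis et al.\ applies within the time budget. I recurse on the smaller side augmented by its boundary $S$, and compute for every ``boundary type'' (the behaviour on $S$ of the constraints and attachments) whether a constrained solution exists. From this I build a representative graph of size $2^{O(k^2)}$ that has the same answers for every graph glued along $S$. Auxiliary gadget vertices are placed in $O$, so that solutions in the reduced graph are genuine PMCs of $G$.

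A recurrence in which the smaller side halves gives $O(\log n)$ levels of recursion. Each level costs $n^4$ for the separation-finding, minimal-separator and PMC checks, which yields $2^{O(k^3)}n^4\log n$. A returned solution is lifted back to the original graph by re-running the local search on the replaced part.

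\textbf{Main obstacle.} The hardest step is the replacement lemma, namely proving that PMC-existence under inclusion/exclusion constraints is a finite-index property over boundaries of size at most $k$. This requires showing that whether $\Omega\cap(A\cup S)$ extends to a PMC depends only on $\Omega\cap S$, on the connectivity pattern that components of $A-\Omega$ induce on $S$, and on the local constraints. I would establish this through the characterization that $\Omega$ is a PMC if and only if $G-\Omega$ has no full component and every non-adjacent pair in $\Omega$ is linked through some component. Both conditions decompose along $S$ into $2^{O(k^2)}$ patterns, and each pattern is realized by a small gadget.
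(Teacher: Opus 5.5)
\textbf{Gap in the enumeration.} The enumeration half does not reach the stated delay. In your vertex-ordered flashlight tree, the path of a witness $\Omega$ has $n-|\Omega|$ siblings of the form ``agree with $\Omega$ on $v_1,\dots,v_{i-1}$ and include $v_i$'', where $v_i\notin\Omega$. These are distinct subproblems. Each may contain a PMC, namely one that contains $v_i$ but omits a later vertex of $\Omega$. So they cannot be dropped, and they cannot be settled by one call with a large exclusion set, since a single pair $(I,O)$ cannot express ``contains one of $v_a,\dots,v_b$''. Uno's alternating-output trick only reorders outputs; it does not remove the up to $n-|\Omega|$ (possibly failing) sibling calls charged to each output. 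As written, you only get delay $O(n)$ times the cost of \texttt{Extend}.

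The missing idea is that PMCs form an antichain. Suppose $A\subsetneq B$ were both PMCs. Pick $v\in B\setminus A$, let $D$ be its component of $G\sminus A$, and pick $u\in A\setminus N_G(D)$, which exists because $D$ is not full. Every $v,u$-path leaves $D$ through $N_G(D)\subseteq A\setminus\{u\}\subseteq B$. This contradicts the pair condition for the nonadjacent pair $u,v\in B$. Hence every other member of $\PMC_k(G,I,O)$ misses some vertex of $Q\setminus I=\{x_1,\dots,x_t\}$, where $t\le k$. So the Lawler--Murty children $(I\cup\{x_1,\dots,x_{j-1}\},\,O\cup\{x_j\})$ partition the remaining solutions. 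A stack of (subproblem, witness) pairs then gives at most $k$ \texttt{Extend} calls between consecutive outputs, which is what the paper does.

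\textbf{Gap in \texttt{Extend}.} Your architecture matches the paper's, but the step you flag as the main obstacle is left unproved, and the gadget route is genuinely delicate. Gadgets for different boundary types coexisting in one graph can merge components across the boundary. They can also create or destroy full components for candidates that mix vertices of several gadgets. You would additionally need to lift solutions through nested replacements.

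The paper avoids gadgets entirely and compresses the \emph{outside} instead. For fixed $X=\Omega\cap\nodes(J)$, the graph $\Tor(J,Z\cup X)$ plus at most one excluded marker adjacent to $X$ preserves PMC membership (Lemma~\ref{lem:retained-small}). So the paper runs $\Local$ on $H\oplus_Z T$ for all $2^{O(k^2)}$ test graphs $T$. It keeps the actual vertices of $H$ used by one witness per successful test, giving a set $R$ that always contains some solution's trace on $H$. It then replaces $H$ by $\Tor(H,R)$ plus one excluded vertex per torso neighbourhood that is a clique PMC of the torso; there are at most $|R|$ of these. This preserves membership exactly for candidates inside $R\cup\nodes(J)$, so solutions of the reduced graph are literally solutions of $G$ and no lifting is needed.

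Three further points also need repair:
\begin{enumerate}
\item Recursing on the smaller side by vertex count does not control the boundary. The piece inherits the part of the old boundary lying in it plus the new separator, so the boundary can grow by $k$ per level and destroy the $2^{O(k^2)}$ type count. The paper orients the separation so that $|B\cap L|\le|B\cap U|$, which keeps $|B'|\le 2k$.
\item Your unbreakable-case bound $2^{O(k^2\log q)}$ is $2^{O(k^4)}$ when $q=2^{\Theta(k^2)}$. The paper gets exponent $O(k\log q)$ from Lemma~\ref{lem:region}: $\Omega=N_G(C)\cup X$ with $C$ connected, $|C|\le q$ and $|N_G(C)|\le k-1$. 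It then enumerates such $C$ with $\binom{q+k-2}{k-1}$ leaves per root.
\item The factor $n^4\log n$ does not come from $O(\log n)$ levels. It comes from at most $2(n-r)-1$ calls to \texttt{Reduce}, each dominated by the $2^{O(k^3)}n^3\log n$ cut-finding step.
\end{enumerate}
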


The search bound of Theorem~\ref{thm:intro-main} is established in
Theorem~\ref{thm:general-extend}, and the well-known Lawler--Murty reduction from
constrained search to enumeration~\cite{murty1968,lawler1972} is given at the end of this section.
To our knowledge, Theorem~\ref{thm:intro-main} provides the first
fixed-parameter delay algorithm for enumerating PMCs of size at most
$k$ in general graphs. Previously, Fomin and
Villanger~\cite{fomin2008} gave an algorithm for
listing PMCs of size at most $k$ whose polynomial degree depends on $k$
(yielding an XP running time). Recently, Berg\'e, Limouzy, and
Schivre~\cite{berge2026} presented an
output-linear algorithm for listing \emph{all} PMCs of a connected graph $G$ in total
time $O(n^4\cdot m\cdot |\PMC(G)|)$, where $m=|\edges(G)|$.\eat{, and posed fixed-parameter
output-linear enumeration of size-bounded PMCs as an open problem.}
Theorem~\ref{thm:intro-main} guarantees
both fixed-parameter delay and output-linear total time in the
parameterized setting.

\paragraph{Recognizing a PMC in the whole graph.}
Whether a nonempty vertex subset $\Omega\subseteq\nodes(G)$ is a PMC is a global
property of $G\sminus\Omega$. By the characterization of Bouchitt\'e and
Todinca~\cite{bouchitte2001,brosse2024}, $\Omega$ is a PMC of $G$ if and
only if (1)~no connected component of $G\sminus\Omega$ is adjacent to
every vertex of $\Omega$ (no component is \emph{full}), and (2)~for every
pair of nonadjacent vertices in $\Omega$, some connected component of
$G\sminus\Omega$ contains a neighbor of each vertex.
Consequently, components of $G\sminus\Omega$ play a dual role: they
provide the paths that witness condition~(2), and they can also form
full-component obstructions that violate condition~(1). Naive graph
simplifications, such as deleting vertices or splitting the graph along
a separator, can destroy required connections or inadvertently eliminate
a full component. While testing all vertex subsets of size at most $k$
takes $n^{O(k)}$ time, deciding whether a prescribed vertex set can be
extended to a PMC of unbounded size is
NP-complete~\cite{berge2026}. Our algorithm must therefore
exploit the cardinality bound $k$ to restrict the search space while
faithfully preserving both connectivity witnesses and full-component
obstructions across graph reductions.

\paragraph{Locating a PMC in a small connected region.}
We begin by solving \texttt{Extend} on \emph{$(q,k)$-unbreakable}
graphs—those in which no separator of size at most $k$ partitions the
remaining vertices into two disjoint unions of connected components that
each exceed size $q$. We prove that if $G$ is $(q,k)$-unbreakable and
$n>3q+k$, then every PMC $\Omega$ of size at most $k$ satisfies
$S\subsetneq\Omega\subseteq C\cup S$, where $C$ is a connected vertex
set of size at most $q$ and $S=N_G(C)$ is its external neighborhood of
size at most $k-1$.

To see why this holds, observe that since $|\Omega|\le k$ and $n>3q+k$,
the $(q,k)$-unbreakability of $G$ implies that $G\sminus\Omega$ has a
unique connected component $D$ of size greater than $q$. Set
$S=N_G(D)$. Because $\Omega$ has no full component in $G\sminus\Omega$,
we have $S\subsetneq\Omega$. By the pairwise connection property of
PMCs, any two vertices in $\Omega{\setminus}S$ are either adjacent or
connected via a component of $G\sminus\Omega$ distinct from $D$; in
either case, the connecting path avoids $S$, which places all vertices
of $\Omega{\setminus}S$ inside a single connected component $C$ of
$G\sminus S$. The same pairwise condition ensures that every vertex of
$S$ has a neighbor in $C$, so $N_G(C)=S$. Finally, because $S$ separates
$C$ from the large component $D$ and $|S|\le|\Omega|-1\le k-1$,
unbreakability forces $|C|\le q$. Thus $\Omega$ consists of the entire
neighborhood $S=N_G(C)$ together with a nonempty subset of $C$.
Adapting the \emph{bounded-neighborhood connected-set enumeration} of Fomin
and Villanger~\cite[Section~3]{fomin2008}, we enumerate all candidate
pairs $(C,N_G(C))$ with $|C|\le q$ and $|N_G(C)|\le k-1$ by branching on
whether each newly reached neighbor joins $C$ or $N_G(C)$. Testing the
resulting subsets $\Omega=S\cup X$ (for nonempty $X\subseteq C$ with
$|X|\le k-|S|$) against the global PMC criterion solves \texttt{Extend}
on $(q,k)$-unbreakable graphs in time
$2^{O(k\log(2+q/k))}(n+1)^3$ (Theorem~\ref{thm:main}). Crucially, the
dependence on $q$ in the exponent is logarithmic: in our general
reduction, $q$ will be $2^{O(k^2)}$, so the factor
$2^{O(k\log(2+q/k))}$ becomes $2^{O(k^3)}$.

\paragraph{Replacing a separated piece.}
To handle arbitrary graphs, we apply the \emph{recursive understanding}
framework~\cite{chitnis2016,lokshtanov2018}. When $G$ admits a small
separation, we decompose $G$ into two induced subgraphs $H$ and $J$
that overlap only on a small boundary $Z=\nodes(H)\cap\nodes(J)$, and
we replace the piece $H$ with a graph $K$ whose size is bounded by a
function of $k$. For every outside graph $J$ attached along $Z$, the graph obtained by
attaching $J$ to $H$ (via $Z$) must have a PMC of size at most $k$ satisfying its
inclusion and exclusion constraints if and only if the graph obtained
by attaching the same $J$ to $K$ has such a PMC under the corresponding
constraints. Thus, replacing $H$ by $K$ preserves whether a constrained
solution exists, regardless of the graph attached outside the boundary.
We construct $K$ in two stages: first, given a subset $R\subseteq\nodes(H)$
of retained vertices (containing $Z$ and all required vertices of $H$),
we compress $H$ into a graph $K$ of size at most $2|R|$ that preserves
PMC membership for all candidates $\Omega$ satisfying
$\Omega\cap\nodes(H)\subseteq R$; second, we compute a set $R$ of size
$2^{O(k^2)}$ such that for every outside graph $J$ and every choice of
constraints on $J$, if $H\cup J$ admits a constrained PMC of size at
most $k$, then it admits one whose intersection with $\nodes(H)$ is
contained in $R$.

\paragraph{Preserving PMCs under torso compression.}
Given the retained set $R\subseteq\nodes(H)$, our starting point is the
\emph{torso} $\Tor(H,R)$, obtained from $H[R]$ by turning the
neighborhood $N_H(C)\subseteq R$ of each connected component $C$ of
$H\sminus R$ into a clique~\cite{marx2013}. While the torso preserves
all paths through $H\sminus R$ whose endpoints lie in $R$, it does not
preserve PMC membership on its own: if a deleted component $C$ of
$H\sminus R$ had $N_H(C)=\Omega$, then $C$ was a full component for
$\Omega$ in $H$, and deleting $C$ removes this obstruction. For example, in the
three-vertex path $a$--$c$--$b$ with $R=\set{a,b}$, the
torso is the single edge $(a,b)$, for which $\set{a,b}$ is a PMC even
though the deleted component $\set{c}$ makes $\set{a,b}$ invalid in $H$.
Lemma~\ref{lem:torso} shows that, for any outside graph $J$ attached to $H$
via $Z$, a candidate $\Omega\subseteq R\cup\nodes(J)$ is a PMC
before replacing $H$ by $\Tor(H,R)$ if and only if it is a PMC
after making the replacement and $\Omega\ne N_H(C)$ for every component $C$ of
$H\sminus R$.

The neighborhoods requiring this repair are both cliques and PMCs
of the torso $\Tor(H,R)$. By a bound of Fomin and
Villanger~\cite{fomin2013}, there are at most $|R|$
such neighborhoods. We therefore add at most $|R|$ auxiliary vertices,
giving a replacement graph with at most $2|R|$ vertices. For every
outside graph attached along the boundary, this replacement preserves
whether each candidate formed from vertices of $R$ and of the outside
graph is a PMC (Theorem~\ref{thm:compression}).
\paragraph{Representing attachments by finitely many tests.}
To select the retained set $R\subseteq\nodes(H)$ when $H$ is
$(q,k)$-unbreakable, we must ensure that $R$ captures at least one
witness $\Omega\cap\nodes(H)$ for every outside graph $J$ and every
feasible intersection $X=\Omega\cap\nodes(J)$ of size at most $k$.
In Lemma~\ref{lem:retained-small}, we show that once $X\subseteq\nodes(J)$
is fixed, the effect of $J$ on candidates $\Omega$ with
$\Omega\cap\nodes(J)=X$ can be simulated by a small graph $J_X$ on
$Z\cup X$ together with at most one excluded auxiliary vertex: specifically,
$J_X$ is formed from the torso $\Tor(J,Z\cup X)$ by adding a single
excluded vertex adjacent to $X$ whenever $J\sminus(Z\cup X)$ contains a
component whose neighborhood in $J$ is exactly $X$. Thus any outside
attachment $J$ and selected set $X$ of size at most $k$ can be
represented by a \emph{test graph} $T$ consisting of the boundary $Z$ and at
most $k+1$ additional vertices.

We choose $R$ using a small collection of test-graphs that capture how a
solution can interact with the rest of the graph. For each test, we apply our algorithm for constrained PMC search on
unbreakable graphs to find one solution, if one exists, and retain
the vertices of $H$ used by that solution. We also keep the
required vertices and the vertices through which $H$ connects to the
rest of the graph. Our construction yields a set $R$ of size
$2^{O(k^2)}$ with the following guarantee: whenever attaching an
outside graph to $H$ admits a solution, at least one such solution
uses only vertices of $R$ within $H$ (Theorem~\ref{thm:retained}).

\paragraph{Reducing a general graph.}
The construction described above gives a small replacement for an unbreakable
graph. To handle a general graph, we use a procedure of Chitnis et
al.~\cite{chitnis2016}, in the form given by Golovach et
al.~\cite{golovach2018}, that either certifies the required
unbreakability condition or divides the graph into two smaller pieces
that meet in at most $k$ vertices. In the latter case, we first reduce
one piece, applying the same process within it if necessary. Once
we obtain a small replacement for that piece, we substitute the piece with this small replacement
and continue reducing the resulting graph.

The pieces are chosen large enough that each substitution decreases
the number of vertices. We prove that only $O(n)$ applications of
this reduction are needed, and that the entire input can be replaced
by a graph with $2^{O(k^2)}$ vertices. We then search this small graph
directly for existence of the constrained PMC. Each replacement preserves the existence of a constrained
solution. Moreover, auxiliary vertices cannot be selected, and our
preservation guarantees ensure that every returned solution is a valid
constrained PMC of the original graph.

\paragraph{From constrained search to enumeration.}
With \texttt{Extend} in hand, we enumerate all constrained PMCs of size
at most $k$ via the standard Lawler--Murty partitioning scheme~\cite{murty1968,lawler1972}.
To avoid an extra factor of $n$ in the branching factor (and thus in the
delay), we rely on the basic property that no PMC properly contains another. Indeed, if $A\subsetneq B$ were both
PMCs of $G$, any vertex $v\in B{\setminus}A$ would lie in some connected
component $D$ of $G\sminus A$. Because $A$ is a PMC, $D$ cannot be full
for $A$, so some vertex $u\in A$ has no neighbor in $D$. Every
$v,u$-path in $G$ must therefore intersect $A$ before reaching $u$, which
means every $v,u$-path contains an internal vertex in $A\subsetneq B$, which contradicts the pairwise connection property for $B$ (Lemma~\ref{thm:pmc}).

\Needspace{6\baselineskip}
Given a solution $Q\in\PMC_k(G,I,O)$ returned by \texttt{Extend}, let
$Q{\setminus}I=\{x_1,\ldots,x_t\}$, where $t\le k$. Since no other PMC in
$\PMC_k(G,I,O)$ can contain $Q$, every remaining solution must exclude
at least one vertex of $Q{\setminus}I$. Grouping remaining solutions by
the first index $j\in\{1,\ldots,t\}$ such that $x_j$ is excluded
partitions $\PMC_k(G,I,O){\setminus}\{Q\}$ into $t$ disjoint subproblems
defined by the constraint pairs
\[
 I_j=I\cup\{x_1,\ldots,x_{j-1}\},\qquad
 O_j=O\cup\{x_j\}.
\]
We maintain a stack of active subproblems, each paired with a witness
solution already computed by \texttt{Extend}. After one initial call to
\texttt{Extend}$(G,k,I,O)$, each step pops a pair $((I',O'),Q)$ from
the stack, outputs $Q$, calls \texttt{Extend}$(G,k,I'_j,O'_j)$ for each
$j\in\{1,\ldots,|Q{\setminus}I'|\}$, and pushes those subproblems that
return a witness. Because at most $k$ calls to \texttt{Extend} are made
between consecutive outputs and each child increments $|I'\cup O'|$, the
enumeration outputs every member of $\PMC_k(G,I,O)$ without repetition
and with delay $k\cdot 2^{O(k^3)}n^4\log n = 2^{O(k^3)}n^4\log n$. This establishes Theorem~\ref{thm:intro-main}.

\paragraph{Organization.}
The rest of this paper is organized as follows. After introducing
notation and preliminary definitions in Section~\ref{sec:preliminaries},
we present and analyze the constrained search algorithm $\Local$ for
$(q,k)$-unbreakable graphs in Sections~\ref{sec:local-algorithm}--\ref{sec:proof},
and briefly discuss the role of the unbreakability threshold $q$ in
Section~\ref{sec:q}. Section~\ref{sec:general} outlines our two-step
piece-replacement strategy for breakable graphs. We then establish the
torso-compression step for a fixed retained set in
Section~\ref{sec:compression}, show how to compute the retained set via
bounded test graphs in Section~\ref{sec:retained}, and combine these
components into the general recursive reduction in
Section~\ref{sec:general-algorithm}.
\section{Preliminaries}\label{sec:preliminaries}

\paragraph{Graphs, paths, and neighborhoods.}
All graphs considered in this paper are finite, simple, and undirected.
For a graph $G$, we denote its vertex and edge sets by $\nodes(G)$ and
$\edges(G)$, respectively, and write $n=|\nodes(G)|$ when $G$ is clear
from context. We write either $uv$ or $(u,v)$ for an undirected edge
with endpoints $u$ and $v$. For a vertex subset $A\subseteq\nodes(G)$,
$G[A]$ denotes the subgraph of $G$ induced by $A$, with
$\nodes(G[A])=A$ and $\edges(G[A])=\{uv\in\edges(G):u,v\in A\}$, and
$G\sminus A=G[\nodes(G){\setminus}A]$ denotes the subgraph obtained by
deleting $A$.

A \emph{path} is a sequence of distinct vertices in which consecutive
vertices are adjacent; an \emph{$s,t$-path} is a path with endpoints $s$
and $t$, and its \emph{internal vertices} are all vertices on the path
other than $s$ and $t$. A \emph{walk} is a sequence of adjacent vertices
in which vertices may repeat; shortcutting repeated vertices in any
$s,t$-walk yields an $s,t$-path whose vertex set is contained in that of
the walk. A \emph{cycle} is a closed sequence of at least three distinct
vertices in which consecutive vertices (including the last and first)
are adjacent. A \emph{chord} of a cycle is an edge joining two
nonconsecutive vertices of the cycle, and an \emph{induced} (or
\emph{chordless}) cycle is a cycle with no chords.

For a vertex $v\in\nodes(G)$, let
$N_G(v)=\{u\in\nodes(G):uv\in\edges(G)\}$ denote its open neighborhood.
For a subset $A\subseteq\nodes(G)$, its \emph{external neighborhood} is
\[
 N_G(A)=\left(\bigcup_{a\in A}N_G(a)\right){\setminus}A.
\]
A graph is \emph{connected} if it contains a path between every pair of
its vertices, and a nonempty subset $C\subseteq\nodes(G)$ is
\emph{connected} if $G[C]$ is connected. A \emph{connected component}
(or simply \emph{component}) of $G$ is a maximal connected subset of
$\nodes(G)$. For $A\subseteq\nodes(G)$ and $v\in\nodes(G){\setminus}A$,
we denote by $C_v(G\sminus A)$ the connected component of $G\sminus A$
that contains $v$. A component $C$ of $G\sminus A$ is called a
\emph{full component} of $A$ (or \emph{full with respect to $A$}) if
$N_G(C)=A$.

\paragraph{Potential maximal cliques.}
A \emph{clique} is a set of pairwise adjacent vertices, and it is
\emph{maximal} if it is not a proper subset of another clique. A graph
is \emph{chordal} if it contains no induced cycle of length four or
more. A \emph{triangulation} of $G$ is a chordal graph on $\nodes(G)$
whose edge set contains $\edges(G)$, and it is \emph{minimal} if no
proper subset of the added edges produces a chordal graph.
A nonempty subset $\Omega\subseteq\nodes(G)$ is a \emph{potential
maximal clique} (PMC) of $G$ if $\Omega$ is a maximal clique in at least
one minimal triangulation of $G$. We denote the set of all PMCs of $G$
by $\PMC(G)$. Our algorithms rely on the following characterization due
to Bouchitt\'e and Todinca.
\begin{citedLemma}{\cite[Theorem~3.15]{bouchitte2001},\cite[Theorem~2.1]{brosse2024}}\label{thm:pmc}
	A nonempty set $\Omega\subseteq \nodes(G)$ is a PMC of $G$ if and only if:
	\begin{enumerate}
		\item no component of $G\sminus \Omega$ is full with respect to $\Omega$;
		\item for every two nonadjacent vertices $u,v\in\Omega$, some
		connected component $D$ of $G\sminus \Omega$ satisfies $u,v\in N_G(D)$.
	\end{enumerate}
\end{citedLemma}
We refer to condition~2 of Lemma~\ref{thm:pmc} as the \emph{pair
condition}. In particular, the pair condition implies that any two
distinct vertices $u,v\in\Omega$ are connected in $G$ by a $u,v$-path
whose internal vertices lie entirely outside $\Omega$: if $uv\in\edges(G)$,
the edge $uv$ itself is such a path; otherwise, $u$ and $v$ can be
connected through the component $D$ of $G\sminus\Omega$ supplied by the
pair condition.

\paragraph{Inclusion and exclusion constraints.}
Given sets of required vertices $I\subseteq\nodes(G)$ and excluded
vertices $O\subseteq\nodes(G)$, we define
\[
 \PMC(G,I,O)=\{\Omega\in\PMC(G):
 I\subseteq\Omega,\ \Omega\cap O=\emptyset\},
\]
and for an integer $k\ge0$,
\[
 \PMC_k(G,I,O)=\{\Omega\in\PMC(G,I,O):|\Omega|\le k\}.
\]
The constrained search problem $\texttt{Extend}(G,k,I,O)$ takes
$(G,k,I,O)$ as input and asks to return a member
$\Omega\in\PMC_k(G,I,O)$ (which we call a \emph{solution} or
\emph{witness}), or output $\texttt{No}$ if $\PMC_k(G,I,O)=\emptyset$.
Note that $\PMC_k(G,I,O)=\emptyset$ whenever $I\cap O\ne\emptyset$,
$|I|>k$, $k=0$, or $\nodes(G)=\emptyset$.

\paragraph{Minimal separators.}
For distinct vertices $s,t\in\nodes(G)$, a subset
$X\subseteq\nodes(G){\setminus}\{s,t\}$ is an \emph{$s,t$-separator} if
$s$ and $t$ belong to distinct connected components of $G\sminus X$, and
it is a \emph{minimal $s,t$-separator} if no proper subset of $X$
separates $s$ from $t$. A subset $X\subseteq\nodes(G)$ is a
\emph{minimal separator} of $G$ if it is a minimal $s,t$-separator for
some pair $s,t\in\nodes(G)$. We denote by $\minlsep{}{G}$ the set of all
minimal separators of $G$, and by $\minlsep{k}{G}$ those of cardinality
at most $k$.

\begin{lemma}\label{lem:fullComponents}
 An $s,t$-separator $X$ is minimal if and only if
 $C_s(G\sminus X)$ and $C_t(G\sminus X)$ are full components
 with respect to $X$; that is,
 \[
 N_G(C_s(G\sminus X))=X=N_G(C_t(G\sminus X)).
 \]
\end{lemma}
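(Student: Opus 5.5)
We prove the two directions separately, writing $C_s=C_s(G\sminus X)$ and $C_t=C_t(G\sminus X)$. Throughout we use one observation: since $C_s$ is a component of $G\sminus X$, every vertex of $N_G(C_s)$ lies in $X$. Hence $N_G(C_s)\subseteq X$ always, and the same holds for $C_t$. Fullness is therefore exactly the reverse inclusion $X\subseteq N_G(C_s)$.

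\emph{Minimality implies fullness.} Suppose $X$ is a minimal $s,t$-separator and, for contradiction, some $x\in X$ has no neighbor in $C_s$. By minimality, $X{\setminus}\{x\}$ does not separate $s$ from $t$. So there is an $s,t$-path $P$ in $G\sminus(X{\setminus}\{x\})$, and $P$ must pass through $x$, since otherwise $X$ would not separate $s$ from $t$. Let $y$ be the first vertex of $X$ on $P$; this vertex is $x$, as $P$ avoids the rest of $X$. The subpath of $P$ from $s$ to its predecessor avoids $X$, so it lies in $C_s$. That predecessor exists because $s\notin X$, and it is a neighbor of $x$ in $C_s$, a contradiction. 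The argument for $C_t$ is symmetric.

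\emph{Fullness implies minimality.} Suppose $N_G(C_s)=X=N_G(C_t)$, and let $X'\subsetneq X$. Pick $x\in X{\setminus}X'$. Then $x$ has a neighbor $a\in C_s$ and a neighbor $b\in C_t$. Since $C_s$ and $C_t$ are connected and disjoint from $X$, we can concatenate an $s,a$-path inside $C_s$, the edges $ax$ and $xb$, and a $b,t$-path inside $C_t$. This gives an $s,t$-walk whose only vertex in $X$ is $x$, so it avoids $X'$. Shortcutting it yields an $s,t$-path in $G\sminus X'$. Thus no proper subset of $X$ is an $s,t$-separator, and $X$ is minimal.

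The only delicate point is the first-vertex argument in the forward direction, which locates a neighbor of $x$ inside $C_s$. It is routine, so there is no real obstacle here.
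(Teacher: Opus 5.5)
Your proof is correct and follows the same route as the paper. In the forward direction, a vertex $x\in X$ with no neighbor in $C_s$ (or $C_t$) would leave $X\setminus\{x\}$ still separating $s$ from $t$; in the converse, any $x\in X\setminus X'$ reconnects $s$ and $t$ through its neighbors in the two full components, and you simply make explicit the first-vertex-on-the-path step that the paper leaves implicit.
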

\begin{proof}
If $X$ is a minimal $s,t$-separator, every $x\in X$ must have a neighbor
in both $C_s(G\sminus X)$ and $C_t(G\sminus X)$; otherwise,
$X{\setminus}\{x\}$ would still separate $s$ from $t$. Conversely, if
both $C_s(G\sminus X)$ and $C_t(G\sminus X)$ are full with respect to
$X$, then putting any $x\in X$ back into $G\sminus X$ connects the two
components via $x$, so no proper subset of $X$ is an $s,t$-separator.
\end{proof}
By Lemma~\ref{lem:fullComponents}, a vertex set $X$ is a minimal
separator of $G$ if and only if $G\sminus X$ contains at least two
distinct full components with respect to $X$.

\paragraph{Separations and unbreakability.}
A \emph{separation} of $G$ is a triple $(L,Z,U)$ of pairwise disjoint
vertex subsets such that $L\cup Z\cup U=\nodes(G)$ and no edge of $G$
has one endpoint in $L$ and the other in $U$. The sets $L$ and $U$ are
the \emph{sides} of the separation (neither of which is required to be
connected), $Z$ is the \emph{separator}, and $|Z|$ is the \emph{order}
of the separation. For integers $q,k\ge0$, a graph $G$ is
\emph{$(q,k)$-unbreakable} if every separation $(L,Z,U)$ of $G$ of order
$|Z|\le k$ satisfies $|L|\le q$ or $|U|\le q$; otherwise, $G$ is
\emph{$(q,k)$-breakable}.

\paragraph{Attaching graphs.}
For two graphs $H$ and $J$ whose vertex sets intersect in
$Z=\nodes(H)\cap\nodes(J)$, we denote by $H\oplus_Z J$ the union graph
with
\[
\nodes(H\oplus_Z J)=\nodes(H)\cup\nodes(J),
\qquad
\edges(H\oplus_Z J)=\edges(H)\cup\edges(J).
\]
We say that $J$ is \emph{attached to $H$ along the boundary $Z$}, and
refer to $H$ as a \emph{piece} of $H\oplus_Z J$. Note that $H[Z]$ and
$J[Z]$ need not induce the same edges on $Z$; the attachment
$H\oplus_Z J$ inherits every edge on $Z$ present in either graph. In
particular, any separation $(L,Z,U)$ of $G$ decomposes $G$ as
\[
G=G[L\cup Z]\oplus_Z G[U\cup Z].
\]
When $H$ and $J$ carry inclusion and exclusion constraint pairs
$(I_H,O_H)$ and $(I_J,O_J)$, respectively, the combined instance on
$H\oplus_Z J$ uses constraints $(I_H\cup I_J,O_H\cup O_J)$.

Whenever we replace a piece $H$ with a compressed graph $K$, retained
vertices of $H$ keep their original identifiers, whereas newly added
\emph{auxiliary vertices} receive fresh identifiers disjoint from all
previously used vertex names. In our correctness proofs for arbitrary
outside attachments $J$, we may assume without loss of generality that
the vertices of $\nodes(J){\setminus}Z$ are renamed away from any
auxiliary vertices introduced during intermediate recursive steps:
because $J$ meets $H$ and $K$ solely at $Z$, such a bijection on
$\nodes(J){\setminus}Z$ preserves all edges, constraints, and PMCs.

\paragraph{Torsos.}
Given a graph $H$ and a subset $R\subseteq\nodes(H)$, the \emph{torso}
of $H$ on $R$, denoted $\Tor(H,R)$, is the graph on vertex set $R$
obtained from $H[R]$ by turning $N_H(C)$ into a clique for every
connected component $C$ of
$H\sminus R$~\cite[Section~2.3, Definition~2.5]{marx2013}. Equivalently,
two distinct vertices $u,v\in R$ are adjacent in $\Tor(H,R)$ if and only
if $H$ contains a $u,v$-path whose internal vertices all belong to
$\nodes(H){\setminus}R$.

\paragraph{Algorithmic conventions.}
A problem parameterized by $p$ is \emph{fixed-parameter tractable} (FPT)
if it can be solved in time $f(p)(n+1)^c$ for a computable function $f$
and a constant $c$ independent of $p$. An enumeration algorithm lists a
target family without repetition and operates with \emph{fixed-parameter
delay} $f(p)(n+1)^c$ if the time before the first output, between any
two consecutive outputs, and after the final output (or the entire
running time when the output family is empty) is at most $f(p)(n+1)^c$.
We represent graphs by adjacency matrices and vertex subsets by bit
vectors; all logarithms are base two.

\Needspace{9\baselineskip}

\section{Constrained search on unbreakable graphs}\label{sec:local-algorithm}

In this section, we assume that the input graph $G$ is
$(q,k)$-unbreakable for given integers $q,k\ge1$. Given constraint sets
$I,O\subseteq\nodes(G)$, our goal is to find a PMC $\Omega$ of $G$
satisfying
\begin{equation}\label{eq:task}
 1\le |\Omega|\le k,\qquad I\subseteq\Omega,\qquad
 \Omega\cap O=\varnothing,
\end{equation}
or determine that no such PMC exists.

\begin{theorem}\label{thm:main}
Let $q,k$ be strictly positive integers, and let $G$ be a $(q,k)$-unbreakable graph. Let $I,O\subseteq \nodes(G)$.
There is a deterministic algorithm that returns a PMC $\Omega \in \PMC_k(G,I,O)$, or determines that this set is empty, in time
\begin{equation}\label{eq:main-time}
 2^{O(k\log(2+q/k))}(n+1)^3.
\end{equation}
The algorithm is fixed-parameter tractable in $(q,k)$, and in $k$ whenever
$q$ is bounded by a computable function of $k$. In particular,
$q=O(k)$ gives time $2^{O(k)}(n+1)^3$.
\end{theorem}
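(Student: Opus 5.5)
The algorithm splits into a small case and a large case. If $n\le 3q+k$, we brute-force all subsets of size at most $k$. There are $\sum_{i\le k}\binom{n}{i}\le (e(3q+k)/k)^{k}+1=2^{O(k\log(2+q/k))}$ of them. For each candidate $\Omega$ we check the constraints $I\subseteq\Omega$ and $\Omega\cap O=\emptyset$, and then test Lemma~\ref{thm:pmc} in $O((n+1)^3)$ time. The test computes the components of $G\sminus\Omega$ and their neighborhoods in $\Omega$, checks that none is full, and checks the pair condition by marking, for each component $D$, all pairs in $N_G(D)\cap\Omega$. This costs $O(n^2+k^2 n)$ per candidate.

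If $n>3q+k$, we first prove the structural claim sketched in the introduction. Every $\Omega\in\PMC_k(G)$ satisfies $S\subsetneq\Omega\subseteq C\cup S$ for some connected $C$ with $|C|\le q$ and $S=N_G(C)$, $|S|\le k-1$.
\begin{itemize}
\item The components of $G\sminus\Omega$ contain more than $3q$ vertices in total. Group them greedily into two sides, each of size more than $q$; this is possible unless some single component has more than $q$ vertices. Such a grouping would contradict unbreakability with separator $\Omega$, so a component $D$ with $|D|>q$ exists. It is unique, since two such components would form a bad separation.
\item Set $S=N_G(D)$. Since $D$ is not full, $S\subsetneq\Omega$ and $|S|\le k-1$.
\item By the pair condition, any two vertices of $\Omega\setminus S$ are joined by a path avoiding $S$: either they are adjacent, or they are connected through a component $D'\ne D$, because vertices of $\Omega\setminus S$ have no neighbor in $D$. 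Hence $\Omega\setminus S$ lies in one component $C$ of $G\sminus S$.
\item Each $s\in S$ is nonadjacent to, or connected via some component to, each $x\in\Omega\setminus S$. In either case $s$ gets a neighbor in $C$, so $N_G(C)=S$.
\item Finally, $(C,S,\nodes(G)\setminus(C\cup S))$ is a separation of order at most $k-1$ whose other side contains $D$, so $|C|\le q$.
\end{itemize}

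Algorithmically, for each $v\in\nodes(G)$ we enumerate all connected sets $C\ni v$ with $|C|\le q$ and $|N_G(C)|\le k-1$. We use the Fomin--Villanger branching of \cite[Section~3]{fomin2008}: maintain $(C,N)$, pick an unprocessed vertex of $N_G(C)\setminus N$ in a canonical order, and branch on adding it to $C$ or to $N$. Each branch increases $|C|+|N|\le q+k-1$, with at most $k-1$ "$N$"-steps. The number of leaves is therefore $\binom{q+k}{k}=2^{O(k\log(2+q/k))}$, and each node costs $O(n^2)$ to maintain with bit vectors, or $O(n)$ amortized. For each leaf with $S=N_G(C)$, we enumerate nonempty $X\subseteq C$ with $|X|\le k-|S|$, form $\Omega=S\cup X$, and test it as above. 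The number of such $X$ is $\sum_{i\le k}\binom{q}{i}=2^{O(k\log(2+q/k))}$. Correctness of the search follows from the structural claim, since every solution is generated, and from the explicit PMC test, since every output is valid.

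The main obstacle is making the running time come out as $2^{O(k\log(2+q/k))}(n+1)^3$ rather than carrying an extra factor. The $n$ choices of $v$, times $O(n^2)$ per PMC test, give $n^3$ only if the per-leaf work is charged correctly. I would handle this by testing each candidate $\Omega$ in $O(n^2)$ time with the adjacency matrix, and bounding the total count of (leaf, $X$) pairs per start vertex by $\binom{q+k}{k}\binom{q+k}{k}$, which is still $2^{O(k\log(2+q/k))}$ because $\log\binom{q+k}{k}=O(k\log(2+q/k))$.

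The stated consequences follow directly. For $q=O(k)$ the bound is $2^{O(k)}(n+1)^3$, and the algorithm is FPT whenever $q$ is computable from $k$.
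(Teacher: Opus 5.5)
Your proposal is correct and follows essentially the same route as the paper. Both use brute force when $n\le 3q+k$ and the same localization argument: a large component $D$ with $S=N_G(D)\subsetneq\Omega$, and $\Omega\setminus S$ lying in one full component $C$ of $G\sminus S$ with $|C|\le q$. Both then run Fomin--Villanger-style enumeration of the pairs $(C,N_G(C))$ from every root, apply a direct PMC test to each $S\cup X$, and use the same $\binom{q+k}{k}^2\cdot n\cdot\mathrm{poly}(n)$ counting.

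Two small fixes. In the step proving $N_G(C)=S$ you mean ``adjacent'' rather than ``nonadjacent''. Also, the PMC test costs $O(n^2+nk^2)$ rather than $O(n^2)$, but the extra $k^2$ factor is absorbed into $2^{O(k\log(2+q/k))}$.
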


The key structural property underlying Theorem~\ref{thm:main} is that
whenever $n>3q+k$, every PMC $\Omega$ of $G$ of size at most $k$
satisfies $S\subsetneq\Omega\subseteq C\cup S$, where:
\begin{enumerate}
	\item $S\in\minlsep{k-1}{G}$ is a minimal separator of $G$ of size at most $k-1$;
	\item $C$ is a full connected component of $G\sminus S$ (so $N_G(C)=S$); and
	\item $|C|\leq q$.
\end{enumerate}
In other words, removing the minimal separator $S\subsetneq\Omega$
isolates the remaining vertices $\Omega{\setminus}S$ of the PMC inside a
single connected component $C$ of size at most $q$. Once such a pair
$(C,S)$ is identified, all vertices of $S$ must belong to $\Omega$, and
we need only guess the nonempty remainder $X=\Omega{\setminus}S\subseteq C$
of size at most $k-|S|$.

To generate all candidate pairs $(C,S)$, it suffices to enumerate, for
each root vertex $s\in\nodes(G)$, all connected subsets $C$ containing
$s$ such that $|C|\le q$ and $|N_G(C)|\le k-1$. Starting from $\{s\}$,
our branching procedure repeatedly picks the first undecided neighbor of
the current connected set and branches on whether to include it in $C$
(at most $q-1$ times) or exclude it into $N_G(C)$ (at most $k-1$ times).
Although not every enumerated pair $(C,N_G(C))$ has $N_G(C)$ as a
minimal separator, the structural property guarantees that every PMC of
size at most $k$ is generated from at least one enumerated pair. Each
resulting candidate $\Omega=S\cup X$ is then tested directly against
Lemma~\ref{thm:pmc} and~\eqref{eq:task}.

Algorithm~\ref{alg:local} ($\Local$) formalizes this search. The
subroutine $\Regions(G,s,q,\ell)$, presented in
Section~\ref{sec:regions}, outputs all pairs $(C,N_G(C))$ such that
$s\in C$, $G[C]$ is connected, $|C|\le q$, and $|N_G(C)|\le\ell$. The
subroutine $\CheckPMC(G,\Omega,k,I,O)$, described in
Section~\ref{sec:recognition}, tests whether $\Omega$ is a PMC of $G$
satisfying~\eqref{eq:task}.

\begin{algorithm}[H]
\small
\caption{$\Local$: one constrained PMC in a $(q,k)$-unbreakable graph}
\label{alg:local}
\begin{algorithmic}[1]
\Procedure{Local}{$G,k,I,O,q$}
 \State $n\gets |\nodes(G)|$
 \If{$n=0$ or $k\le0$ or $I\cap O\ne\varnothing$ or $|I|>k$}
  \State \Return \texttt{No}
 \EndIf
 \If{$n\le3q+k$}
  \For{each nonempty $\Omega\subseteq \nodes(G)$ with $|\Omega|\le k$}
   \If{$\CheckPMC(G,\Omega,k,I,O)=\texttt{Yes}$}
    \State \Return $\Omega$
   \EndIf
  \EndFor
  \State \Return \texttt{No}
 \EndIf
 \For{each $s\in \nodes(G)$}
  \For{each $(C,S)$ emitted by $\Regions(G,s,q,k-1)$}
   \For{each nonempty $X\subseteq C$ with $|X|\le k-|S|$}
    \State $\Omega\gets S\cup X$
    \If{$\CheckPMC(G,\Omega,k,I,O)=\texttt{Yes}$}
     \State \Return $\Omega$
    \EndIf
   \EndFor
  \EndFor
 \EndFor
 \State \Return \texttt{No}
\EndProcedure
\end{algorithmic}
\end{algorithm}

We establish the correctness of $\Local$ in
Section~\ref{sec:correctness}, present and analyze $\Regions$ in
Section~\ref{sec:regions}, and complete the runtime analysis of
Theorem~\ref{thm:main} in Section~\ref{sec:proof}.

\section{Correctness of \Local}\label{sec:correctness}\label{sec:region}

We first prove the structural localization lemma for $(q,k)$-unbreakable
graphs with $n>3q+k$, and then deduce the correctness of
Algorithm~\ref{alg:local} assuming that $\Regions$ and $\CheckPMC$ meet
their specifications.

\begin{lemma}\label{lem:region}
Let $G$ be $(q,k)$-unbreakable, with $q,k\ge1$ and $n>3q+k$.
For every PMC $\Omega$ of size at most $k$, there is a connected set
$C$ such that, with $S=N_G(C)$,
\begin{equation}\label{eq:region}
 |C|\le q,\qquad |S|\le k-1,\qquad
 S\subsetneq\Omega\subseteq C\cup S.
\end{equation}
\end{lemma}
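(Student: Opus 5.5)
The plan is to locate a unique large component of $G\sminus\Omega$, take $S$ to be its neighborhood, and then use the pair condition of Lemma~\ref{thm:pmc} and unbreakability to build $C$.

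\textbf{Step 1: a large component $D$ exists.} Fix a PMC $\Omega$ with $|\Omega|\le k$. Since $n>3q+k$, the graph $G\sminus\Omega$ has more than $3q$ vertices. Suppose every component of $G\sminus\Omega$ has at most $q$ vertices. Add components one at a time to a set $L$ until $|L|>q$; at that moment $|L|\le 2q$. Let $U$ be the union of the remaining components, so $|U|>3q-2q=q$. Then $(L,\Omega,U)$ is a separation of order at most $k$ with both sides larger than $q$, contradicting $(q,k)$-unbreakability. Hence some component $D$ of $G\sminus\Omega$ has $|D|>q$.

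\textbf{Step 2: $D$ is unique and $S=N_G(D)$ is small.} If there were two such components $D_1,D_2$, the separation $(D_1,\Omega,\nodes(G)\setminus(\Omega\cup D_1))$ would break $G$, so $D$ is unique. Set $S=N_G(D)\subseteq\Omega$. By condition~1 of Lemma~\ref{thm:pmc}, $D$ is not full, so $S\subsetneq\Omega$ and $|S|\le k-1$. Moreover $D$ is connected, avoids $S$, and all its neighbors lie in $S$. Therefore $D$ is itself a connected component of $G\sminus S$.

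\textbf{Step 3: $\Omega\setminus S$ lies in one component $C$ of $G\sminus S$.} Take distinct $u,v\in\Omega\setminus S$. If $uv\in\edges(G)$, they are connected avoiding $S$. Otherwise the pair condition gives a component $D'$ of $G\sminus\Omega$ with $u,v\in N_G(D')$. Since $u\notin S=N_G(D)$, we have $D'\ne D$, so $D'\cap S=\emptyset$. Then $D'\cup\{u,v\}$ is connected in $G\sminus S$. So all of the nonempty set $\Omega\setminus S$ lies in a single component $C$ of $G\sminus S$, which is connected, and $\Omega\subseteq C\cup S$.

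\textbf{Step 4: $N_G(C)=S$.} The inclusion $N_G(C)\subseteq S$ holds because $C$ is a component of $G\sminus S$. For the reverse, take $s\in S$ and any $u\in\Omega\setminus S$. Either $su\in\edges(G)$, or the pair condition yields a component $D'\ne D$ (again because $u\notin N_G(D)$) with $s,u\in N_G(D')$. In the second case $D'\cup\{u\}$ is connected and avoids $S$, so $D'\subseteq C$ and $s\in N_G(C)$.

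\textbf{Step 5: $|C|\le q$.} The component $C$ contains $u\notin D\cup N_G(D)$, so $C\ne D$ and the two are disjoint components of $G\sminus S$. Consider the separation $(C,S,\nodes(G)\setminus(C\cup S))$. Its order is $|S|\le k-1$ and its second side contains $D$, which has more than $q$ vertices. By unbreakability, $|C|\le q$.

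\textbf{Main obstacle.} The delicate step is Step 1, which needs the greedy grouping so that both sides of the separation exceed $q$. Also delicate are Steps 3 and 4, which need the observation that the witnessing component in the pair condition can never be $D$. The rest follows routinely from Lemma~\ref{thm:pmc}.
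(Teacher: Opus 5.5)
Your proof is correct and follows the paper's argument essentially step for step. The shared steps are: the same greedy grouping yields a component $D$ of $G-\Omega$ with $|D|>q$; the no-full-component condition gives $S=N_G(D)\subsetneq\Omega$; the pair condition places $\Omega\setminus S$ in a single component $C$ of $G-S$ with $N_G(C)=S$; and unbreakability applied to the separation $(C,S,\nodes(G)\setminus(C\cup S))$ gives $|C|\le q$. The remaining differences are cosmetic: you argue via witnessing components $D'$ rather than paths, and neither the uniqueness of $D$ nor the observation $D'\neq D$ is actually needed, since every component of $G-\Omega$ already avoids $S\subseteq\Omega$.
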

\begin{proof}
Since $|\Omega|\leq k$ and $n>3q+k$, the graph $G\sminus\Omega$ contains
more than $3q$ vertices. We claim that $G\sminus\Omega$ has a connected
component $D$ of size greater than $q$. Suppose, by way of
contradiction, that every component of $G\sminus\Omega$ has at most $q$
vertices. Accumulating components of $G\sminus\Omega$ one by one until
their union first exceeds $q$ yields a union of components of size at
most $2q$, and the remaining components of $G\sminus\Omega$ form a
disjoint union of size greater than $3q-2q=q$. Since $\Omega$ separates
these two unions and $|\Omega|\le k$, this contradicts the assumption
that $G$ is $(q,k)$-unbreakable.

Let $S=N_G(D)$. Because $D$ is a connected component of
$G\sminus\Omega$, we have $S\subseteq\Omega$, and by
Lemma~\ref{thm:pmc}(1), $D$ cannot be full with respect to $\Omega$.
Hence $S\subsetneq\Omega$, so $\Omega{\setminus}S\ne\emptyset$. Pick any
vertex $s\in\Omega{\setminus}S$ and let $C\eqdef C_s(G\sminus S)$ be the
connected component of $G\sminus S$ containing $s$.

We first show that $\Omega{\setminus}S\subseteq C$. Take any
$u\in\Omega{\setminus}(S\cup\set{s})$. By the pair condition of
Lemma~\ref{thm:pmc}, $G$ contains an $s,u$-path whose internal vertices
lie outside $\Omega$. Neither the endpoints $s,u$ nor the internal
vertices of this path belong to $S\subsetneq\Omega$, so the entire
$s,u$-path resides in $G\sminus S$. Consequently, $u\in C_s(G\sminus S)=C$,
and thus $\Omega{\setminus}S\subseteq C$, which together with
$S\subsetneq\Omega$ yields $S\subsetneq\Omega\subseteq C\cup S$.

We next show that $N_G(C)=S$. Since $C$ is a connected component of
$G\sminus S$, we have $N_G(C)\subseteq S$. For the reverse inclusion,
take any $u\in S\subsetneq\Omega$. By the pair condition of
Lemma~\ref{thm:pmc}, there is an $s,u$-path in $G$ whose internal
vertices avoid $\Omega$ and hence avoid $S$. Removing $u$ from this path
leaves a connected path in $G\sminus S$ starting at $s$, so the
predecessor of $u$ along the path belongs to $C_s(G\sminus S)=C$. Thus
$u\in N_G(C)$, which establishes $S\subseteq N_G(C)$ and hence
$N_G(C)=S$.

Finally, observe that $D$ is also a connected component of $G\sminus S$
with $N_G(D)=S$. Because $s\in C\cap\Omega$ and $D\cap\Omega=\emptyset$,
the components $C$ and $D$ are disjoint, so
$D\subseteq\nodes(G){\setminus}(C\cup S)$. In the separation
\[
 \bigl(C,S,\nodes(G){\setminus}(C\cup S)\bigr),
\]
the separator $S\subsetneq\Omega$ has size at most $|\Omega|-1\le k-1$,
and the side $\nodes(G){\setminus}(C\cup S)$ contains $D$ and thus has
more than $q$ vertices. By the $(q,k)$-unbreakability of $G$, we conclude that $|C|\le q$. This completes the proof.
\end{proof}

As a simple illustration of Lemma~\ref{lem:region}, consider a graph
obtained by attaching a degree-one vertex $s$ to a vertex $x$ of a
clique on at least six vertices. This graph is $(1,2)$-unbreakable, as
deleting at most two vertices leaves the remaining clique connected and
can isolate at most the leaf $s$. For the PMC $\Omega=\{s,x\}$,
Lemma~\ref{lem:region} identifies the connected region $C=\{s\}$ of size
$1$ and its external neighborhood $S=N_G(C)=\{x\}$, regardless of how
large the clique is.

\begin{lemma}\label{lem:local-correctness}
Assuming that $\Regions$ satisfies Lemma~\ref{lem:regions}, for every
$(q,k)$-unbreakable graph $G$ with $q,k\ge1$ and sets
$I,O\subseteq \nodes(G)$, Algorithm~\ref{alg:local} terminates and
returns a member of $\PMC_k(G,I,O)$ if this family is nonempty, and
returns \texttt{No} otherwise.
\end{lemma}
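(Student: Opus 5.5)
The plan is to prove soundness and completeness separately. Termination is immediate: every loop ranges over a finite family, and $\Regions$ is assumed to terminate with a finite output as guaranteed by Lemma~\ref{lem:regions}. I will also take as given that $\CheckPMC(G,\Omega,k,I,O)$ returns \texttt{Yes} exactly when $\Omega\in\PMC_k(G,I,O)$, i.e., it is a faithful test of Lemma~\ref{thm:pmc} together with~\eqref{eq:task}.

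\emph{Soundness.} Every set $\Omega$ returned by the algorithm has just passed $\CheckPMC$, so it is nonempty, has size at most $k$, contains $I$, avoids $O$, and is a PMC. Hence any returned set lies in $\PMC_k(G,I,O)$. The early \texttt{No} in the first test is justified by the remark in Section~\ref{sec:preliminaries}: $\PMC_k(G,I,O)=\emptyset$ whenever $n=0$, $k\le 0$, $I\cap O\neq\emptyset$, or $|I|>k$.

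\emph{Completeness.} Suppose $\Omega\in\PMC_k(G,I,O)$ exists, and split into two cases.

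If $n\le 3q+k$, the algorithm tests every nonempty subset of size at most $k$, including $\Omega$. It therefore returns some set before reaching the final \texttt{No}, and by soundness that set is a valid member.

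If $n>3q+k$, apply Lemma~\ref{lem:region} to $\Omega$. This gives a connected $C$ with $|C|\le q$, $S=N_G(C)$, $|S|\le k-1$, and $S\subsetneq\Omega\subseteq C\cup S$. Pick any $s\in C$ (for instance $s\in\Omega\setminus S$). Since $C$ is connected, contains $s$, has $|C|\le q$ and $|N_G(C)|\le k-1$, Lemma~\ref{lem:regions} guarantees that $\Regions(G,s,q,k-1)$ emits the pair $(C,S)$ on iteration $s$. Set $X=\Omega\setminus S$. Then $X$ is nonempty because $S\subsetneq\Omega$, $X\subseteq C$, and $|X|=|\Omega|-|S|\le k-|S|$, so $X$ is enumerated in the innermost loop. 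The candidate $S\cup X=\Omega$ passes $\CheckPMC$ unless the algorithm has already returned earlier. Either way, something is returned, and it is valid by soundness. Consequently \texttt{No} is returned only when $\PMC_k(G,I,O)=\emptyset$.

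The only delicate point is matching the hypotheses of Lemma~\ref{lem:region} with the output guarantee of $\Regions$. That lemma requires $n>3q+k$ and $q,k\ge1$, which hold in this branch. The exact pair $(C,N_G(C))$ must be emitted for the chosen root $s$, which holds because $s\in C$. Everything else is bookkeeping.
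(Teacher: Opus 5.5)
Your proposal is correct and follows essentially the same route as the paper's proof. Both argue soundness because every returned set passes $\CheckPMC$, use exhaustive search when $n\le 3q+k$, and otherwise combine Lemma~\ref{lem:region} with the output guarantee of Lemma~\ref{lem:regions} so that the inner loop examines $S\cup X=\Omega$ for any root $s\in C$.
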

\begin{proof}
If $n=0$, $k\le0$, $I\cap O\ne\varnothing$, or $|I|>k$, then
$\PMC_k(G,I,O)=\emptyset$ and Algorithm~\ref{alg:local} correctly returns
\texttt{No} in line~4. In addition, every set $\Omega$ returned by the
algorithm passes $\CheckPMC(G,\Omega,k,I,O)$, which verifies both
Lemma~\ref{thm:pmc} and~\eqref{eq:task}; hence any returned set belongs
to $\PMC_k(G,I,O)$.

Conversely, suppose there exists $\Omega\in\PMC_k(G,I,O)$. If
$n\le3q+k$, the exhaustive loop in lines~7--11 inspects all nonempty
subsets of $\nodes(G)$ of size at most $k$ and therefore finds a valid
solution. If $n>3q+k$, Lemma~\ref{lem:region} guarantees a connected set
$C$ with $|C|\le q$ and $S=N_G(C)$ such that $|S|\le k-1$ and
$S\subsetneq\Omega\subseteq C\cup S$. For any root $s\in C$, the call
$\Regions(G,s,q,k-1)$ outputs the pair $(C,S)$. Since
$X\eqdef\Omega{\setminus}S$ is a nonempty subset of $C$ of size at most
$k-|S|$, the inner loop in lines~16--21 examines $S\cup X=\Omega$. Thus
Algorithm~\ref{alg:local} is guaranteed to return a member of
$\PMC_k(G,I,O)$ whenever one exists. Termination follows immediately
from the finiteness of the subset loops and the termination of
$\Regions$ and $\CheckPMC$.
\end{proof}

\section{Enumerating the connected regions}\label{sec:regions}

Given a root vertex $s\in\nodes(G)$, a fixed ordering of $\nodes(G)$,
and integers $q\ge1$ and $\ell\ge0$, the procedure $\Regions(G,s,q,\ell)$
(Algorithm~\ref{alg:regions}) lists all pairs $(C,N_G(C))$ such that
$s\in C$, $G[C]$ is connected, $|C|\le q$, and $|N_G(C)|\le\ell$. Note
that Algorithm~\ref{alg:regions} does not require $G$ to be unbreakable.

Each recursive call $\texttt{Visit}(P,D)$ maintains a connected subset
$P\subseteq\nodes(G)$ containing $s$ and a subset $D\subseteq N_G(P)$ of
neighbors permanently excluded from the region being grown. Initially,
$P=\{s\}$ and $D=\varnothing$. Whenever $N_G(P){\setminus}D$ is nonempty
and neither size budget is exhausted ($|P|<q$ and $|D|<\ell$), the
procedure selects the first undecided neighbor $v\in N_G(P){\setminus}D$
and branches on either adding $v$ to $P$ (which preserves the
connectivity of $P$) or adding $v$ to $D$.

If either budget is reached, no further branching is necessary: when
$|P|=q$, no additional vertices can be added to $P$, so $P$ itself is the
only possible completion (valid if and only if $|N_G(P)|\le\ell$); when
$|D|=\ell$, no further neighbors can be excluded, so any valid completion
must absorb all remaining reachable vertices of $G\sminus D$ and thus
equal the connected component $C_s(G\sminus D)$ (valid if and only if
$|C_s(G\sminus D)|\le q$). Handling these boundary cases directly allows
us to bound the size of the entire recursion tree by a function of $q$
and $\ell$.

\begin{algorithm}[H]
\small
\caption{$\Regions$: connected sets with size at most $q$ and neighborhood size at most $\ell$.}
\label{alg:regions}
\begin{algorithmic}[1]
\Procedure{Regions}{$G,s,q,\ell$}
 \State Fix an order of $\nodes(G)$
 \State \Call{Visit}{$\set{s},\varnothing$}
\EndProcedure
\Statex
\Procedure{Visit}{$P,D$}
 \State \algorithmicif\ $N_G(P){\setminus} D=\varnothing$ \algorithmicthen\ Output $(P,D)$; \Return
 \If{$|P|=q$}
  \State \algorithmicif\ $|N_G(P)|\le\ell$ \algorithmicthen\ Output $(P,N_G(P))$
  \State \Return
 \EndIf
 \If{$|D|=\ell$}
  \State \algorithmicif\ $|C_s(G\sminus D)|\le q$ \algorithmicthen\ Output $(C_s(G\sminus D),D)$
  \State \Return
 \EndIf
 \State $v\gets$ the first vertex of $N_G(P){\setminus} D$
 \State \Call{Visit}{$P\cup\set{v},D$}
 \State \Call{Visit}{$P,D\cup\set{v}$}
\EndProcedure
\end{algorithmic}
\end{algorithm}

\begin{lemma}\label{lem:regions}
For a fixed root $s$, Algorithm~\ref{alg:regions} emits exactly the
pairs $(C,N_G(C))$ with $s\in C$, $C$ connected, $|C|\le q$, and
$|N_G(C)|\le\ell$, each once. It produces at most
\begin{equation}\label{eq:region-count}
 B(q,\ell)=\binom{q+\ell-1}{\ell}
\end{equation}
outputs and makes at most $2B(q,\ell)-1$ recursive calls, running in
total time $O(B(q,\ell)(n+1)^2)$.
\end{lemma}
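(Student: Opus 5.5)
We prove correctness and the counting bound together by induction on the recursion tree, using the invariant that every call $\texttt{Visit}(P,D)$ satisfies
\[
s\in P,\qquad P \text{ connected},\qquad D\subseteq N_G(P),\qquad |P|\le q,\qquad |D|\le\ell .
\]
The invariant holds initially. It is preserved by both branches: $v\in N_G(P)$ keeps $P\cup\{v\}$ connected, and $N_G(P)\subseteq N_G(P\cup\{v\})\cup\{v\}$ with $v\notin D$ keeps $D\subseteq N_G(P\cup\{v\})$.

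\emph{Key observation.} Consider any valid pair with region $C$. It is \emph{compatible} with the call $(P,D)$ if $P\subseteq C$ and $D\subseteq N_G(C)$. We show by induction that the compatible valid regions are output exactly once within the subtree of $(P,D)$.

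\emph{Branching.} At a branching call, $v\in N_G(P)$. A compatible $C$ either contains $v$ or does not; in the latter case $v\in N_G(C)$, because $v$ is adjacent to $P\subseteq C$. So exactly one child is compatible with $C$.

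\emph{Leaves.} There are three terminal cases.
\begin{itemize}
\item If $N_G(P)\setminus D=\emptyset$, then $N_G(P)=D$. So $P$ is a component of $G\setminus D$ and the only compatible region is $C=P$: any compatible $C\supsetneq P$ is connected, so it would contain a vertex of $N_G(P)=D\subseteq N_G(C)$, which is impossible. Moreover, $(P,D)=(P,N_G(P))$ is valid.
\item If $|P|=q$, then a compatible $C$ with $|C|\le q$ must equal $P$, so the output test is exact.
\item If $|D|=\ell$, a compatible $C$ has $N_G(C)\supseteq D$ and $|N_G(C)|\le\ell$, so $N_G(C)=D$. Then $C$ is a component of $G\setminus D$ containing $s$, namely $C_s(G\setminus D)$. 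Conversely, $C'=C_s(G\setminus D)$ satisfies $N_G(C')\subseteq D$, and equality holds because $P\subseteq C'$ and $D\subseteq N_G(P)$. So the emitted pair is correct and valid exactly when $|C'|\le q$.
\end{itemize}
Since every valid pair is compatible with the root $(\{s\},\emptyset)$, the two cases together give exactness and no repetition.

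\emph{Counting.} The recursion tree is a full binary tree, so it suffices to bound its leaves by $B(q,\ell)$; the number of calls is then at most $2B(q,\ell)-1$. Each child increases either $|P|$ or $|D|$ by one. A branching node has $|P|<q$ and $|D|<\ell$, so every root-to-leaf path is a sequence of ``include'' and ``exclude'' steps with at most $q-1$ includes and at most $\ell$ excludes. I would prove by induction on $(a,b)$ that a tree rooted at a node with remaining budgets $a=q-|P|$ and $b=\ell-|D|$ has at most
\[
L(a,b)=\binom{a+b-1}{b}
\]
leaves. The base cases are $a=1$ or $b=0$, where the node is a leaf and $L=1$. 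The inductive step is Pascal's rule,
\[
L(a-1,b)+L(a,b-1)=\binom{a+b-2}{b}+\binom{a+b-2}{b-1}=\binom{a+b-1}{b}.
\]
At the root $a=q$ and $b=\ell$, giving $B(q,\ell)$. The only delicate point is that $a=1$ and $b=0$ are both leaves: if $a=1$ then $|P|=q$, and if $b=0$ then $|D|=\ell$, so the algorithm stops. This is exactly what makes the budget checks in the pseudocode essential. Every output occurs at a leaf, so there are at most $B(q,\ell)$ outputs.

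\emph{Time.} With adjacency matrices and bit vectors, each call computes $N_G(P)$, $N_G(P)\setminus D$, the first undecided vertex, or $C_s(G\setminus D)$ by BFS in $O((n+1)^2)$ time. This gives total time $O(B(q,\ell)(n+1)^2)$.

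I expect the main obstacle to be the $|D|=\ell$ leaf. There one must verify both that the emitted second coordinate $D$ really equals $N_G(C_s(G\setminus D))$ and that distinct leaves never emit the same $C$. Both follow from the compatibility partition, since each valid $C$ is compatible with exactly one leaf.
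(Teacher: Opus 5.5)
Your proof follows the paper's. You use the same invariants; your compatibility partition is the paper's trace of the unique root-to-leaf path determined by $C_*$, which also gives the no-repetition claim. The three leaf cases and the Pascal-rule induction on the remaining budgets are likewise the same.

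There is one off-by-one slip in the counting that you should fix. With $a=q-|P|$ as you define it, the root has $a=q-1$, not $q$. A leaf with $|P|=q$ has $a=0$, where $\binom{a+b-1}{b}=0$ for $b\ge1$, so the inductive claim as stated is false. Define instead $a=q-|P|+1$; this matches your base case ``$a=1$ iff $|P|=q$'' and your root value $a=q$. Equivalently, keep $a=q-|P|$ and use the paper's bound $\binom{a+b}{b}$ with base cases $a=0$ or $b=0$. Either way the rest of your argument then goes through unchanged.
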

\begin{proof}
By a straightforward induction on the recursion depth, every call
$\texttt{Visit}(P,D)$ maintains the invariants that $s\in P$, $G[P]$ is
connected, $P\cap D=\varnothing$, $D\subseteq N_G(P)$, $|P|\le q$, and
$|D|\le\ell$.

We first verify that every pair output by $\texttt{Visit}(P,D)$ is
valid. If line~6 triggers ($N_G(P){\setminus}D=\varnothing$), the
invariant $D\subseteq N_G(P)$ implies $N_G(P)=D$, so $(P,D)=(P,N_G(P))$
satisfies $|P|\le q$ and $|N_G(P)|\le\ell$. If line~8 triggers
($|P|=q$), the explicit check $|N_G(P)|\le\ell$ ensures that
$(P,N_G(P))$ is valid. If line~12 triggers ($|D|=\ell$), then because
$s\in P$, $G[P]$ is connected, and $P\cap D=\varnothing$, we have
$P\subseteq C_s(G\sminus D)$. Since $D\subseteq N_G(P)$, every vertex of
$D$ has a neighbor in $P\subseteq C_s(G\sminus D)$, which combined with
$N_G(C_s(G\sminus D))\subseteq D$ gives $N_G(C_s(G\sminus D))=D$; the
check $|C_s(G\sminus D)|\le q$ then confirms that $(C_s(G\sminus D),D)$
is valid.

Conversely, let $C_*$ be any connected subset of $\nodes(G)$ containing
$s$ with $|C_*|\le q$ and $|N_G(C_*)|\le\ell$. We trace the unique
root-to-leaf path in the recursion tree that follows line~16 when
$v\in C_*$ and line~17 when $v\notin C_*$. Along this path, we
inductively maintain $P\subseteq C_*$ and $D\subseteq N_G(C_*)$, because
every branched vertex $v\in N_G(P){\setminus}D$ either belongs to $C_*$
or has a neighbor in $P\subseteq C_*$ and thus belongs to $N_G(C_*)$.
When this path reaches a leaf, one of the three stopping conditions
holds:
(i)~if $N_G(P){\setminus}D=\varnothing$, then $N_G(P)=D\subseteq N_G(C_*)$
is disjoint from $C_*$, so the connectivity of $G[C_*]$ forces $P=C_*$;
(ii)~if $|P|=q$, then $P\subseteq C_*$ and $|C_*|\le q$ force $P=C_*$;
(iii)~if $|D|=\ell$, then $D\subseteq N_G(C_*)$ and $|N_G(C_*)|\le\ell$
force $D=N_G(C_*)$, and since $G[C_*]$ is a connected subgraph of
$G\sminus D$ whose external neighborhood is $D$, $C_*$ must be the entire
component $C_s(G\sminus D)$. In all three cases, the leaf outputs
$(C_*,N_G(C_*))$. Because any two distinct root-to-leaf paths disagree on
the membership in $C$ of their lowest common ancestor's branched vertex
$v$, no pair $(C_*,N_G(C_*))$ is emitted more than once.

To bound the number of leaves and recursive calls, let $a=q-|P|$ and
$b=\ell-|D|$ denote the remaining budgets at a call
$\texttt{Visit}(P,D)$. If $a=0$ or $b=0$, the call is a leaf. If $a,b\ge1$,
the call either terminates immediately at line~6 or makes two recursive
calls with remaining budgets $(a-1,b)$ and $(a,b-1)$. By induction on
$a+b$ using Pascal's identity
\[
 \binom{a+b-1}{b}+\binom{a+b-1}{b-1}=\binom{a+b}{b},
\]
a subtree rooted at a state with budgets $(a,b)$ has at most
$\binom{a+b}{b}$ leaves. At the initial call $\texttt{Visit}(\{s\},\varnothing)$,
we have $a=q-1$ and $b=\ell$, which yields at most $B(q,\ell)$ leaves
as stated in~\eqref{eq:region-count} and at most $2B(q,\ell)-1$ calls in
total. Since each call performs $O((n+1)^2)$ work to update neighborhoods
or compute $C_s(G\sminus D)$ via the adjacency matrix, the total running
time is $O(B(q,\ell)(n+1)^2)$.
\end{proof}

A similar bound on the number of connected subsets with small
neighborhoods was given by Fomin and
Villanger~\cite[Section~3]{fomin2008}; Lemma~\ref{lem:regions} also
covers the boundary case $\ell=0$, in which the only candidate is the
connected component $C_s(G)$ itself (output if and only if
$|C_s(G)|\le q$). Note that in Algorithm~\ref{alg:local}, pairs $(C,S)$
can be consumed on the fly during the depth-first traversal of
$\Regions$ without storing the full list in memory, so the recursion uses
$O((n+1)^2)$ space.

\section{Total running time}\label{sec:proof}

We now detail the implementation of $\CheckPMC$ and combine its cost
with Lemma~\ref{lem:regions} to complete the proof of
Theorem~\ref{thm:main}.

\subsection{Checking a candidate in the whole graph}\label{sec:recognition}

Given $(G,\Omega,k,I,O)$, the subroutine $\CheckPMC$ first verifies that
$1\le|\Omega|\le k$, $I\subseteq\Omega$, and $\Omega\cap O=\varnothing$.
It then computes the connected components of $G\sminus\Omega$ and their
external neighborhoods in $\Omega$. By Lemma~\ref{thm:pmc}, if any
component $D$ of $G\sminus\Omega$ has $N_G(D)=\Omega$, $\CheckPMC$
returns $\texttt{No}$. Otherwise, it initializes a $|\Omega|\times|\Omega|$
boolean table marking all pairs $u,v\in\Omega$ that are adjacent in $G$,
and for each component $D$ of $G\sminus\Omega$, marks all pairs
$u,v\in N_G(D)$. It returns $\texttt{Yes}$ if and only if every pair of
distinct vertices in $\Omega$ is marked.

Using an adjacency matrix, finding the connected components of
$G\sminus\Omega$ and their neighborhoods in $\Omega$ takes $O(n^2)$
time. Because $|\Omega|\le k$ and $G\sminus\Omega$ has at most $n$
components, updating the pairwise table across all components takes
$O(nk^2)$ time. Thus, a single call to $\CheckPMC$ runs in time
\begin{equation}\label{eq:recognition-time}
 O\bigl(n^2+nk^2\bigr)
 \ \subseteq\ O\bigl(k^2n^2\bigr).
\end{equation}
Notice that $\CheckPMC$ always evaluates components and neighborhoods in
the full graph $G\sminus\Omega$, and tests the constraints $I,O$ on the
entire candidate $\Omega$.

\subsection{Time for the complete search}

\begin{proof}[Running-time bound in Theorem~\ref{thm:main}]
After the initial trivial checks in Algorithm~\ref{alg:local}, we have
$n,k\ge1$, and each candidate check via $\CheckPMC$ takes $O(k^2n^2)$
time by~\eqref{eq:recognition-time}. Set
\[
 A=\sum_{i=1}^{k}\binom{3q+k}{i},\qquad
 B=\binom{q+k-2}{k-1},\qquad
 H=\sum_{i=1}^{\min\{q,k\}}\binom{q}{i}.
\]
When $n\leq 3q+k$, Algorithm~\ref{alg:local} tests at most $A$ subsets
in lines~7--11 in $O(k^2An^2)$ total time.

When $n>3q+k$, Algorithm~\ref{alg:local} invokes
$\Regions(G,s,q,k-1)$ for each of the $n$ roots $s\in\nodes(G)$. By
Lemma~\ref{lem:regions}, each call to $\Regions$ runs in $O(Bn^2)$ time
and emits at most $B$ pairs $(C,S)$. For each emitted pair $(C,S)$,
since $|C|\le q$ and $k-|S|\le k$, the inner loop tests at most $H$
nonempty subsets $X\subseteq C$. Across all $n$ roots, the algorithm
performs at most $nBH$ calls to $\CheckPMC$ (costing $O(k^2BHn^3)$ time)
and spends $O(Bn^3)$ time inside $\Regions$. Since $k,H\ge1$, the
overall running time of Algorithm~\ref{alg:local} is bounded by
\begin{equation}\label{eq:explicit-time}
 \begin{cases}
  O(k^2An^2), & n\le3q+k,\\
  O(k^2BHn^3), & n>3q+k.
 \end{cases}
\end{equation}

Applying the standard binomial sum inequality
$\sum_{i=0}^{\min\{u,k\}}\binom{u}{i}\le\binom{u+k}{k}$
(which follows from Vandermonde's identity~\cite[Eq.~(1.10), p.~3]{johnson2005}),
we have
\[
 A\le\binom{3q+2k}{k},\qquad
 H\le\binom{q+k}{k},\qquad
 B\le\binom{q+k}{k}.
\]
Substituting these bounds into~\eqref{eq:explicit-time} shows that in
both regimes the running time is at most
\begin{equation}\label{eq:binomial-time}
 O\!\left(k^2\left[
 \binom{3q+2k}{k}n^2+\binom{q+k}{k}^{\!2}n^3
 \right]\right).
\end{equation}
Since $\binom{u}{k}\le(eu/k)^k$ for $u\ge k\ge1$~\cite[Proposition~1.4]{jukna2011},
expression~\eqref{eq:binomial-time} simplifies to
$2^{O(k\log(2+q/k))}n^3$. Combining this bound with
Lemmas~\ref{lem:local-correctness} and~\ref{lem:regions} completes the
proof of Theorem~\ref{thm:main}.
\end{proof}

\section{Choosing \texorpdfstring{$q$}{q}}\label{sec:q}

While Algorithm $\Local$ is valid for any integer $q\ge1$ such that $G$
is $(q,k)$-unbreakable, its running time in Theorem~\ref{thm:main} grows
with $q$. Because $(q,k)$-unbreakability implies $(q',k)$-unbreakability
for all $q'\ge q$ but not for $q'<q$, Theorem~\ref{thm:main} is most
effective when invoked with the smallest threshold $q$ for which $G$ can
be certified $(q,k)$-unbreakable.

Indeed, passing a threshold $q$ smaller than the true unbreakability of
$G$ can cause $\Local$ to miss valid solutions. To illustrate why the
unbreakability promise cannot in general be lowered, consider a graph
$G$ formed by the disjoint union of two cliques of size $Q\ge2$.
Because $|\nodes(G)|=2Q$, no separation can leave more than $Q$ vertices
on both sides, so $G$ is $(Q,k)$-unbreakable. However, the empty
separator $Z=\emptyset$ partitions $G$ into two sides of size $Q>Q-1$,
so $G$ is $(Q-1,k)$-breakable. In particular, whenever $G$ is
$(q,k)$-unbreakable for some $q=O(k)$, Theorem~\ref{thm:main} yields a
running time of $2^{O(k)}n^3$.

\section{When the graph is \texorpdfstring{$(q,k)$}{(q,k)}-breakable}\label{sec:general}

When $G$ is $(q,k)$-breakable, it admits a separation $(L,Z,U)$ with
$|Z|\le k$, $|L|>q$, and $|U|>q$, decomposing $G$ as
$G=G[L\cup Z]\oplus_Z G[U\cup Z]$. Following the recursive understanding
paradigm of Chitnis et al.~\cite{chitnis2016} and its adaptation to
unbreakable reductions by Lokshtanov et al.~\cite{lokshtanov2018}, we
wish to replace one piece—say $H=G[L\cup Z]$—with a smaller graph $K$
whose size is bounded solely in terms of $k$, attached to
$J=G[U\cup Z]$ along the same boundary $Z$. If replacing $H$ by $K$
preserves the existence of a constrained PMC of size at most $k$ and
$|\nodes(K)|<|\nodes(H)|$, then $K\oplus_Z G[U\cup Z]$ is an equivalent
instance on strictly fewer vertices.

The main technical challenge in designing such a replacement is that a
PMC $\Omega\in\PMC_k(G,I,O)$ can straddle both sides of the separation
($\Omega\cap L\ne\emptyset$ and $\Omega\cap U\ne\emptyset$), and its
validity depends on connected components of $G\sminus\Omega$ that may
cross $Z$. Rather than tailoring $K$ to the specific outside subgraph
$G[U\cup Z]$, we construct a universal bounded replacement $(K,I_K,O_K)$
for $(H,I_H,O_H)$ and boundary $Z$: for \emph{every} outside graph $J$
satisfying $\nodes(H)\cap\nodes(J)=\nodes(K)\cap\nodes(J)=Z$ and every
choice of constraints $I_J,O_J\subseteq\nodes(J)$, the combined graph
$H\oplus_Z J$ admits a constrained PMC of size at most $k$ if and only
if $K\oplus_Z J$ does.

We break the construction of $(K,I_K,O_K)$ into two steps:
\begin{enumerate}
\item \textbf{Compressing $H$ for a fixed retained set $R$ (Section~\ref{sec:compression}):}
Suppose we have already selected a subset $R\subseteq\nodes(H)$ of
retained vertices such that $Z\cup I_H\subseteq R$. We construct a
graph $K$ on $R$ plus at most $|R|$ fresh auxiliary vertices (all placed
in $O_K$) such that $|\nodes(K)|\le 2|R|$ and, for every outside graph
$J$ attached along $Z$ and every candidate
$\Omega\subseteq R\cup\nodes(J)$, $\Omega$ is a constrained PMC of size
at most $k$ in $H\oplus_Z J$ if and only if $\Omega$ is a constrained
PMC of size at most $k$ in $K\oplus_Z J$.

\item \textbf{Selecting a small retained set $R$ via test graphs (Section~\ref{sec:retained}):}
Compression with an arbitrary set $R$ preserves only those solutions
$\Omega$ whose intersection with $\nodes(H)$ happens to lie in $R$. When
$H$ is $(q,k)$-unbreakable, we show how to compute a set $R$ of size
$2^{O(k^2)}$ that is guaranteed to contain $\Omega'\cap\nodes(H)$ for at
least one solution $\Omega'$ whenever $H\oplus_Z J$ admits a solution.
To do so without inspecting $J$, we enumerate a finite family of test
graphs $T$, each consisting of $Z$ and at most $k+1$ fresh vertices,
together with all subsets $I_T\subseteq\nodes(T)$ of size at most $k$
and $O_T=\nodes(T){\setminus}I_T$. Because each $H\oplus_Z T$ is
$(q+k+1,k)$-unbreakable, we can invoke $\Local$ on every test instance
and define $R$ as the union of $Z\cup I_H$ with the intersections
$\Omega\cap\nodes(H)$ of all returned witnesses.
\end{enumerate}

Finally, in Section~\ref{sec:general-algorithm}, we combine these two
steps with the cut-finding algorithm of Chitnis et
al.~\cite[Section~3.5]{chitnis2016} (in the form of Golovach et
al.~\cite[Lemma~2, p.~18:3]{golovach2018}) to handle pieces that are themselves breakable and establish Theorem~\ref{thm:general-extend}.
\section{Compressing a piece after fixing the retained vertices}\label{sec:compression}

We now formalize the first step of our replacement scheme. Given an
arbitrary graph $H$ (which need not be unbreakable), a boundary
$Z\subseteq\nodes(H)$, and a set $R\subseteq\nodes(H)$ of retained
vertices with $Z\subseteq R$, we construct a compressed graph $K$ such
that $\nodes(K)\cap\nodes(H)=R$, $|\nodes(K)|\le 2|R|$, and all vertices
in $\nodes(K){\setminus}R$ are fresh auxiliary vertices excluded from any
solution. Theorem~\ref{thm:compression} shows that $K$ preserves PMC
membership for every candidate $\Omega\subseteq R\cup\nodes(J)$ under
any outside attachment $J$ along $Z$, and
Corollary~\ref{cor:compression-constraints} records the resulting
equivalence for constrained PMC search.

\def\mainCompressionThm#1{
	Given a graph $H$ and sets $Z\subseteq R\subseteq\nodes(H)$,
	we can construct a graph $K$ such that
	\[
	\nodes(K)\cap\nodes(H)=R,
	\qquad |\nodes(K)|\le 2|R|.
	\]
	For every graph $J$ sharing exactly $Z$ with both $H$ and $K$,
	and every nonempty set $\Omega\subseteq R\cup\nodes(J)$,
	\begin{equation}#1{eq:compression-pmc}
		\Omega\in\PMC(H\oplus_Z J)
		\quad\Longleftrightarrow\quad
		\Omega\in\PMC(K\oplus_Z J).
	\end{equation}
	The construction takes $O(n^3)$ time, where $n=|\nodes(H)|\ge1$.
}
\begin{theorem}[Compression after retaining $R$]\label{thm:compression}
\mainCompressionThm{\label}
\end{theorem}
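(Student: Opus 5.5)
The construction follows the outline in the introduction. Let $\mathcal{C}$ be the set of connected components of $H\sminus R$. Start from the torso $\Tor(H,R)$. For every distinct set $N\in\set{N_H(C):C\in\mathcal{C}}$ add one fresh auxiliary vertex $a_N$ adjacent exactly to $N$; call the result $K$. Then $\nodes(K)\cap\nodes(H)=R$ by construction. The torso and the sets $N_H(C)$ can be computed in $O(n^3)$ time: find the components of $H\sminus R$ in $O(n^2)$, and for each of them clique its neighborhood in $O(n^2)$.

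The argument proceeds in three steps.

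\textbf{Step 1 (torso lemma).} For any $J$ attached along $Z$ and any nonempty $\Omega\subseteq R\cup\nodes(J)$, we show that $\Omega\in\PMC(H\oplus_Z J)$ if and only if $\Omega\in\PMC(\Tor(H,R)\oplus_Z J)$ and $\Omega\neq N_H(C)$ for every $C\in\mathcal{C}$. This is the intended Lemma~\ref{lem:torso}.

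Write $G=H\oplus_Z J$ and $G'=\Tor(H,R)\oplus_Z J$. Since $\Omega\cap\nodes(H)\subseteq R$, every $C\in\mathcal{C}$ is disjoint from $\Omega$. The components of $G\sminus\Omega$ therefore split into two kinds. The first kind are components meeting $R\cup\nodes(J)$. These correspond bijectively to components of $G'\sminus\Omega$: delete the $\mathcal{C}$-vertices. A path through some $C$ between two vertices of $N_H(C)\setminus\Omega$ is replaced by a torso edge, and conversely a torso edge is realized by a path through some $C$. The neighborhoods in $\Omega$ are equal, because a vertex $x\in\Omega$ adjacent to some $C$ absorbed in a component $D$ lies in $N_H(C)$. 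Hence $x$ is torso-adjacent to every other vertex of $N_H(C)$, and those lying in $D$ exist when $C$ has a non-$\Omega$ neighbor. The second kind are components equal to a single $C\in\mathcal{C}$ with $N_H(C)\subseteq\Omega$. Such a $C$ witnesses only pairs inside $N_H(C)$, and these pairs are already adjacent in the torso. So the pair condition transfers in both directions: a pair that is nonadjacent in $G$ but witnessed by such a $C$ is adjacent in $G'$; a pair that is nonadjacent in $G$ and adjacent in $G'$ via a torso edge is witnessed in $G$ either by that $C$ (if $N_H(C)\subseteq\Omega$) or by the component containing the rest of $C$. Full components match, except for components of the second kind with $N_H(C)=\Omega$, which are excluded by the side condition.

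\textbf{Step 2 (auxiliary vertices).} Work in $K\oplus_Z J$. Each $a_N$ is its own component of $(K\oplus_Z J)\sminus\Omega$ whenever $N\subseteq\Omega$. Otherwise it merges into the component containing $N\setminus\Omega$, which is nonempty and a clique in the torso, so the component structure and neighborhoods are unchanged. If $N\subseteq\Omega$, the component $\set{a_N}$ witnesses only torso-adjacent pairs, so it is harmless unless $N=\Omega$, where it is exactly a full component. Hence membership in $\PMC(K\oplus_Z J)$ equals membership in $\PMC(G')$ together with $\Omega\neq N$ for all added $N$, which is the right-hand side of Step 1. Combining the two steps gives \eqref{eq:compression-pmc}.

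\textbf{Step 3 (size bound).} We prune the auxiliary vertices to at most $|R|$. An $N$ can equal a PMC $\Omega$ of some $G'$ only if $\Omega\subseteq R$, since $\Omega=N\subseteq R$. For $\Omega=N$ with $\Omega\subseteq R$, the condition ``$\Omega\in\PMC(G')$'' restricted to the torso suggests keeping only those $N$ that are PMCs of $\Tor(H,R)$, since they are already cliques there. This is the step I expect to be the main obstacle: one must check that whenever $N$ is not a PMC of $\Tor(H,R)$, it also fails to be a PMC of $G'$ for every $J$. The natural route is to show that a non-full-component or pair-condition failure for a clique $N$ in the torso persists after attaching $J$. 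For a clique the pair condition is vacuous, so failure means some torso component is full for $N$, which looks like it survives attachment. But attaching $J$ can merge components, and a full component remains full only if its neighborhood does not grow beyond $N$, which it cannot since $\Omega=N$. So the first thing I would try is to prove that every clique PMC of the torso is preserved and every other clique $N$ has a full component in $\Tor(H,R)\sminus N$ that stays full in $G'\sminus N$.

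Once this holds, the Fomin--Villanger bound~\cite{fomin2013} says that a graph on $|R|$ vertices has at most $|R|$ PMCs that are cliques. Hence at most $|R|$ auxiliary vertices are needed, giving $|\nodes(K)|\le 2|R|$. Testing each candidate $N$ with $\CheckPMC$ on the torso keeps the total within $O(n^3)$.
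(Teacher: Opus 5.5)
Your proposal is correct and follows the paper's proof essentially step for step. The three ingredients are the same: the torso lemma (Lemma~\ref{lem:torso}); one excluded auxiliary vertex per distinct neighborhood, justified by shortcutting through the clique (Lemma~\ref{lem:marker}); and pruning to those neighborhoods with no full component in $\Tor(H,R)$, i.e., clique PMCs of the torso, of which there are at most $|R|$ by Fomin--Villanger.

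The step you flag as the main obstacle goes through exactly as you sketch, and as the paper argues. A full component $D$ of $\Tor(H,R)\sminus N$ lies inside a component of $(\Tor(H,R)\oplus_Z J)\sminus N$ whose neighborhood is squeezed between $N$ and $N$. For the $O(n^3)$ bound, run only this full-component test, in $O(n^2)$ per distinct $N$, rather than a full $\CheckPMC$.
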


Note that in~\eqref{eq:compression-pmc}, the same vertex set $\Omega$ is
tested in both $H\oplus_Z J$ and $K\oplus_Z J$: because
$\Omega\subseteq R\cup\nodes(J)$, $\Omega$ contains neither deleted
vertices from $\nodes(H){\setminus}R$ nor auxiliary vertices from
$\nodes(K){\setminus}R$.

When $H$ and $J$ carry inclusion and exclusion constraints $(I_H,O_H)$
and $(I_J,O_J)$ with $Z\cup I_H\subseteq R$, excluding the auxiliary
vertices $\nodes(K){\setminus}R$ via $O_K$ immediately extends
Theorem~\ref{thm:compression} to constrained PMCs of size at most $k$.

\begin{corollary}[Preserving constrained candidates]
	\label{cor:compression-constraints}
	Let $H$ be a graph, let $I_H,O_H\subseteq\nodes(H)$, and let
	$Z,R\subseteq\nodes(H)$ satisfy $Z\cup I_H\subseteq R\subseteq\nodes(H)$.
	We can construct a graph $K$ such that
	\[
	\nodes(K)\cap\nodes(H)=R,
	\qquad |\nodes(K)|\le 2|R|,
	\]
	with inclusion and exclusion constraints
	\[
	I_K\eqdef I_H,
	\qquad
	O_K\eqdef (O_H\cap R)\cup\bigl(\nodes(K){\setminus} R\bigr)
	\]
	such that for every graph $J$ satisfying
	\[
	\nodes(J)\cap\nodes(H)
	=
	\nodes(J)\cap\nodes(K)
	=
	Z,
	\]
	every $I_J,O_J\subseteq\nodes(J)$, every integer $k\ge1$,
	and every candidate $\Omega\subseteq R\cup\nodes(J)$,
	\begin{align}\label{eq:compression-constrained}
			\Omega\in\PMC_k(H\oplus_Z J,I_H\cup I_J,O_H\cup O_J)
			&&\Longleftrightarrow&&
			\Omega\in\PMC_k(K\oplus_Z J,I_K\cup I_J,O_K\cup O_J).
	\end{align}
	The construction takes $O(n^3)$ time, where
	$n=|\nodes(H)|\ge1$.
\end{corollary}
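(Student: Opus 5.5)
The plan is to take $K$ to be exactly the graph of Theorem~\ref{thm:compression}, built from $H$ and $Z\subseteq R$. This is legitimate because $Z\cup I_H\subseteq R$ gives in particular $Z\subseteq R\subseteq\nodes(H)$. The bounds $\nodes(K)\cap\nodes(H)=R$ and $|\nodes(K)|\le 2|R|$, and the $O(n^3)$ construction time, are then inherited directly. Computing $O_K$ adds only $O(n)$ work with bit vectors. What remains is to verify~\eqref{eq:compression-constrained} for a fixed $J$, fixed $I_J,O_J\subseteq\nodes(J)$, fixed $k\ge1$, and a fixed candidate $\Omega\subseteq R\cup\nodes(J)$.

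First, dispose of $\Omega=\emptyset$: it lies in neither family, since PMCs are nonempty by definition. For nonempty $\Omega$, Theorem~\ref{thm:compression} gives $\Omega\in\PMC(H\oplus_Z J)\iff\Omega\in\PMC(K\oplus_Z J)$. The size condition $|\Omega|\le k$ refers to the same set on both sides. So it suffices to show that the constraints agree on $\Omega$:
\[
I_H\cup I_J\subseteq\Omega \iff I_K\cup I_J\subseteq\Omega,
\qquad
\Omega\cap(O_H\cup O_J)=\emptyset \iff \Omega\cap(O_K\cup O_J)=\emptyset.
\]
The inclusion equivalence is immediate from $I_K=I_H$. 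Note that $I_H\subseteq R\subseteq\nodes(K)$, so $I_K$ is a legitimate constraint set of $K$.

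For the exclusion equivalence, I compare $\Omega\cap O_H$ with $\Omega\cap O_K$; the $O_J$ parts coincide. Since $\Omega\subseteq R\cup\nodes(J)$ and $\nodes(J)\cap\nodes(H)=Z\subseteq R$, we have $\Omega\cap\nodes(H)\subseteq R$. Hence $\Omega\cap O_H=\Omega\cap(O_H\cap R)$. Next, the auxiliary part $\nodes(K)\setminus R$ is disjoint from $\Omega$. Such a vertex is not in $R$. It is also not in $\nodes(J)$, because $\nodes(J)\cap\nodes(K)=Z\subseteq R$. Therefore $\Omega\cap O_K=\Omega\cap(O_H\cap R)=\Omega\cap O_H$, which gives the equivalence.

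The only delicate point, and the step I would check most carefully, is this use of the hypothesis $\nodes(J)\cap\nodes(K)=Z$: it is what ensures that no auxiliary vertex can lie in $\Omega$ through $J$. This is consistent with the renaming convention in the preliminaries. Everything else is bookkeeping on top of Theorem~\ref{thm:compression}.
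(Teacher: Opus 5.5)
Your proposal is correct and follows essentially the same route as the paper: you take $K$ from Theorem~\ref{thm:compression}, treat $\Omega=\emptyset$ separately, and transfer the inclusion, size, and exclusion constraints. For the exclusion constraint, you use exactly the paper's observation that $\Omega\subseteq R\cup\nodes(J)$ avoids both $\nodes(H)\setminus R$ and the auxiliary vertices $\nodes(K)\setminus R$. You spell out, via $\nodes(J)\cap\nodes(K)=Z\subseteq R$, why those two disjointness facts hold, which the paper leaves implicit.
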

\begin{proof}
	Let $K$ be the graph constructed in Theorem~\ref{thm:compression}, and
	define $I_K$ and $O_K$ as above. The case $\Omega=\emptyset$ is immediate
	since PMCs are nonempty. For any fixed $J,I_J,O_J,k$ and any
	nonempty candidate $\Omega\subseteq R\cup\nodes(J)$,
	Theorem~\ref{thm:compression} states that $\Omega\in\PMC(H\oplus_Z J)$
	if and only if $\Omega\in\PMC(K\oplus_Z J)$. Since $I_K=I_H$, the
	inclusion constraint $I_H\cup I_J\subseteq\Omega$ is identical on both
	sides, as is the cardinality bound $|\Omega|\le k$. For the exclusion
	constraints, because $\Omega\subseteq R\cup\nodes(J)$ is disjoint from
	both $\nodes(H){\setminus}R$ and $\nodes(K){\setminus}R$, we have
	\[
	\Omega\cap(O_H\cup O_J)
	= \Omega\cap\bigl((O_H\cap R)\cup O_J\bigr)
	= \Omega\cap(O_K\cup O_J).
	\]
	Hence $\Omega\cap(O_H\cup O_J)=\emptyset$ if and only if
	$\Omega\cap(O_K\cup O_J)=\emptyset$, which establishes~\eqref{eq:compression-constrained}.
	Notice also that because $\nodes(K){\setminus}R\subseteq O_K$, every
	set $\Omega\in\PMC_k(K\oplus_Z J,I_K\cup I_J,O_K\cup O_J)$ automatically
	satisfies $\Omega\subseteq R\cup\nodes(J)$, so the reverse direction
	of~\eqref{eq:compression-constrained} applies to all constrained PMCs
	of $K\oplus_Z J$.
\end{proof}
We now prove Theorem~\ref{thm:compression} in three steps: characterizing
how the torso operation affects PMC membership (Section~\ref{sec:torso}),
restoring lost full-component obstructions with forbidden vertices
(Section~\ref{sec:forbidden-vertex}), and bounding the number of added
vertices by $|R|$ (Section~\ref{sec:compression-size}).
\subsection{PMC membership in the torso}\label{sec:torso}
Let $H$ be a graph and $R\subseteq \nodes(H)$.
For each connected component $C$ of $H\sminus R$, its external
neighborhood $S_C=N_H(C)$ is a subset of $R$. Recall from
Section~\ref{sec:preliminaries} that the torso $F=\Tor(H,R)$ is formed
from $H[R]$ by making each $S_C$ a clique~\cite[Section~2.3, Definition~2.5]{marx2013}.
Replacing each maximal subpath through a deleted component $C$ of
$H\sminus R$ with the torso edge between its entry and exit vertices in
$S_C$ (and vice versa) preserves the pairwise connectivity condition of
Lemma~\ref{thm:pmc}. However, deleting $C$ can eliminate a full component:
in the path $a$--$c$--$b$ with $R=\{a,b\}$, the deleted component
$C=\{c\}$ has $N_H(C)=\{a,b\}$, so $\{a,b\}$ is not a PMC of the path
even though it becomes a PMC of the torso edge $\Tor(H,R)$. The
following lemma shows that this is the only way taking the torso can
alter PMC membership.

\begin{lemma}[PMC membership in the torso]\label{lem:torso}
Let $Z\subseteq R\subseteq\nodes(H)$ and $F=\Tor(H,R)$.
For every graph $J$ sharing exactly $Z$ with $H$, and every
nonempty $\Omega\subseteq R\cup\nodes(J)$,
\begin{equation}\label{eq:torso-error}
 \Omega\in\PMC(H\oplus_Z J)
 \quad\Longleftrightarrow\quad
 \begin{cases}
 \Omega\in\PMC(F\oplus_Z J),\text{ and}\\
 \Omega\ne N_H(C)\text{ for every component }C\text{ of }H\sminus R.
 \end{cases}
\end{equation}
\end{lemma}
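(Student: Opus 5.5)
Write $G=H\oplus_Z J$ and $G'=F\oplus_Z J$, and fix a nonempty $\Omega\subseteq R\cup\nodes(J)$. I would compare the two conditions of Lemma~\ref{thm:pmc} in $G$ and in $G'$ through a correspondence between components. The vertices of $G$ not in $G'$ are exactly the components $C$ of $H\sminus R$. Each such $C$ is disjoint from $\Omega$, and its neighborhood in $G$ is $S_C=N_H(C)\subseteq R$, because $C$ meets $J$ only inside $Z\subseteq R$ and $C\cap Z=\emptyset$.

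\textbf{Step 1: component correspondence.} Let $W=\nodes(G)\setminus(\Omega\cup\bigcup_C C)=\nodes(G')\setminus\Omega$. I would prove the following claim. Two vertices $x,y\in W$ lie in the same component of $G\sminus\Omega$ if and only if they lie in the same component of $G'\sminus\Omega$.
\begin{itemize}
\item Forward: take an $x,y$-path in $G\sminus\Omega$ and replace each maximal subpath inside some deleted $C$ by the torso edge between its entry and exit vertices. Both of these lie in $S_C$, are distinct, and avoid $\Omega$.
\item Backward: replace each torso edge $uv\notin\edges(H)$ by a $u,v$-path through some $C$ with $u,v\in S_C$. Such a path avoids $\Omega$.
\end{itemize}
The same rerouting shows, for $w\in\Omega$, that $w$ has a neighbor in a given $W$-part in $G$ if and only if it does in $G'$. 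This uses that every $C$ adjacent to $w$ via a path also touches the $W$-vertices it connects to.

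I would then classify the components of $G\sminus\Omega$.
\begin{itemize}
\item Type (a): components containing a vertex of $W$. These correspond bijectively to the components of $G'\sminus\Omega$, with equal neighborhoods in $\Omega$.
\item Type (b): components lying entirely inside a single deleted $C$. Since $C$ is connected and avoids $\Omega$, such a component is all of $C$. This happens exactly when $S_C\subseteq\Omega$, and then its neighborhood is $S_C$.
\end{itemize}
If $S_C\not\subseteq\Omega$, then $C$ merges into a type-(a) component $D$. I expect the main obstacle here: showing that $N_G(D)\cap\Omega=N_{G'}(D')\cap\Omega$ for the corresponding $D'$. The difficulty is that $C$ contributes $S_C\cap\Omega$. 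The fix is that any $u\in S_C\setminus\Omega$ lies in $D'$, and in $F$ the vertex $u$ is adjacent to all of $S_C$ because $S_C$ is a clique. So $S_C\cap\Omega\subseteq N_{G'}(D')$. The reverse inclusion follows by rerouting torso edges through $C$.

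\textbf{Step 2: pair condition.} Take nonadjacent $u,v\in\Omega$ in $G$.
\begin{itemize}
\item If $u,v$ lie in a common $S_C$, they are adjacent in $G'$. They are also witnessed in $G$, either by $C$ itself (type (b)) or by the type-(a) component containing $C$.
\item Otherwise their adjacency status is the same in both graphs, and witnesses correspond through the type-(a) bijection.
\end{itemize}
For type-(b) witnesses in $G$, note that $u,v\in S_C$ already makes them adjacent in $G'$. So the pair condition holds in $G$ if and only if it holds in $G'$.

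\textbf{Step 3: full components.}
\begin{itemize}
\item Type-(a) components are full in $G$ if and only if their counterparts are full in $G'$.
\item A type-(b) component $C$ is full exactly when $S_C=\Omega$.
\end{itemize}
Hence $\Omega$ has no full component in $G$ if and only if it has none in $G'$ and $\Omega\neq N_H(C)$ for every $C$. Combining Steps 2 and 3 with Lemma~\ref{thm:pmc} yields~\eqref{eq:torso-error}.
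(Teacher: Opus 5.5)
Your proposal is correct and follows essentially the same route as the paper. Both arguments reroute paths between torso edges and walks through the deleted components of $H\sminus R$, and both split the components of $(H\oplus_Z J)\sminus\Omega$ into two kinds: those equal to a single deleted component $C$, which are full exactly when $\Omega=N_H(C)$, and those meeting $R\cup\nodes(J)$, which correspond to components of $(F\oplus_Z J)\sminus\Omega$ with identical neighborhoods. The difference is organizational. You set up this component bijection once and read off both directions, treating the pair condition through witness components instead of the paper's $\Omega$-avoiding paths. Your ``main obstacle'' paragraph correctly supplies the equality $N_G(D)=N_{G'}(D')$, which your earlier sentence about neighbors in a ``$W$-part'' states only loosely.
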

\begin{proof}
	Let $G_H\eqdef H\oplus_Z J$ and $G_F\eqdef F\oplus_Z J$, so that
	$\nodes(G_F)=R\cup\nodes(J)$ and
	$\nodes(G_H){\setminus}\nodes(G_F)=\nodes(H){\setminus} R$.
	Because $\nodes(H)\cap\nodes(J)=Z\subseteq R$, every connected
	component $C$ of $H\sminus R$ is disjoint from $\nodes(J)$ and has no
	incident edges in $J$; hence $N_{G_H}(C)=N_H(C)\subseteq R$, and $C$
	is disjoint from $\Omega\subseteq\nodes(G_F)$.
	
	\emph{Forward direction ($\Rightarrow$).}
	Suppose $\Omega\in\PMC(G_H)$. First, if $\Omega=N_H(C)$ for some
	component $C$ of $H\sminus R$, then since $C\cap\Omega=\emptyset$ and
	$N_{G_H}(C)=N_H(C)=\Omega$, the set $C$ is a connected component of
	$G_H\sminus\Omega$ that is full with respect to $\Omega$, which contradicts Lemma~\ref{thm:pmc}(1). Thus $\Omega\ne N_H(C)$ for every component
	$C$ of $H\sminus R$.
	
	To see that $\Omega$ satisfies the pair condition in $G_F$, take any
	distinct $a,b\in\Omega$. By Lemma~\ref{thm:pmc}(2), $G_H$ contains an
	$a,b$-path $P$ whose internal vertices avoid $\Omega$. Any maximal
	contiguous subsegment $c_1,\ldots,c_t$ of $P$ contained in
	$\nodes(G_H){\setminus}\nodes(G_F)=\nodes(H){\setminus}R$ lies inside
	a single component $C$ of $H\sminus R$, and its immediately preceding
	and succeeding vertices $u,v$ on $P$ belong to $N_{G_H}(C)=N_H(C)\subseteq R$.
	Since $N_H(C)$ is a clique in $F$, $(u,v)\in\edges(G_F)$. Replacing
	each such subsegment $u,c_1,\ldots,c_t,v$ of $P$ with the single torso
	edge $(u,v)$ yields an $a,b$-path in $G_F$ whose internal vertices are
	a subset of those of $P$ and therefore avoid $\Omega$. Thus $\Omega$
	satisfies the pair condition in $G_F$.
	
	It remains to show that $G_F\sminus\Omega$ has no full component with
	respect to $\Omega$. Suppose, by way of contradiction, that $D_0$ is a
	component of $G_F\sminus\Omega$ with $N_{G_F}(D_0)=\Omega$. Replacing
	each torso edge in $G_F[D_0]$ with a path through the component of
	$H\sminus R$ that generated it (and shortcutting any repeated vertices)
	shows that all vertices of $D_0$ belong to a single connected component
	$D$ of $G_H\sminus\Omega$. Now take any $u\in\Omega$; since
	$N_{G_F}(D_0)=\Omega$, $u$ has a neighbor $v\in D_0\subseteq D$ in
	$G_F$. If $(u,v)\in\edges(G_H)$, then $u\in N_{G_H}(D)$. Otherwise,
	$(u,v)\in\edges(G_F){\setminus}\edges(G_H)$ is a torso edge, so
	$u,v\in N_H(C)$ for some component $C$ of $H\sminus R$. Since
	$C\cap\Omega=\emptyset$ and $v\in D$ has a neighbor in $C$, the entire
	connected set $C$ is contained in $D$, and because $u\in N_H(C)$, we
	again have $u\in N_{G_H}(D)$. Thus $N_{G_H}(D)=\Omega$, which contradicts $\Omega\in\PMC(G_H)$. By Lemma~\ref{thm:pmc}, we conclude that
	$\Omega\in\PMC(G_F)$.
	
	\emph{Reverse direction ($\Leftarrow$).}
	Conversely, suppose $\Omega\in\PMC(G_F)$ and $\Omega\ne N_H(C)$ for
	every component $C$ of $H\sminus R$. For any distinct $a,b\in\Omega$,
	taking an $a,b$-path in $G_F$ whose internal vertices avoid $\Omega$,
	expanding each torso edge $e\in\edges(G_F){\setminus}\edges(G_H)$
	through the component of $H\sminus R$ that supplied $e$ (which is
	disjoint from $\Omega$), and shortcutting repeated vertices produces an
	$a,b$-path in $G_H$ whose internal vertices avoid $\Omega$. Hence
	$\Omega$ satisfies the pair condition in $G_H$.
	
	Now suppose, by way of contradiction, that $G_H\sminus\Omega$ has a
	component $D$ with $N_{G_H}(D)=\Omega$. If
	$D\subseteq\nodes(H){\setminus}R$, then $D$ is a single component $C$
	of $H\sminus R$, so $\Omega=N_{G_H}(C)=N_H(C)$, contrary to our hypothesis. Otherwise, the set $D_0\eqdef D\cap\nodes(G_F)$ is
	nonempty. Contracting maximal subsegments through $\nodes(H){\setminus}R$
	along paths in $G_H[D]$ into torso edges (and vice versa) implies that
	$D_0$ is a single connected component of $G_F\sminus\Omega$. Take any
	$u\in\Omega$; since $N_{G_H}(D)=\Omega$, $u$ has a neighbor $x\in D$ in
	$G_H$. If $x\in\nodes(G_F)$, then $x\in D_0$ and $(u,x)\in\edges(G_F)$,
	so $u\in N_{G_F}(D_0)$. If instead $x\in\nodes(H){\setminus}R$, let $C$
	be the component of $H\sminus R$ containing $x$, and fix any vertex
	$y\in D_0$. Along a path from $x$ to $y$ in $G_H[D]$, the first vertex
	$v$ outside $C$ belongs to $N_{G_H}(C)\cap D=N_H(C)\cap D_0$. Since
	$u,v\in N_H(C)$, the torso edge $(u,v)$ is present in $G_F$ with
	$v\in D_0$, so again $u\in N_{G_F}(D_0)$. Consequently, $N_{G_F}(D_0)=\Omega$, which contradicts $\Omega\in\PMC(G_F)$. Thus
	$\Omega\in\PMC(G_H)$.
\end{proof}

\subsection{Restoring an obstruction with one forbidden vertex}\label{sec:forbidden-vertex}

By Lemma~\ref{lem:torso}, the only candidates $\Omega\subseteq R\cup\nodes(J)$
that become false positives in the torso are the neighborhoods
$S=N_H(C)$ of deleted components $C$ of $H\sminus R$. Because each such
$S$ is a clique in $F=\Tor(H,R)$, we can eliminate $S$ as a candidate
PMC by attaching a single excluded vertex $z_S$ with neighborhood $S$:
deleting $S$ then leaves $\{z_S\}$ as a full component, while no other
candidate avoiding $z_S$ is affected.

\begin{lemma}[One forbidden vertex]\label{lem:marker}
Let $S$ be a clique of a graph $F$. Form $F^+$ by adding a new
vertex $z_S$ with neighborhood exactly $S$.
For every nonempty $\Omega\subseteq\nodes(F)$,
\[
 \Omega\in\PMC(F^+)
 \quad\Longleftrightarrow\quad
 \Omega\in\PMC(F)\text{ and }\Omega\ne S.
\]
\end{lemma}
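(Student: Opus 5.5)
The plan is to compare the components of $F^+\sminus\Omega$ with those of $F\sminus\Omega$ and then apply the characterization of Lemma~\ref{thm:pmc} on both sides. Fix a nonempty $\Omega\subseteq\nodes(F)$; in particular $z_S\notin\Omega$. Two cases arise.

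\textbf{Case $S\subseteq\Omega$.} Every neighbor of $z_S$ lies in $\Omega$, so $\{z_S\}$ is a component of $F^+\sminus\Omega$ with neighborhood $S$. All other components coincide with the components of $F\sminus\Omega$, and their neighborhoods are unchanged, since $z_S$ has no edge to them.

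\textbf{Case $S\not\subseteq\Omega$.} Then $z_S$ has a neighbor $s\in S\setminus\Omega$ and joins the component $D$ of $F\sminus\Omega$ containing $s$. Every other component is unchanged. The neighborhood of the enlarged component is $N_F(D)\cup(S\cap\Omega)$. I will show this equals $N_F(D)$. Because $S$ is a clique of $F$, every vertex of $S\cap\Omega$ is adjacent to $s\in D$ and hence already lies in $N_F(D)$. So in this case the component structure and the neighborhoods restricted to $\Omega$ are identical in $F$ and $F^+$. The pair condition and the no-full-component condition therefore transfer verbatim, and $\Omega\in\PMC(F^+)\iff\Omega\in\PMC(F)$. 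Since $\Omega\ne S$ holds automatically here, the lemma holds. This clique argument is the one place where the hypothesis on $S$ is essential, and I expect it to be the only delicate step.

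\textbf{Returning to Case $S\subseteq\Omega$,} there are two subcases.

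If $\Omega=S$, then $\{z_S\}$ is a full component, so $\Omega\notin\PMC(F^+)$ by Lemma~\ref{thm:pmc}(1). This matches the right-hand side, which is false.

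If $S\subsetneq\Omega$, then the extra component $\{z_S\}$ is not full. Its neighborhood $S$ is a clique, so it supplies no pair of nonadjacent vertices that the pair condition needs. Hence it neither creates nor destroys a full component, and it adds no new pair witnesses. The conditions of Lemma~\ref{thm:pmc} therefore hold in $F^+$ exactly when they hold in $F$.

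Combining the cases gives the equivalence.
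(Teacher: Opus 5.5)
Your proof is correct and rests on the same two cases as the paper's reverse direction. When $S\not\subseteq\Omega$, the clique $S\setminus\Omega$ lies in a single component $D$ of $F\sminus\Omega$, which absorbs $z_S$ without changing its neighborhood. When $S\subseteq\Omega$, $\{z_S\}$ is an extra component with neighborhood $S$. You go further by establishing an exact correspondence of components and neighborhoods, which gives both directions at once. The paper instead proves the forward direction separately: it shortcuts paths through $z_S$ using an edge of $S$, and observes that a full component of $F\sminus\Omega$ stays inside a full component of $F^+\sminus\Omega$.

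One small correction. Your claim that every other component is unchanged when $S\not\subseteq\Omega$ also needs the clique hypothesis: it is what places all of $S\setminus\Omega$ inside $D$, so $z_S$ cannot merge several components. You also use the hypothesis again in the subcase $S\subsetneq\Omega$, so it is not the only place it matters. The argument is complete once this step is stated.
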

\begin{proof}
	\emph{Forward direction ($\Rightarrow$).}
	Suppose $\Omega\in\PMC(F^+)$, where $\Omega\subseteq\nodes(F)$ and
	hence $z_S\notin\Omega$. If $\Omega=S$, then $\{z_S\}$ is a connected
	component of $F^+\sminus\Omega$ with $N_{F^+}(\{z_S\})=S=\Omega$, in contradiction to Lemma~\ref{thm:pmc}(1). Thus $\Omega\ne S$.
	For any distinct $a,b\in\Omega$, take an $a,b$-path in $F^+$ whose
	internal vertices avoid $\Omega$; if this path visits $z_S$, its
	neighbors $p,q$ before and after $z_S$ belong to the clique
	$S=N_{F^+}(z_S)$, so replacing $p$--$z_S$--$q$ with the edge
	$(p,q)\in\edges(F)$ yields an $a,b$-path in $F$ avoiding $\Omega$
	internally. Finally, if $F\sminus\Omega$ had a full component $D$ with
	respect to $\Omega$, the component of $F^+\sminus\Omega$ containing $D$
	would also be full with respect to $\Omega$, which contradicts $\Omega\in\PMC(F^+)$. Hence $\Omega\in\PMC(F)$.
	
	\emph{Reverse direction ($\Leftarrow$).}
	Conversely, suppose $\Omega\in\PMC(F)$ and $\Omega\ne S$. Since
	$F$ is a subgraph of $F^+$, $\Omega$ immediately satisfies the pair
	condition in $F^+$. To check the full-component condition in
	$F^+\sminus\Omega$, we consider two cases. If $S\subseteq\Omega$, then
	$S\subsetneq\Omega$ (as $\Omega\ne S$), so $\{z_S\}$ forms an isolated
	component of $F^+\sminus\Omega$ with neighborhood
	$N_{F^+}(\{z_S\})=S\subsetneq\Omega$, and all other components of
	$F^+\sminus\Omega$ coincide with those of $F\sminus\Omega$ and are not
	full. If instead $S{\setminus}\Omega\ne\emptyset$, then because
	$S{\setminus}\Omega$ is a clique, it is contained in a single component
	$D$ of $F\sminus\Omega$. In $F^+\sminus\Omega$, the vertex $z_S$ merges
	into $D$ to form the component $D\cup\{z_S\}$, while all other
	components remain unchanged. Since every vertex of $S\cap\Omega$ is
	already adjacent to $S{\setminus}\Omega\subseteq D$, we have
	\[
	N_{F^+}(D\cup\{z_S\})=N_F(D)\ne\Omega.
	\]
	Thus no component of $F^+\sminus\Omega$ is full with respect to
	$\Omega$, and $\Omega\in\PMC(F^+)$.
\end{proof}

Notice that Lemma~\ref{lem:marker} relies on $S$ being a clique so that
any path with $z_S$ as an internal vertex can be shortened using an edge
within $S$. Because
$S\subseteq R$ remains a clique in $F\oplus_Z J$ for any outside graph
$J$ attached along $Z$, Lemma~\ref{lem:marker} applies directly to
$F\oplus_Z J$ as well (see Figure~\ref{fig:compression-path}).

\begin{figure}[htbp]
\centering
\begin{tikzpicture}[x=1cm,y=1cm,
 vertex/.style={circle,draw,inner sep=2pt,font=\small},
 auxiliary/.style={rectangle,draw,inner sep=3pt,font=\small},
 >=Stealth]
 \node[vertex] (a1) at (0,0) {$a$};
 \node[vertex,fill=black!10] (c1) at (1.2,0) {$c$};
 \node[vertex] (b1) at (2.4,0) {$b$};
 \draw (a1)--(c1)--(b1);
 \node at (1.2,-0.8) {$H$};
 \draw[->] (3,0)--(4.2,0);
 \node[vertex] (a2) at (4.8,0) {$a$};
 \node[vertex] (b2) at (6.2,0) {$b$};
 \draw (a2)--(b2);
 \node at (5.5,-0.8) {$F=\Tor(H,R)$};
 \draw[->] (6.9,0)--(8.1,0);
 \node[vertex] (a3) at (8.7,0) {$a$};
 \node[vertex] (b3) at (10.5,0) {$b$};
 \node[auxiliary,fill=black!10] (z3) at (9.6,1) {$z_{\{a,b\}}$};
 \draw (a3)--(b3)--(z3)--(a3);
 \node at (9.6,-0.8) {$K$};
\end{tikzpicture}
\caption{Compressing the path $H=a$--$c$--$b$ with retained set
$R=\{a,b\}$. In the torso $F=\Tor(H,R)$, the pair $\{a,b\}$ becomes a
PMC; adding the excluded auxiliary vertex $z_{\{a,b\}}$ (drawn as a
rectangle) restores the full-component obstruction in $K$.}
\label{fig:compression-path}
\end{figure}

\subsection{At most one auxiliary vertex per retained vertex}\label{sec:compression-size}

Adding a separate auxiliary vertex for every connected component $C$ of
$H\sminus R$ could leave up to $|\nodes(H)|$ vertices in $K$. However,
two observations allow us to add at most $|R|$ auxiliary vertices in
total:
first, multiple components $C$ of $H\sminus R$ sharing the same
neighborhood $S=N_H(C)$ require at most a single auxiliary vertex $z_S$;
second, if $F\sminus S$ already contains a component $D$ that is full
with respect to $S$ in $F=\Tor(H,R)$ (i.e., $N_F(D)=S$), then under any
attachment $J$ along $Z$, the component $D'$ of $(F\oplus_Z J)\sminus S$
containing $D$ satisfies $N_{F\oplus_Z J}(D')=S$ and thus already
prevents $S$ from being a PMC of $F\oplus_Z J$.
Consequently, we only need to introduce an auxiliary vertex $z_S$ for
distinct nonempty neighborhoods $S=N_H(C)$ such that $F\sminus S$ has no
full component with respect to $S$. Since every such $S$ is a clique in
$F$ by construction of the torso, Lemma~\ref{thm:pmc} implies that these
sets $S$ are simultaneously cliques and PMCs of $F$.

To bound the number of such sets $S$ by $|R|=|\nodes(F)|$, we use a
known property of chordal completions~\cite[pp.~2201, 2206]{fomin2013}:
every set that is both a clique and a PMC of $F$ is a maximal clique in
\emph{every} minimal triangulation of $F$, and any $h$-vertex chordal
graph has at most $h$ maximal cliques.

\begin{citedLemma}{\cite[pp.~2201, 2206]{fomin2013}}\label{lem:clique-count}
A graph $F$ on $h$ vertices has at most $h$ nonempty sets
that are both cliques and PMCs of $F$.
\end{citedLemma}
\begin{proof}
If $h=0$, the claim is trivial. For $h\ge1$, fix any minimal
triangulation $T$ of $F$, and let $S\in\PMC(F)$ be a clique of $F$. For
each connected component $C$ of $F\sminus S$, let $Q_C=N_F(C)$; since
$S\in\PMC(F)$ is a clique, $Q_C$ is a clique satisfying
$Q_C\subsetneq S$. Let $T'$ be the subgraph of $T$ formed by the union
of $T[S]$ and the induced subgraphs $T[C\cup Q_C]$ over all components
$C$ of $F\sminus S$. Each $T[C\cup Q_C]$ is chordal and intersects the
clique $T[S]$ in the clique $Q_C$. Since gluing chordal graphs along
cliques preserves chordality and $\edges(F)\subseteq\edges(T')\subseteq\edges(T)$,
the minimality of the triangulation $T$ forces $T'=T$. In $T'=T$, every
vertex $v\in C$ has $N_T(v)\cap S\subseteq Q_C\subsetneq S$, so no
vertex outside $S$ is adjacent in $T$ to all of $S$. Thus $S$ is a
maximal clique of $T$. Because an $h$-vertex chordal graph has at most
$h$ maximal cliques~\cite{fomin2013}, there are at most $h$ such sets
$S$.
\end{proof}

\subsection{The compression algorithm and its guarantee}\label{sec:compression-algorithm}

Algorithm~\ref{alg:compress} ($\Compress$) summarizes the construction
of $(K,I_K,O_K)$ from $(H,Z,R,I_H,O_H)$.

\begin{algorithm}[htbp]
\caption{$\Compress(H,Z,R,I_H,O_H)$}\label{alg:compress}
\begin{algorithmic}[1]
\Require $Z\cup I_H\subseteq R\subseteq\nodes(H)$;
 $I_H,O_H\subseteq\nodes(H)$
\Ensure Replacement graph $K$ and its constraints $I_K,O_K$
\State Compute the components $C$ of $H\sminus R$ and their neighborhoods $S_C=N_H(C)$
\State $F\gets H[R]$
\For{each component $C$ of $H\sminus R$}
 \State Add all missing edges within $S_C$ to $F$
\EndFor
\State $K\gets F$
\For{each distinct nonempty set $S$ among the neighborhoods $S_C$}
 \If{no component $D$ of $F\sminus S$ satisfies $N_F(D)=S$}
  \State Add to $K$ a fresh vertex $z_S$ adjacent exactly to $S$
 \EndIf
\EndFor
\State $I_K\gets I_H$; $O_K\gets(O_H\cap R)\cup(\nodes(K){\setminus} R)$
\State \Return $(K,I_K,O_K)$
\end{algorithmic}
\end{algorithm}

\begin{reptheorem}{\ref{thm:compression}}
	\mainCompressionThm{\eat}
\end{reptheorem}
\begin{proof}[Proof of Theorem~\ref{thm:compression}]
Let $F=\Tor(H,R)$, and let $\mathcal{S}$ denote the family of distinct
nonempty neighborhoods $S=N_H(C)$ (over all components $C$ of
$H\sminus R$) for which $F\sminus S$ has no full component with respect
to $S$. Every $S\in\mathcal{S}$ is a clique in $F$ by construction of
the torso, and hence $S\in\PMC(F)$ by Lemma~\ref{thm:pmc}. By
Lemma~\ref{lem:clique-count}, $|\mathcal{S}|\le|R|$. Since
Algorithm~\ref{alg:compress} adds one fresh vertex $z_S$ for each
$S\in\mathcal{S}$, we have $\nodes(K)\cap\nodes(H)=R$ and
$|\nodes(K)|=|R|+|\mathcal{S}|\le 2|R|$.

\Needspace{6\baselineskip}
Now let $J$ be any graph with $\nodes(H)\cap\nodes(J)=\nodes(K)\cap\nodes(J)=Z$,
and let $\Omega\subseteq R\cup\nodes(J)$ be nonempty. Starting from
$F\oplus_Z J$, we add the auxiliary vertices $\{z_S:S\in\mathcal{S}\}$
one by one to obtain $K\oplus_Z J$. Each $S\in\mathcal{S}$ is a clique
in $F\oplus_Z J$ and remains a clique as auxiliary vertices are added,
and $z_S\notin\Omega$. Applying Lemma~\ref{lem:marker} inductively over
all $S\in\mathcal{S}$ yields
\[
\Omega\in\PMC(K\oplus_Z J)
\quad\Longleftrightarrow\quad
\Omega\in\PMC(F\oplus_Z J)
\text{ and }\Omega\notin\mathcal{S}.
\]

We claim that whenever $\Omega\in\PMC(F\oplus_Z J)$, the condition
$\Omega\notin\mathcal{S}$ is equivalent to having $\Omega\ne N_H(C)$ for
every component $C$ of $H\sminus R$. The reverse implication is
immediate since every member of $\mathcal{S}$ is of the form $N_H(C)$.
For the forward implication, suppose $\Omega\in\PMC(F\oplus_Z J)$ and
$\Omega\notin\mathcal{S}$, and let $C$ be any component of $H\sminus R$
with $S=N_H(C)$. If $S=\emptyset$, then $S\ne\Omega$ because $\Omega$ is
nonempty. If $S\ne\emptyset$ and $S\notin\mathcal{S}$, then by
definition of $\mathcal{S}$, $F\sminus S$ has a component $D$ with
$N_F(D)=S$. In $(F\oplus_Z J)\sminus S$, the component $D'$ containing
$D$ still satisfies $N_{F\oplus_Z J}(D')=S$, so $S\notin\PMC(F\oplus_Z J)$
and therefore $S\ne\Omega$. Combining this equivalence with
Lemma~\ref{lem:torso} gives
\[
\Omega\in\PMC(K\oplus_Z J)
\quad\Longleftrightarrow\quad
\Omega\in\PMC(H\oplus_Z J),
\]
establishing~\eqref{eq:compression-pmc}.

For the running time, let $n=|\nodes(H)|\ge1$. Finding the at most $n$
components of $H\sminus R$ and their neighborhoods $S_C$ takes $O(n^2)$
time, and adding the clique edges within each $S_C$ to form $F$ takes
$O(n^3)$ time in total. Deduplicating the bit vectors $S_C$ and testing,
for each distinct nonempty $S=S_C$, whether $F\sminus S$ has a full
component takes $O(n^2)$ time per neighborhood and $O(n^3)$ time overall.
Adding the at most $|R|$ auxiliary vertices and forming $(I_K,O_K)$
takes $O(n^2)$ time. Overall, the construction runs in $O(n^3)$ time.
\end{proof}

As noted in Section~\ref{sec:general}, Theorem~\ref{thm:compression}
preserves only those PMCs $\Omega$ whose intersection with $\nodes(H)$
is contained in $R$. For instance, if $H$ is the three-vertex path
$a$--$b$--$c$ with $Z=I_H=O_H=\emptyset$ and $k=2$, then
$\PMC_2(H,\emptyset,\emptyset)=\{\{a,b\},\{b,c\}\}$. If we were to choose
$R=\{b\}$, then $F=\Tor(H,R)$ would be the single vertex $b$, and
because the deleted components $\{a\}$ and $\{c\}$ both have neighborhood
$\{b\}$, $\Compress$ would add the excluded vertex $z_{\{b\}}$ and leave $\PMC_2(K,\emptyset,O_K)=\emptyset$. In the next section, we show how to
compute a set $R$ of bounded size that is guaranteed to retain at least
one solution whenever one exists.

\section{Choosing the vertices to retain}\label{sec:retained}
When the input graph $G$ is $(q,k)$-breakable, it admits a separation
$(L,Z,U)$ with $|Z|\le k$, $|L|>q$, and $|U|>q$, decomposing $G$ as
$G=H\oplus_Z J$ where
\[
H\eqdef G[L\cup Z],\qquad J\eqdef G[U\cup Z].
\]
In this section, we assume that the piece $H$ is $(q,k)$-unbreakable
(we remove this assumption in Section~\ref{sec:general-algorithm}, where
we handle pieces that are themselves breakable). Our goal is to replace
$H$ with a graph $K$ of size bounded in terms of $k$ such that
$\nodes(K)\cap\nodes(J)=Z$ and replacing $H$ by $K$ preserves the
existence of a constrained PMC of size at most $k$.

By Corollary~\ref{cor:compression-constraints}, once we fix a retained
subset $R\subseteq\nodes(H)$ containing $Z\cup(I\cap\nodes(H))$, the
procedure $\Compress$ constructs a graph $K$ with $|\nodes(K)|\le 2|R|$
such that any candidate $\Omega\subseteq R\cup\nodes(J)$ belongs to
$\PMC_k(H\oplus_Z J,I,O)$ if and only if $\Omega$ belongs to
$\PMC_k(K\oplus_Z J,I_K\cup(I\cap\nodes(J)),O_K\cup(O\cap\nodes(J)))$.
To ensure that $K\oplus_Z J$ is an equivalent instance whenever
$\PMC_k(G,I,O)\ne\emptyset$, we must choose $R$ of size bounded in terms
of $k$ so that $\PMC_k(G,I,O)$ contains at least one witness $Q$
satisfying $Q\cap\nodes(H)\subseteq R$.

In fact, we establish a universal guarantee without inspecting the
outside subgraph $J$: given $(H,Z,k,I_H,O_H,q)$, we construct a set
$R\subseteq\nodes(H)$ that works simultaneously for \emph{every} graph
$J'$ satisfying $\nodes(H)\cap\nodes(J')=Z$ and every choice of
constraints $I_{J'},O_{J'}\subseteq\nodes(J')$. Whenever
$H\oplus_Z J'$ admits a constrained PMC of size at most $k$, at least
one such PMC intersects $\nodes(H)$ solely within $R$. Combined with
$\Compress$, the resulting graph $K$ can replace $H$ inside any host
graph attached along $Z$.

The key observation behind our construction of $R$ is that although an
outside graph $J'$ may be arbitrarily large, any PMC $\Omega$ of size at
most $k$ in $H\oplus_Z J'$ selects a subset
$X=\Omega\cap\nodes(J')$ of at most $k$ vertices from $J'$. As we show
in Lemma~\ref{lem:retained-small}, for any fixed $X\subseteq\nodes(J')$
of size at most $k$, the torso $\Tor(J',Z\cup X)$ augmented with at most
one excluded auxiliary vertex yields a graph $J'_X$ on at most
$|Z|+k+1$ vertices such that a subset of $\nodes(H)$ extends $X$ to a
PMC in $H\oplus_Z J'$ if and only if it extends $X$ to a PMC in
$H\oplus_Z J'_X$.

We can therefore enumerate, up to vertex naming outside $H$, all simple
test graphs $T$ consisting of $Z$ and at most $k+1$ fresh vertices,
together with all subsets $I_T\subseteq\nodes(T)$ of size at most $k$
and $O_T=\nodes(T){\setminus}I_T$. Because $H$ is $(q,k)$-unbreakable
and $T$ adds at most $k+1$ vertices outside $H$, each attached graph
$H\oplus_Z T$ is $(q+k+1,k)$-unbreakable. We invoke $\Local$ with
threshold $q+k+1$ on $(H\oplus_Z T,k,I_H\cup I_T,O_H\cup O_T)$ for every
such test triple and collect the intersections with $\nodes(H)$ of all
returned PMCs into $R$. If $H\oplus_Z J'$ admits a constrained PMC
$\Omega$ with $\Omega\cap\nodes(J')=X$, the test triple corresponding
to $(J'_X,X,\nodes(J'_X){\setminus}X)$ succeeds, and replacing the test
vertices in its returned witness $\Omega_{T,I_T}$ with $X{\setminus}Z$
produces a valid constrained PMC $Q$ of $H\oplus_Z J'$ satisfying
$Q\cap\nodes(H)=\Omega_{T,I_T}\cap\nodes(H)\subseteq R$.

Theorem~\ref{thm:retained} states the formal guarantee and runtime bound
for constructing $R$. We keep $b=|Z|$ explicit in the statement because
our recursive reduction in Section~\ref{sec:general-algorithm} applies
this theorem to boundaries of size $b\le 2k$.

\def\thmRetained#1{
	Let $q,k\ge1$ be integers, let $H$ be a $(q,k)$-unbreakable graph with
	$n=|\nodes(H)|\ge1$, and let $Z,I_H,O_H\subseteq\nodes(H)$,
	where $|I_H|\le k$. Put
	\[
	b=|Z|,\qquad m=b+k+1,\qquad
	\tau_{b,k}=(k+2)2^{\binom m2+m}.
	\]
	There is a deterministic algorithm that returns a set
	$Z\cup I_H\subseteq R\subseteq\nodes(H)$ with
	$|R|\le b+k\tau_{b,k}$ and the following guarantee.
	For every graph $J$ with $\nodes(H)\cap\nodes(J)=Z$ and
	every $I_J,O_J\subseteq\nodes(J)$, if
	\[
	\PMC_k(H\oplus_Z J,I_H\cup I_J,O_H\cup O_J)\ne\emptyset,
	\]
	then there exists
	\begin{equation}#1{eq:retained-guarantee}
		\Omega\in\PMC_k(H\oplus_Z J,I_H\cup I_J,O_H\cup O_J)
		\quad\text{such that}\quad
		\Omega\cap\nodes(H)\subseteq R.
	\end{equation}
	The running time is
	\begin{equation}#1{eq:retained-time}
		2^{O((b+k)^2+k\log(2+q/k))}n^3.
	\end{equation}
	In particular, if $b=O(k)$, then $|R|\le2^{O(k^2)}$ and
	the running time is $2^{O(k^2+k\log(2+q/k))}n^3$.
}
\begin{theorem}[Choosing the retained vertices]\label{thm:retained}
\thmRetained{\label}
\end{theorem}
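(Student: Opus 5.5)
The plan follows the outline sketched before the statement, and it has three parts.

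\emph{Step 1 (the reduction lemma).} First I will prove the simulation lemma (Lemma~\ref{lem:retained-small}). Fix $J$ and a set $X\subseteq\nodes(J)$ with $|X\setminus Z|\le k$. Let $J_X$ be $\Tor(J,Z\cup X)$, plus one auxiliary vertex $z$ adjacent exactly to $X$ whenever some component of $J\sminus(Z\cup X)$ has neighborhood exactly $X$. The claim is that for every $Y\subseteq\nodes(H)$ with $Y\cap Z=X\cap Z$, the set $\Omega=Y\cup X$ is a PMC of $H\oplus_Z J$ if and only if it is a PMC of $H\oplus_Z J_X$.

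The argument mirrors Lemma~\ref{lem:torso}, with the roles of $H$ and $J$ swapped and with retained set $Z\cup X$. Applying Lemma~\ref{lem:torso} (with $H\oplus_Z J$ viewed as $J\oplus_Z H$) shows that passing to the torso changes membership only for candidates equal to some $N_J(C)$. Because $\Omega$ has the form $Y\cup X$, with $Y\subseteq\nodes(H)$ and $N_J(C)\subseteq Z\cup X$, the only way to have $\Omega=N_J(C)$ is $\Omega=X$ with $Y\setminus Z=\emptyset$.

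This one bad case is repaired by a single vertex $z$ with $N(z)=X$. I cannot directly cite Lemma~\ref{lem:marker}, since $X$ need not be a clique. So I will verify by hand that adding $z$ affects only the candidate $\Omega=X$. For any other $\Omega\supseteq X$, $z$ is a component of $G\sminus\Omega$ whose neighborhood $X$ is a proper subset of $\Omega$. It therefore creates no full component, and it provides no new pair witness, because $X\subseteq\Omega$ means a path through $z$ witnesses only pairs inside $X$ that the original component $C$ already witnessed. The graph $J_X$ has at most $m=b+k+1$ vertices.

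\emph{Step 2 (the algorithm).} Enumerate every graph $T$ on $Z$ together with at most $k+1$ fresh labeled vertices that agrees with $J$-side edges arbitrarily on $Z$. There are at most $2^{\binom m2}\cdot(k+2)$ such graphs, counting the choices of how many fresh vertices to add. For each $T$, enumerate every $I_T\subseteq\nodes(T)$ with $|I_T|\le k$ (at most $2^m$ choices) and set $O_T=\nodes(T)\setminus I_T$. This gives at most $\tau_{b,k}$ test triples.

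$H\oplus_Z T$ is $(q+k+1,k)$-unbreakable. To see this, take a separation of order at most $k$ in $H\oplus_Z T$ and restrict it to $H$, which gives a separation of $H$ with the same separator intersected with $\nodes(H)$. One side then has at most $q$ vertices of $H$, and at most $k+1$ non-$H$ vertices can be added to it.

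Run $\Local$ with threshold $q+k+1$ on each test instance and add $\Omega\cap\nodes(H)$ to $R$; each such intersection has at most $k$ vertices. This gives $|R|\le b+k+k\tau_{b,k}$. The small additive excess over the stated bound is handled by noting that $I_H$ is contained in every returned witness, so $I_H$ need only be added when no witness exists. Alternatively one can absorb it, since $I_H\subseteq\Omega$ whenever any test succeeds.

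The running time is $\tau_{b,k}\cdot2^{O(k\log(2+q/k))}(n+m)^3$, and $\tau_{b,k}=2^{O((b+k)^2)}$, which gives the stated bound.

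\emph{Step 3 (the guarantee).} Given a solution $\Omega$ of $H\oplus_Z J$, set $X=\Omega\cap\nodes(J)$ and $Y=\Omega\cap\nodes(H)$. Then $J_X$, after renaming, is one of the tests $T$, and $I_T=X$ satisfies $I_J\subseteq X$ and $O_J\cap X=\emptyset$.

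By Step 1, $\Omega$ is a PMC of $H\oplus_Z T$ that includes $I_H\cup X$ and avoids $O_H\cup O_T$. Hence $\Local$ returns some witness $\Omega'$. Because $O_T=\nodes(T)\setminus X$ is excluded and $X$ is required, $\Omega'\cap\nodes(T)=X$ exactly. The auxiliary vertex is excluded as well.

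Now $\Omega'=Y'\cup X$ with $Y'\subseteq R$ and $Y'\cap Z=X\cap Z$. Applying Step 1 in reverse shows that $\Omega'$ is a PMC of $H\oplus_Z J$, and it satisfies all constraints, since the constraints on $H$ are the same and $X$ already satisfied $I_J,O_J$.

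\emph{Main obstacle.} The hardest step is Step 1, specifically the pair condition and the full-component condition across the torso when $\Omega$ meets $Z$ in varying ways. A second delicate point is confirming that the single auxiliary vertex $z$ adjacent to the non-clique $X$ alters no candidate other than $\Omega=X$. I would first try to derive Step 1 by applying Lemma~\ref{lem:torso} verbatim with the sides swapped, and then do the direct case check for $z$.
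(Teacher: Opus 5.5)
Your proposal is correct and follows essentially the same route as the paper. Both use the torso-plus-one-auxiliary-vertex simulation $J_X$ (Lemma~\ref{lem:retained-small}), run the $\tau_{b,k}$ test triples through \Local{} with threshold $q+k+1$, absorb $I_H$ into any successful witness to get the size bound, and obtain the guarantee by renaming and applying the lemma in reverse.

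One small correction: when $z$ is added you have $X=N_J(C)$, and the torso turns that set into a clique, so Lemma~\ref{lem:marker} does apply directly, as in the paper. Your hand check is still valid, because every candidate contains $X$ and so $\{z\}$ is always an isolated component.
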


Note that if $|I_H|>k$, no attachment can admit a constrained PMC of
size at most $k$; if $H$ is empty, then $Z=I_H=\emptyset$ and
$R=\emptyset$ trivially satisfies Theorem~\ref{thm:retained}.

\subsection{A small attachment after fixing the selected outside vertices}
\label{sec:retained-small}

Let $H$ and $J$ be graphs with $\nodes(H)\cap\nodes(J)=Z$, and fix a
subset $X\subseteq\nodes(J)$ representing the vertices of $J$ selected
by a candidate $\Omega\subseteq\nodes(H)\cup\nodes(J)$ (so
$\Omega\cap\nodes(J)=X$, which in particular fixes $\Omega\cap Z=X\cap Z$).
The next lemma shows that $J$ can be compressed into a graph $J_X$ on
$Z\cup X$ plus at most one auxiliary vertex such that $J_X$ preserves
PMC membership for all candidates $\Omega$ satisfying
$\Omega\cap\nodes(J)=X$.

\begin{lemma}\label{lem:retained-small}
Let $H,J$ be graphs with $\nodes(H)\cap\nodes(J)=Z$, and fix
$X\subseteq\nodes(J)$. There is a graph $J_X$, constructed from
$J,Z,X$, such that
\[
 Z\cup X\subseteq\nodes(J_X),\qquad
 |\nodes(J_X)|\le|Z\cup X|+1,\qquad
 \nodes(J_X)\cap\nodes(H)=Z.
\]
For every nonempty $\Omega\subseteq\nodes(H)\cup\nodes(J)$
with $\Omega\cap\nodes(J)=X$,
\begin{equation}\label{eq:retained-small}
 \Omega\in\PMC(H\oplus_Z J)
 \quad\Longleftrightarrow\quad
 \Omega\in\PMC(H\oplus_Z J_X).
\end{equation}
Any vertex of $J_X$ outside $Z\cup X$ is a new auxiliary vertex.
\end{lemma}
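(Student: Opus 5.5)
The plan is to build $J_X$ in two stages that mirror the compression of Section~\ref{sec:compression}, now applied to the piece $J$ with retained set $R_J\eqdef Z\cup X$. First, form the torso $F_J\eqdef\Tor(J,Z\cup X)$. Second, decide whether to add one auxiliary vertex $z_X$ with neighborhood exactly $X$. We add $z_X$ if and only if some component $C$ of $J\sminus(Z\cup X)$ satisfies $N_J(C)=X$. If no such component exists, set $J_X\eqdef F_J$. This gives $|\nodes(J_X)|\le|Z\cup X|+1$ and $\nodes(J_X)\cap\nodes(H)=Z$, since $X\setminus Z$ and $z_X$ are disjoint from $\nodes(H)$.

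The key step is to apply Lemma~\ref{lem:torso} with the roles of $H$ and $J$ swapped. We use $J$ as the piece, $Z\cup X$ as the retained set, and $H$ as the outside graph. Any $\Omega$ with $\Omega\cap\nodes(J)=X$ satisfies $\Omega\subseteq (Z\cup X)\cup\nodes(H)$, so Lemma~\ref{lem:torso} gives the following equivalence. $\Omega\in\PMC(H\oplus_Z J)$ holds if and only if $\Omega\in\PMC(H\oplus_Z F_J)$ and $\Omega\ne N_J(C)$ for every component $C$ of $J\sminus(Z\cup X)$. The main observation is that, since $\Omega\cap\nodes(J)=X$ and every $N_J(C)\subseteq Z\cup X\subseteq\nodes(J)$, the equality $\Omega=N_J(C)$ can hold only if $\Omega=X$ and $N_J(C)=X$. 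So the whole family of forbidden neighborhoods collapses, for our candidates, to the single condition ``$\Omega\ne X$ whenever $X$ is the neighborhood of some deleted component''.

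It remains to realize that one extra condition. If no component has neighborhood $X$, then $J_X=F_J$ and we are done. Otherwise we add $z_X$ adjacent to $X$ and apply Lemma~\ref{lem:marker} to the graph $H\oplus_Z F_J$ with clique $S=X$. The set $X$ equals $N_J(C)$ for a component $C$ of $J\sminus(Z\cup X)$, so it is a clique in $F_J$ and hence in $H\oplus_Z F_J$. Also $z_X\notin\Omega$. Lemma~\ref{lem:marker} then gives that $\Omega\in\PMC(H\oplus_Z J_X)$ holds if and only if $\Omega\in\PMC(H\oplus_Z F_J)$ and $\Omega\ne X$. Combining this with the collapsed torso condition yields~\eqref{eq:retained-small}.

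The main care point is the bookkeeping when invoking Lemma~\ref{lem:torso} with swapped roles. We must check that $Z\subseteq Z\cup X\subseteq\nodes(J)$ and that $H$ shares exactly $Z$ with $J$ and with $F_J$; both hold by construction. We must also handle the degenerate cases. If $X=\emptyset$, then a component with empty neighborhood would give $\Omega=\emptyset$, which is excluded because $\Omega$ is nonempty; so we add $z_X$ only when $X\neq\emptyset$ and never need the marker in this case. We must also verify that $z_X$ is a fresh auxiliary vertex. None of these steps needs new ideas beyond the two cited lemmas.
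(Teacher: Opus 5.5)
Your proposal is correct and follows the paper's proof essentially step for step. Like the paper, you take $J_X$ to be $\Tor(J,Z\cup X)$ plus one fresh vertex adjacent exactly to $X$ when some component of $J\sminus(Z\cup X)$ has neighborhood $X$ (with $X\neq\emptyset$, which you state only in your degenerate-case paragraph whereas the paper builds it into the construction), apply Lemma~\ref{lem:torso} with the roles of $H$ and $J$ swapped, collapse the forbidden-neighborhood condition to $\Omega\neq X$, and finish with Lemma~\ref{lem:marker}.
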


\begin{proof}
Set $W\eqdef Z\cup X$ and $F\eqdef \Tor(J,W)$. If $X\ne\emptyset$ and
some connected component $C$ of $J\sminus W$ satisfies $N_J(C)=X$, we
define $J_X$ by adding to $F$ a single fresh vertex $z$ (disjoint from
$\nodes(H)\cup\nodes(J)$) adjacent precisely to $X$; otherwise, we set
$J_X=F$. By construction, $Z\cup X\subseteq\nodes(J_X)$,
$|\nodes(J_X)|\le|Z\cup X|+1$, and $\nodes(J_X)\cap\nodes(H)=Z$.

Let $\Omega\subseteq\nodes(H)\cup\nodes(J)$ be any nonempty set with
$\Omega\cap\nodes(J)=X$, so $\Omega\subseteq\nodes(H)\cup W$ and
$\Omega\cap W=X$. Applying Lemma~\ref{lem:torso} with $J$ in the role of
the compressed piece, $W$ as its retained set, and $H$ as the graph
attached along $Z$ yields
\begin{equation}\label{eq:retained-torso}
 \begin{aligned}
 \Omega\in\PMC(H\oplus_Z J)
 \quad\Longleftrightarrow\quad &
 \Omega\in\PMC(H\oplus_Z F)\ \text{and}\\
 &\Omega\ne N_J(C)\quad
 \text{for every component }C\text{ of }J\sminus W.
 \end{aligned}
\end{equation}
Because $N_J(C)\subseteq W$ for every component $C$ of $J\sminus W$,
the equality $\Omega=N_J(C)$ can hold only if $\Omega\subseteq W$, which
forces $\Omega=\Omega\cap W=X$. Thus, among candidates satisfying
$\Omega\cap\nodes(J)=X$, the condition $\Omega\ne N_J(C)$ in~\eqref{eq:retained-torso}
can be violated only when $\Omega=X=N_J(C)$.

If $J_X=F$ (no auxiliary vertex was added), then either $X=\emptyset$
(whereas $\Omega\ne\emptyset$) or no component $C$ of $J\sminus W$ has
$N_J(C)=X$; in either case, $\Omega\ne N_J(C)$ holds automatically for
all components $C$ of $J\sminus W$, and~\eqref{eq:retained-torso}
reduces to~\eqref{eq:retained-small}.

If instead the auxiliary vertex $z$ was added to form $J_X$, then some
component $C$ of $J\sminus W$ has $N_J(C)=X$, so the second condition
in~\eqref{eq:retained-torso} simplifies to $\Omega\ne X$. Because
$X=N_J(C)$, the torso operation turns $X$ into a clique in $F$ (and
hence in $H\oplus_Z F$), and $z\notin\Omega$. Applying
Lemma~\ref{lem:marker} to $H\oplus_Z F$ and the clique $X$ gives
\[
 \Omega\in\PMC(H\oplus_Z J_X)
 \quad\Longleftrightarrow\quad
 \Omega\in\PMC(H\oplus_Z F)\text{ and }\Omega\ne X,
\]
which matches~\eqref{eq:retained-torso} and completes the proof.
\end{proof}

Notice that if we impose the inclusion constraint $I_{J_X}=X$ and
exclusion constraint $O_{J_X}=\nodes(J_X){\setminus}X$ on $J_X$ (which
in particular excludes the auxiliary vertex $z$, if present, as well as
$Z{\setminus}X$), then every constrained PMC $\Omega$ of $H\oplus_Z J_X$
satisfies $\Omega\subseteq\nodes(H)\cup X$ and $\Omega\cap\nodes(J)=X$,
so Lemma~\ref{lem:retained-small} applies in both directions.

\subsection{Retaining one solution for each successful test}
\label{sec:retained-tests}
Lemma~\ref{lem:retained-small} allows us to construct $R\subseteq\nodes(H)$
by testing all possible small attachments $T$ on $Z$ plus at most $k+1$
fresh vertices. Specifically, fix $k+1$ distinct fresh vertex names
$p_1,\ldots,p_{k+1}\notin\nodes(H)$. For each $s\in\{0,\ldots,k+1\}$,
we enumerate every simple graph $T$ on vertex set
$Z\cup\{p_1,\ldots,p_s\}$ (including all possible edge subsets within
$Z$) and every subset $I_T\subseteq\nodes(T)$ with $|I_T|\le k$, setting
$O_T=\nodes(T){\setminus}I_T$. Let $\mathcal{T}_{Z,k}$ denote the family
of all such test triples $(T,I_T,O_T)$. A test triple
$(T,I_T,O_T)\in\mathcal{T}_{Z,k}$ is called \emph{successful} if
\[
 \PMC_k(H\oplus_Z T,I_H\cup I_T,O_H\cup O_T)\ne\emptyset.
\]
Setting $b=|Z|$ and $m=b+k+1$, the number of test triples in
$\mathcal{T}_{Z,k}$ is bounded by
\begin{equation}\label{eq:retained-test-count}
 |\mathcal{T}_{Z,k}|
 =\sum_{s=0}^{k+1}2^{\binom{b+s}{2}}
     \sum_{i=0}^{\min\{k,b+s\}}\binom{b+s}{i}
 \le (k+2)2^{\binom m2+m}=\tau_{b,k},
\end{equation}
where $2^{\binom{b+s}{2}}$ counts the possible edge sets on
$Z\cup\{p_1,\ldots,p_s\}$ and the inner sum counts the choices of
$I_T$.

\begin{lemma}\label{lem:retained-witnesses}
	Let $H$ be a graph, let $k\ge1$ be an integer, and let
	$Z,I_H,O_H\subseteq\nodes(H)$.
	Fix distinct vertices $p_1,\ldots,p_{k+1}$ outside $\nodes(H)$.
	Define $\mathcal{T}_{Z,k}$ to be the family of all triples
	$(T,I_T,O_T)$ where:
	\begin{enumerate}
		\item $T$ is a simple graph on $Z\cup\{p_1,\ldots,p_s\}$
		for some $s\in\{0,\ldots,k+1\}$;
		\item $I_T\subseteq\nodes(T)$ and $|I_T|\le k$;
		\item $O_T=\nodes(T){\setminus} I_T$.
	\end{enumerate}
	A triple $(T,I_T,O_T)$ is \emph{successful} if 
	$\PMC_k(H\oplus_Z T,I_H\cup I_T,O_H\cup O_T)\neq \emptyset$.
	For each successful triple, choose one such PMC $\Omega_{T,I_T}$,
	and put
	\[
	R=Z\cup I_H\cup
	\bigcup_{\substack{(T,I_T,O_T)\in\mathcal{T}_{Z,k}\\
			\text{successful}}}
	(\Omega_{T,I_T}\cap\nodes(H)).
	\]
	For every graph $J$ with $\nodes(H)\cap\nodes(J)=Z$
	and every $I_J,O_J\subseteq\nodes(J)$, if $\PMC_k(H\oplus_Z J,I_H\cup I_J, O_H\cup O_J)\neq \emptyset$, then this family contains
	a PMC $\Omega$ satisfying $\Omega\cap\nodes(H)\subseteq R$.
\end{lemma}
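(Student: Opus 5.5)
Fix $J$, $I_J$, $O_J$ and a witness $\Omega^*\in\PMC_k(H\oplus_Z J,I_H\cup I_J,O_H\cup O_J)$. Put $X=\Omega^*\cap\nodes(J)$, so $|X|\le k$. The plan is to compress $J$ to the graph $J_X$ of Lemma~\ref{lem:retained-small}, rename it into one of the test graphs, and then transport the chosen witness of that test back to $H\oplus_Z J$.

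\textbf{Step 1: the test graph.} Build $J_X$ as in Lemma~\ref{lem:retained-small}. Its vertex set is $Z\cup X$ plus possibly one auxiliary vertex, so $|\nodes(J_X){\setminus}Z|\le|X{\setminus}Z|+1\le k+1$. Choose a bijection $\phi$ that is the identity on $Z$ and maps $\nodes(J_X){\setminus}Z$ onto $\{p_1,\ldots,p_s\}$ for the appropriate $s\le k+1$. The renaming is harmless because these vertices meet $H$ only through $Z$. Let $T=\phi(J_X)$, $I_T=\phi(X)$ and $O_T=\nodes(T){\setminus}I_T$. Since $|I_T|=|X|\le k$, the triple $(T,I_T,O_T)$ lies in $\mathcal{T}_{Z,k}$.

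\textbf{Step 2: the test is successful.} Consider $\phi(\Omega^*)=(\Omega^*\cap\nodes(H))\cup\phi(X)$. By the forward direction of Lemma~\ref{lem:retained-small}, $\Omega^*\in\PMC(H\oplus_Z J_X)$, hence $\phi(\Omega^*)\in\PMC(H\oplus_Z T)$. Now check the constraints.
\begin{itemize}
\item The inclusion constraints hold: $I_H\subseteq\Omega^*\cap\nodes(H)$ and $I_T=\phi(X)$.
\item The set $O_H$ is avoided because $\Omega^*$ avoids it.
\item The set $O_T$ is avoided because $\phi(\Omega^*)\cap\nodes(T)=\phi(X)=I_T$. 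This uses $\Omega^*\cap Z=X\cap Z$.
\item The size is unchanged, so it is still at most $k$.
\end{itemize}
Hence the triple is successful, and a witness $\Omega_{T,I_T}$ was chosen for it.

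\textbf{Step 3: pulling the witness back.} Put $\Omega=\phi^{-1}(\Omega_{T,I_T})$. The constraints $I_T$ and $O_T$ force $\Omega_{T,I_T}\cap\nodes(T)=I_T$. Therefore $\Omega=(\Omega_{T,I_T}\cap\nodes(H))\cup X$, which gives $\Omega\cap\nodes(J)=X$ and $\Omega\cap\nodes(H)=\Omega_{T,I_T}\cap\nodes(H)\subseteq R$. Also $\Omega\in\PMC(H\oplus_Z J_X)$. Since the hypothesis $\Omega\cap\nodes(J)=X$ of Lemma~\ref{lem:retained-small} holds, its reverse direction gives $\Omega\in\PMC(H\oplus_Z J)$. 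The constraints of the original instance are then checked one by one.
\begin{itemize}
\item $I_H\subseteq\Omega$ and $O_H\cap\Omega=\emptyset$ are inherited from the test instance.
\item $|\Omega|\le k$ is inherited as well.
\item $I_J\subseteq X$ holds because $I_J\subseteq\Omega^*\cap\nodes(J)$, so $I_J\subseteq\Omega$.
\item $O_J\cap X=\emptyset$ for the same reason, so $\Omega$ avoids $O_J$.
\end{itemize}
Note that $O_H$ and $O_J$ may intersect on $Z$, but both are handled consistently because $\Omega\cap Z=X\cap Z=\Omega^*\cap Z$.

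The main obstacle is the bookkeeping on the boundary $Z$ in Steps 2 and 3. One must check that the constraint $O_T=\nodes(T){\setminus}I_T$, which excludes $Z{\setminus}X$ and the auxiliary vertex, pins $\Omega\cap\nodes(J)$ to exactly $X$ in both directions. One must also check that the renaming $\phi$ commutes with PMC membership. The PMC equivalence itself is supplied entirely by Lemma~\ref{lem:retained-small}.
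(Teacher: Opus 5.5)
Your proposal is correct and follows essentially the same route as the paper. You build $J_X$ via Lemma~\ref{lem:retained-small}, rename it into a test triple with $I_T=\phi(X)$ and $O_T=\nodes(T){\setminus}I_T$, use the forward direction to show the test succeeds, and pull the chosen witness back through the reverse direction, with the constraints pinning its intersection with $\nodes(J_X)$ to exactly $X$; your extra bookkeeping on $Z$ (e.g., $\Omega_{T,I_T}\cap Z=X\cap Z$) just spells out details the paper leaves implicit.
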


\begin{proof}
Let $J$ be any graph with $\nodes(H)\cap\nodes(J)=Z$, let
$I_J,O_J\subseteq\nodes(J)$, and suppose there exists
$\Omega\in\PMC_k(H\oplus_Z J,I_H\cup I_J,O_H\cup O_J)$. Set
$X=\Omega\cap\nodes(J)$, so $|X|\le|\Omega|\le k$, $I_J\subseteq X$, and
$X\cap O_J=\emptyset$. Let $J_X$ be the graph obtained from $(J,Z,X)$
by Lemma~\ref{lem:retained-small}. Since
$|\nodes(J_X){\setminus}Z|\le|X{\setminus}Z|+1\le k+1$, renaming the
$s=|\nodes(J_X){\setminus}Z|$ vertices of $\nodes(J_X){\setminus}Z$ to
$p_1,\ldots,p_s$ while keeping all vertices of $\nodes(H)$ (and thus of
$Z$) fixed maps $(J_X,X,\nodes(J_X){\setminus}X)$ to a test triple
$(T,I_T,O_T)\in\mathcal{T}_{Z,k}$.

By Lemma~\ref{lem:retained-small}, since $\Omega\in\PMC(H\oplus_Z J)$
and $\Omega\cap\nodes(J)=X$, we have $\Omega\in\PMC(H\oplus_Z J_X)$.
Because $\Omega$ has size at most $k$, satisfies $(I_H,O_H)$ on $H$, and
intersects $\nodes(J_X)$ in $X$, its image under the vertex renaming on
$\nodes(J_X){\setminus}Z$ belongs to
$\PMC_k(H\oplus_Z T,I_H\cup I_T,O_H\cup O_T)$. Thus $(T,I_T,O_T)$ is
successful.

Now let $\Omega_{T,I_T}\in\PMC_k(H\oplus_Z T,I_H\cup I_T,O_H\cup O_T)$
be the witness chosen for $(T,I_T,O_T)$, and let $\Omega'$ be the set
obtained from $\Omega_{T,I_T}$ by applying the inverse renaming from
$\{p_1,\ldots,p_s\}$ back to $\nodes(J_X){\setminus}Z$. Then
$\Omega'\in\PMC_k(H\oplus_Z J_X,I_H\cup X,O_H\cup(\nodes(J_X){\setminus}X))$.
In particular, $\Omega'\cap\nodes(J_X)=X\subseteq\nodes(J)$, so
$\Omega'\subseteq\nodes(H)\cup\nodes(J)$ and $\Omega'\cap\nodes(J)=X$.
By the reverse direction of Lemma~\ref{lem:retained-small},
$\Omega'\in\PMC(H\oplus_Z J)$. Since $|\Omega'|=|\Omega_{T,I_T}|\le k$,
$I_J\subseteq X=\Omega'\cap\nodes(J)$, and
$O_J\cap\Omega'=O_J\cap X=\emptyset$, we have
\[
 \Omega'\in\PMC_k(H\oplus_Z J,I_H\cup I_J,O_H\cup O_J).
\]
Because the bijection between $\nodes(J_X)$ and $\nodes(T)$ is the
identity on $\nodes(H)$, we conclude that
$\Omega'\cap\nodes(H)=\Omega_{T,I_T}\cap\nodes(H)\subseteq R$.
\end{proof}

Notice that if every test in $\mathcal{T}_{Z,k}$ fails,
Lemma~\ref{lem:retained-witnesses} implies that
$\PMC_k(H\oplus_Z J,I_H\cup I_J,O_H\cup O_J)=\emptyset$ for every
outside attachment $J$, so returning $R=Z\cup I_H$ vacuously satisfies
the guarantee.

\subsection{Computing the retained set using \texorpdfstring{$\Local$}{Local}}
\label{sec:retained-algorithm}

Algorithm~\ref{alg:retained} ($\texttt{ChooseR}$) implements the
construction of Lemma~\ref{lem:retained-witnesses} by invoking $\Local$
(Algorithm~\ref{alg:local}) on each test instance $G_T=H\oplus_Z T$ with
unbreakability threshold $q+k+1$.

\begin{algorithm}[htbp]
\caption{$\texttt{ChooseR}(H,Z,k,I_H,O_H,q)$}\label{alg:retained}
\begin{algorithmic}[1]
\Require $H$ is $(q,k)$-unbreakable; $q,k\ge1$;
 $Z,I_H,O_H\subseteq\nodes(H)$; $|I_H|\le k$
\Ensure A retained set $R$ satisfying~\eqref{eq:retained-guarantee}
\State $R\gets Z\cup I_H$
\For{each $(T,I_T,O_T)\in\mathcal{T}_{Z,k}$, generated one at a time}
 \State $G_T\gets H\oplus_Z T$
 \State $I_T^+\gets I_H\cup I_T$; $O_T^+\gets O_H\cup O_T$
 \State $\Omega_{T,I_T}\gets\Local(G_T,k,I_T^+,O_T^+,q+k+1)$
 \If{$\Omega_{T,I_T}\ne\texttt{No}$}
  \State $R\gets R\cup(\Omega_{T,I_T}\cap\nodes(H))$
 \EndIf
\EndFor
\State \Return $R$
\end{algorithmic}
\end{algorithm}

\begin{reptheorem}{\ref{thm:retained}}
	\thmRetained{\eat}
\end{reptheorem}
\begin{proof}
We first verify that $G_T=H\oplus_Z T$ is $(q+k+1,k)$-unbreakable for
every $(T,I_T,O_T)\in\mathcal{T}_{Z,k}$. Let $(L',Z',U')$ be any
separation of $G_T$ of order $|Z'|\le k$. Restricting $(L',Z',U')$ to
$\nodes(H)$ gives a separation $(L'\cap\nodes(H),Z'\cap\nodes(H),U'\cap\nodes(H))$
of $H$ of order at most $k$. Because $H$ is $(q,k)$-unbreakable, at
least one side—say $L'\cap\nodes(H)$—has size at most $q$. Since
$|\nodes(G_T){\setminus}\nodes(H)|\le k+1$, we have
$|L'|\le|L'\cap\nodes(H)|+k+1\le q+k+1$, so $G_T$ is
$(q+k+1,k)$-unbreakable.

By Theorem~\ref{thm:main}, each call to
$\Local(G_T,k,I_T^+,O_T^+,q+k+1)$ in line~5 of
Algorithm~\ref{alg:retained} correctly returns a witness
$\Omega_{T,I_T}\in\PMC_k(G_T,I_T^+,O_T^+)$ whenever the test is
successful and returns $\texttt{No}$ otherwise. Lemma~\ref{lem:retained-witnesses}
therefore ensures that the returned set $R$ satisfies~\eqref{eq:retained-guarantee}.

To bound $|R|$, observe that if at least one test succeeds, then
$I_H\subseteq\Omega_{T,I_T}\cap\nodes(H)$ for every successful test, so
$R$ is the union of $Z$ (of size $b$) with at most $\tau_{b,k}$ sets
$\Omega_{T,I_T}\cap\nodes(H)$ of size at most $k$ each, so $|R|\le b+k\tau_{b,k}$. If every test fails, then
$R=Z\cup I_H$ has size at most $b+k\le b+k\tau_{b,k}$.

Finally, we analyze the running time. Each test graph $T$ has at most
$m=b+k+1$ vertices, and there are at most $\tau_{b,k}=2^{O((b+k)^2)}$
triples in $\mathcal{T}_{Z,k}$. For each triple, constructing $G_T$ on
at most $n+k+1$ vertices and updating $R$ takes $O((n+k+1)^2)$ time, and
by Theorem~\ref{thm:main}, the call to $\Local$ runs in time
\[
 2^{O(k\log(2+(q+k+1)/k))}(n+k+1)^3
 =2^{O(k\log(2+q/k))}(n+k+1)^3,
\]
since $q,k\ge1$ implies $\log(2+(q+k+1)/k)=\Theta(\log(2+q/k))$.
Multiplying by $|\mathcal{T}_{Z,k}|\le\tau_{b,k}$ and using
$n+k+1\le(k+2)n$ for $n\ge1$ yields the bound in~\eqref{eq:retained-time}.
\end{proof}

Combining $\texttt{ChooseR}$ (Algorithm~\ref{alg:retained}) with
$\Compress$ (Algorithm~\ref{alg:compress}) completes our replacement
procedure for any $(q,k)$-unbreakable piece $H$: we first compute
$R\gets\texttt{ChooseR}(H,Z,k,I_H,O_H,q)$ and then return
$(K,I_K,O_K)\gets\Compress(H,Z,R,I_H,O_H)$. By Theorem~\ref{thm:retained}
and Corollary~\ref{cor:compression-constraints}, $K$ has at most
$2(b+k\tau_{b,k})$ vertices, and for every outside graph $J$ attached
along $Z$ and constraints $I_J,O_J\subseteq\nodes(J)$, the instance on
$H\oplus_Z J$ admits a constrained PMC of size at most $k$ if and only
if the instance on $K\oplus_Z J$ does.

\section{Solving \texttt{Extend} on an arbitrary graph}\label{sec:general-algorithm}
When the input graph $G$ is $(q,k)$-unbreakable, Algorithm $\Local$
(Algorithm~\ref{alg:local}) solves \texttt{Extend} directly by
Theorem~\ref{thm:main}. When $G$ is $(q,k)$-breakable, it admits a
separation $(L,Z,U)$ with $|Z|\le k$, $|L|>q$, and $|U|>q$. If one of
the two pieces—say $G[L\cup Z]$—is $(q,k)$-unbreakable, then running
$\texttt{ChooseR}$ (Algorithm~\ref{alg:retained}) followed by
$\Compress$ (Algorithm~\ref{alg:compress}) produces a bounded replacement
graph $K$ such that
\[
\PMC_k(G,I,O)\neq\emptyset
\quad\Longleftrightarrow\quad
\PMC_k\bigl(K\oplus_Z G[U\cup Z],\,I,\,
(O\cap(U\cup Z))\cup O_K\bigr)\neq\emptyset,
\]
which reduces the number of vertices whenever
$|\nodes(K)|<|\nodes(G[L\cup Z])|$, even if the complementary piece
$G[U\cup Z]$ is breakable. In a general graph, however, both pieces of a
separation may be $(q,k)$-breakable.

To handle arbitrary graphs, we design a recursive procedure
$\texttt{Reduce}$ (Algorithm~\ref{alg:general-reduce}) that takes a
piece $H$ (which may already contain excluded auxiliary vertices from
earlier reductions), its inclusion and exclusion constraints $I_H,O_H$,
and its boundary $B\subseteq\nodes(H)$ of size $|B|\le 2k$, and returns
a compressed replacement $(K,I_K,O_K)$ of size $|\nodes(K)|\le r$ such
that $B\subseteq\nodes(K)$ and, for every outside graph $J$ with
$\nodes(H)\cap\nodes(J)=\nodes(K)\cap\nodes(J)=B$ and every choice of
constraints $I_J,O_J\subseteq\nodes(J)$,
\[
\begin{aligned}
	\PMC_k(H\oplus_B J,\,I_H\cup I_J,\,O_H\cup O_J)\neq\emptyset\
	&&\Longleftrightarrow&&
	\PMC_k(K\oplus_B J,\,I_K\cup I_J,\,O_K\cup O_J)\neq\emptyset.
\end{aligned}
\]

Specifically, let $r=r(k)=2^{O(k^2)}$ be the upper bound on the size of
the replacement returned by $\Compress$ when $|B|\le 2k$. If
$|\nodes(H)|\le r$, then $\texttt{Reduce}$ simply returns $(H,I_H,O_H)$
unchanged (line~\ref{line:HisSmall}). Otherwise, $\texttt{Reduce}$
invokes the deterministic cut-finding procedure of Golovach et
al.~\cite[Lemma~2, p.~18:3]{golovach2018} (built on Chitnis et
al.~\cite[Section~3.5]{chitnis2016}), which we denote by
$\texttt{FindCut}$. Given $(H,r,k)$, $\texttt{FindCut}$ either certifies
that $H$ is $(q,k)$-unbreakable for $q=(2r+1)r2^k$, or returns a
separation $(L,Z,U)$ of $H$ with $|Z|\le k$ and $|L|,|U|>r$. If
$\texttt{FindCut}$ certifies that $H$ is $(q,k)$-unbreakable,
$\texttt{Reduce}$ computes $R\gets\texttt{ChooseR}(H,B,k,I_H,O_H,q)$ and
returns $\Compress(H,B,R,I_H,O_H)$
(lines~\ref{line:ifHUnbreakable}--\ref{line:ifHUnbreakableEnd}).

If instead $\texttt{FindCut}$ returns a separation $(L,Z,U)$ with
$|Z|\le k$ and $|L|,|U|>r$, we orient the separation so that
$|B\cap L|\le|B\cap U|$, which guarantees that the new boundary
$B'=(B\cap L)\cup Z$ of $H_L=H[L\cup Z]$ satisfies $|B'|\le 2k$.
In line~\ref{line:firstrRecursiveCall}, $\texttt{Reduce}$ recursively
compresses $H_L$ with boundary $B'$ to obtain a graph $K_L$ on at most
$r$ vertices. Substituting $K_L$ for $H_L$ produces the intermediate
graph $H'=K_L\oplus_Z H[U\cup Z]$ (line~\ref{line:fotmH'}), which has
strictly fewer vertices than $H$ because $|L|>r$. Finally,
line~\ref{line:secondRecursiveCall} makes a second recursive call to
$\texttt{Reduce}$ on $H'$ with the original boundary $B$ and returns the
resulting replacement $(K,I_K,O_K)$.

Applying $\texttt{Reduce}$ at the top level to $(G,k,I,O,\emptyset)$
reduces the entire input graph $G$ in $O(n)$ recursive calls to an
equivalent instance $(K,I_K,O_K)$ on at most $r$ vertices. This overall reduction builds on the recursive-understanding scheme of Chitnis et al.~\cite[Section~2.1]{chitnis2016} and Lokshtanov et al.~\cite{lokshtanov2018}, and searching $K$ directly establishes our main theorem.

\begin{theorem}[Constrained PMC search on arbitrary graphs]
\label{thm:general-extend}
Let $G$ be an $n$-vertex graph, let $k\ge0$ be an integer, and let
$I,O\subseteq\nodes(G)$. There is a deterministic algorithm
$\texttt{Extend}$ that returns a set
$\Omega\in\PMC_k(G,I,O)$, or correctly reports that no such set exists.
For $n\ge2$, its running time is
\[
 2^{O(k^3)}n^4\log n.
\]
Inputs with at most one vertex are handled in constant time.
No unbreakability assumption is imposed on $G$.
\end{theorem}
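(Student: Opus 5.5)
We prove Theorem~\ref{thm:general-extend} by establishing, by induction on $|\nodes(H)|$, a specification for $\texttt{Reduce}$, and then applying it at the top level. The specification: for every piece $H$ with constraints $I_H,O_H$ (where $|I_H|\le k$) and boundary $B\subseteq\nodes(H)$ with $|B|\le2k$, $\texttt{Reduce}$ returns $(K,I_K,O_K)$ with $B\subseteq\nodes(K)$ and $|\nodes(K)|\le r$. It must also satisfy two guarantees for every outside $J$ meeting both $H$ and $K$ exactly in $B$ and every $I_J,O_J$. First, $\PMC_k(H\oplus_B J,\dots)\ne\emptyset$ iff $\PMC_k(K\oplus_B J,\dots)\ne\emptyset$. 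Second, every member of $\PMC_k(K\oplus_B J,I_K\cup I_J,O_K\cup O_J)$ is also a member of $\PMC_k(H\oplus_B J,I_H\cup I_J,O_H\cup O_J)$. We need this second, \emph{soundness}, guarantee so that the witness finally returned is valid in $G$.

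\textbf{Base cases.} If $|\nodes(H)|\le r$, both guarantees are trivial. If $\texttt{FindCut}$ certifies $(q,k)$-unbreakability, then Theorem~\ref{thm:retained} supplies $R$, with $b\le2k$ giving $|R|\le2^{O(k^2)}$. Corollary~\ref{cor:compression-constraints} then gives equivalence: existence transfers because some solution has $\Omega\cap\nodes(H)\subseteq R$. Soundness holds because $\nodes(K)\setminus R\subseteq O_K$, so every constrained PMC of $K\oplus J$ lies in $R\cup\nodes(J)$. We choose $r\ge 2(2k+k\tau_{2k,k})=2^{O(k^2)}$.

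\textbf{Inductive step.} Suppose $\texttt{FindCut}$ returns $(L,Z,U)$. Write $H=H_L\oplus_{B'}(H[U\cup Z])$; this is valid because $H_L\cap H[U\cup Z]=Z$, and we regard the boundary as $B'=(B\cap L)\cup Z$. For an arbitrary outside $J$, the graph $H\oplus_B J$ equals $H_L\oplus_{B'}(H[U\cup Z]\oplus J)$, since $B\cap L\subseteq B'$ and $J$ meets $H_L$ only in $B\cap L$. Applying the induction hypothesis to $H_L$ (of smaller size) with outside graph $H[U\cup Z]\oplus J$ transfers both guarantees to $H'\oplus_B J$. A second induction step on $H'$, which is smaller since $|L|>r\ge|\nodes(K_L)|$, transfers them again, and composing the two gives the guarantees for $H$. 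The boundary bound $|B'|\le|B|/2+k\le2k$ follows from the orientation. At the top level we call $\texttt{Reduce}(G,k,I,O,\emptyset)$ and solve the instance on at most $r$ vertices by brute force over subsets of size at most $k$ using $\CheckPMC$. This costs $r^{k}\cdot\mathrm{poly}(r)=2^{O(k^3)}$. By soundness, any witness found belongs to $\PMC_k(G,I,O)$.

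\textbf{Running time (main obstacle).} We need to bound the number of recursive calls by $O(n)$ and the cost per call. We would use a potential argument: each non-leaf call replaces more than $r$ vertices of $L$ by at most $r$ vertices, and that replacement happens at a call whose subtree can be charged to the vertices of $L$. The recursion-tree analysis of Chitnis et al.\ yields $O(n)$ calls in total once we use $|L|,|U|>r$ together with the fact that each second call works on a strictly smaller graph. Each call costs the following:
\begin{itemize}
\item $\texttt{FindCut}$ runs in $2^{O(\min(q,k)\log(q+k))}n^3\log n$ by Golovach et al.
\item $\texttt{ChooseR}$ runs in $2^{O(k^2+k\log(2+q/k))}n^3$ by Theorem~\ref{thm:retained}.
\item $\Compress$ runs in $O(n^3)$.
\end{itemize}
With $q=(2r+1)r2^k=2^{O(k^2)}$ we get $k\log q=O(k^3)$. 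Hence each call costs $2^{O(k^3)}n^3\log n$, and the total is $2^{O(k^3)}n^4\log n$. The delicate point is ensuring that the number of calls is linear and not quadratic. We would prove it by induction on $|\nodes(H)|$ that a call on an $h$-vertex piece makes at most $c\cdot\max(1,h-r)$ calls, using $|\nodes(H')|\le h-|L|+r<h$ and $|\nodes(H_L)|\le h-|U|<h-r$. The case $k=0$ and inputs with $n\le1$ are handled trivially.
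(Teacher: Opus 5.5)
Your plan follows the paper's proof closely. It uses the same two-part specification for \texttt{Reduce} (existence equivalence plus soundness), the same two base cases, and the same re-bracketing $H\oplus_B J=H_L\oplus_{B'}(H[U\cup Z]\oplus J)$ in the inductive step. The per-call costs and the final brute-force search over at most $r$ vertices also match. The step that fails as written is the one you call the main obstacle: the linear bound on the number of calls. With your inequalities $|\nodes(H_L)|\le h-|U|$ and $|\nodes(H')|\le h-|L|+r$, the two children's potentials sum to at most $(h-|U|-r)+(h-|L|)=h-r+|Z|$, which can exceed $h-r$. Even with exact counting, a hypothesis of the form ``at most $c\cdot\max(1,h-r)$ calls'' only gives $1+c(n_1-r)+c(n_2-r)\le 1+c(h-r)$ for a branching call. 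That is one call too many whenever $|\nodes(K_L)|=r$, and nothing in your argument rules this out. So the induction, as you state it, does not close for any constant $c$.

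The repair is the paper's. Because $Z\subseteq\nodes(K_L)$, you have exactly $|\nodes(H')|=|U|+|\nodes(K_L)|$, hence $n_1+n_2=h+|\nodes(K_L)|\le h+r$. That is, $(n_1-r)+(n_2-r)\le h-r$, and both summands are at least $1$ since $|L|,|U|>r$. Now prove the bound $2(h-r)-1$, which absorbs the current call: $1+\bigl(2(n_1-r)-1\bigr)+\bigl(2(n_2-r)-1\bigr)\le 2(h-r)-1$. Multiplying by the per-call cost $2^{O(k^3)}N^3\log N$ then gives the claimed total. A smaller omission: to apply the induction hypothesis to the second call you need $|I'|\le k$. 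Your invariant should therefore also record $I_K=I_H$ and $I_H\subseteq\nodes(K)$, as Lemma~\ref{lem:general-reduce} does; both base cases supply this, and it propagates through the inductive step.
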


\subsection{Replacement size and separation threshold}
\label{sec:general-parameters}

Throughout the recursion, each call to $\texttt{Reduce}$ receives a
piece $H$, constraint sets $I_H,O_H\subseteq\nodes(H)$ with $|I_H|\le k$,
and a boundary set $B\subseteq\nodes(H)$ satisfying $|B|\le 2k$.
Substituting $b=2k$ into the bound of Theorem~\ref{thm:retained}, we
define the replacement size bound $r$ by
\begin{equation}\label{eq:general-r}
 \tau_{2k,k}=(k+2)2^{\binom{3k+1}{2}+3k+1},
 \qquad r=2(2k+k\tau_{2k,k})=2^{O(k^2)}.
\end{equation}
Whenever $H$ is certified to be $(q,k)$-unbreakable, invoking
$R\gets\texttt{ChooseR}(H,B,k,I_H,O_H,q)$ followed by
$\Compress(H,B,R,I_H,O_H)$ produces a graph $K$ with
$|\nodes(K)|\le 2|R|\le r$ and constraint sets
\begin{equation}\label{eq:general-base-constraints}
 I_K=I_H,\qquad
 O_K=(O_H\cap R)\cup(\nodes(K){\setminus} R).
\end{equation}
By Theorem~\ref{thm:retained} and Corollary~\ref{cor:compression-constraints},
for every outside graph $J$ attached along $B$, the replacement
$(K,I_K,O_K)$ preserves the existence of a constrained PMC of size at
most $k$; in addition, because all auxiliary vertices
$\nodes(K){\setminus}R$ belong to $O_K$, every constrained PMC of
$K\oplus_B J$ is already a constrained PMC of $H\oplus_B J$ as the exact
same vertex set.

Notice that the replacement bound $r$ in~\eqref{eq:general-r} depends
only on $k$ and is independent of the unbreakability threshold $q$. We
can therefore fix $r$ first and then set $q$ according to the
unbreakability guarantee of the cut-finding algorithm of Chitnis et
al.~\cite[Section~3.5]{chitnis2016}, stated in the following form by
Golovach et al.~\cite[Lemma~2, p.~18:3]{golovach2018}.

\begin{citedLemma}{\cite[Lemma~2, p.~18:3]{golovach2018}}
\label{lem:general-cut}
There is a deterministic algorithm that, given an $n$-vertex
graph $H$, where $n\ge2$, and two integers $r,k\ge1$, either returns a separation $(L,Z,U)$
with
\[
 |Z|\le k,\qquad |L|>r,\qquad |U|>r,
\]
or correctly reports that $H$ is $((2r+1)r2^k,k)$-unbreakable.
Its running time is
\[
 2^{O(\min\{r,k\}\log(r+k))}n^3\log n.
\]
\end{citedLemma}

We refer to the algorithm of Lemma~\ref{lem:general-cut} as
$\texttt{FindCut}(H,r,k)$, and fix the global unbreakability threshold
\begin{equation}\label{eq:general-q}
 q\eqdef (2r+1)r2^k=2^{O(k^2)}.
\end{equation}

\subsection{Recursive replacement}\label{sec:general-reduce}

Algorithm~\ref{alg:general-reduce} ($\texttt{Reduce}$) gives the full
recursive replacement procedure. When $|\nodes(H)|>r$ and
$\texttt{FindCut}(H,r,k)$ returns a separation $(L,Z,U)$ with $|Z|\le k$
and $|L|,|U|>r$, we swap $L$ and $U$ if necessary so that
$|B\cap L|\le|B\cap U|$. Because $L,Z,U$ partition $\nodes(H)$ and
$|B|\le 2k$, we have
\[
 |B\cap L|\le \tfrac12|B{\setminus} Z|\le  \tfrac12|B| \le k,
 \qquad |B'|=|(B\cap L)\cup Z|=|B\cap L|+|Z|\le2k.
\]
In line~\ref{line:firstrRecursiveCall}, the first recursive call
compresses $H_L=H[L\cup Z]$ with boundary $B'=(B\cap L)\cup Z$, which
retains both $Z$ (the interface to $H[U\cup Z]$) and $B\cap L$ (the
portion of the external boundary $B$ lying in $L$). After substituting
the returned graph $K_L$ for $H_L$ to form
$H'=K_L\oplus_Z H[U\cup Z]$ in line~\ref{line:fotmH'}, the second
recursive call in line~\ref{line:secondRecursiveCall} compresses $H'$
with the original boundary $B$. As always, every auxiliary vertex created
during any call to $\Compress$ receives a fresh identifier distinct from
all original and previously created vertices.

\begin{algorithm}[htbp]
\caption{$\texttt{Reduce}(H,k,I_H,O_H,B)$}\label{alg:general-reduce}
\begin{algorithmic}[1]
\Require $k\ge1$, $I_H,O_H,B\subseteq\nodes(H)$,
 $|I_H|\le k$, $|B|\le2k$; $r,q$ as above
\Ensure A graph $K$ of size at most $r$ and constraints $I_K,O_K$
 satisfying Lemma~\ref{lem:general-reduce}
\State \textbf{if} $|\nodes(H)|\le r$ \textbf{then return} $(H,I_H,O_H)$ \label{line:HisSmall}
\State run $\texttt{FindCut}(H,r,k)$
\If{it certifies that $H$ is $(q,k)$-unbreakable}\label{line:ifHUnbreakable}
 \State $R\gets\texttt{ChooseR}(H,B,k,I_H,O_H,q)$
 \State \Return $\Compress(H,B,R,I_H,O_H)$
\EndIf \label{line:ifHUnbreakableEnd}
\State let $(L,Z,U)$ be the returned separation
\State exchange $L,U$ if necessary so that $|B\cap L|\le|B\cap U|$
\State $H_L\gets H[L\cup Z]$; $B'\gets(B\cap L)\cup Z$
\State $I_L\gets I_H\cap(L\cup Z)$; $O_L\gets O_H\cap(L\cup Z)$
\State $(K_L,I_{K_L},O_{K_L})\gets
 \texttt{Reduce}(H_L,k,I_L,O_L,B')$ \label{line:firstrRecursiveCall}
\State $H'\gets K_L\oplus_Z H[U\cup Z]$ \label{line:fotmH'}
\State $I'\gets I_{K_L}\cup(I_H\cap(U\cup Z))$
\State $O'\gets O_{K_L}\cup(O_H\cap(U\cup Z))$
\State \Return $\texttt{Reduce}(H',k,I',O',B)$ \label{line:secondRecursiveCall}
\end{algorithmic}
\end{algorithm}

\subsection{Correctness of the recursive replacement}
\label{sec:general-correctness}

We now prove by strong induction on $|\nodes(H)|$ that $\texttt{Reduce}$
satisfies its specification at every level of the recursion.

\begin{lemma}[Correctness of $\texttt{Reduce}$]\label{lem:general-reduce}
Let $H$ be a graph, let $k\ge1$ be an integer, and let
$I_H,O_H,B\subseteq\nodes(H)$ with $|I_H|\le k$ and $|B|\le2k$.
With $r,q$ defined by~\eqref{eq:general-r} and~\eqref{eq:general-q},
$\texttt{Reduce}(H,k,I_H,O_H,B)$ terminates and returns
$(K,I_K,O_K)$ such that
\[
 B\cup I_H\subseteq\nodes(K),\qquad |\nodes(K)|\le r,
 \qquad I_K=I_H,
\]
and
\begin{equation}\label{eq:general-forbidden}
 O_K=(O_H\cap\nodes(K))\cup(\nodes(K){\setminus}\nodes(H)).
\end{equation}
For every graph $J$ sharing exactly $B$ with both $H$ and $K$,
and every $I_J,O_J\subseteq\nodes(J)$:
\begin{enumerate}
 \item The graph $H\oplus_B J$ has a PMC of size at most $k$
 containing $I_H\cup I_J$ and avoiding $O_H\cup O_J$ if and only if
 $K\oplus_B J$ has a PMC of size at most $k$ containing
 $I_K\cup I_J$ and avoiding $O_K\cup O_J$; that is, $\PMC_k(H\oplus_B J,I_H\cup I_J,O_H\cup O_J)\neq \emptyset \Leftrightarrow \PMC_k(K\oplus_B J,I_K\cup I_J,O_K\cup O_J)\neq \emptyset$.
 \item Every
 $\Omega\in\PMC_k(K\oplus_B J,I_K\cup I_J,O_K\cup O_J)$
 belongs to
 $\PMC_k(H\oplus_B J,I_H\cup I_J,O_H\cup O_J)$
 as the same set of vertices.
\end{enumerate}
\end{lemma}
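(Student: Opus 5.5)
We prove the lemma by strong induction on $|\nodes(H)|$, following the three exits of Algorithm~\ref{alg:general-reduce}. Termination and the size bound come first. Line~\ref{line:HisSmall} returns only when $|\nodes(H)|\le r$. The unbreakable exit returns $K$ with $|\nodes(K)|\le2|R|\le 2(2k+k\tau_{2k,k})=r$ by Theorem~\ref{thm:retained} with $b=|B|\le2k$ (using $\tau_{b,k}\le\tau_{2k,k}$) and Theorem~\ref{thm:compression}. In the recursive branch, $|\nodes(H_L)|=|L|+|Z|<|\nodes(H)|$ because $|U|>r\ge1$. Also $|\nodes(H')|\le r+|Z|+|U|<|L|+|Z|+|U|$ because $|L|>r$ and $|\nodes(K_L)|\le r$. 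Hence both recursive calls act on strictly smaller graphs and the induction hypothesis applies. Its preconditions hold: $|B'|\le2k$ by the orientation argument, $|I_L|,|I'|\le k$, and $B\subseteq\nodes(H')$. For the last point, $B\cap L\subseteq B'\subseteq\nodes(K_L)$ and $B\cap(U\cup Z)\subseteq\nodes(H[U\cup Z])$.

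The base cases are direct. Returning $H$ unchanged is trivially correct, and \eqref{eq:general-forbidden} holds because $\nodes(K)\setminus\nodes(H)=\emptyset$. In the unbreakable case, Corollary~\ref{cor:compression-constraints} gives $I_K=I_H$ and $O_K=(O_H\cap R)\cup(\nodes(K)\setminus R)$, which is \eqref{eq:general-forbidden} since $\nodes(K)\cap\nodes(H)=R$. Item~2 follows from the final remark of that corollary: every constrained PMC of $K\oplus_BJ$ lies in $R\cup\nodes(J)$, so the equivalence \eqref{eq:compression-constrained} applies to it. For item~1, ($\Leftarrow$) is item~2. For ($\Rightarrow$), Theorem~\ref{thm:retained} yields a solution $\Omega$ with $\Omega\cap\nodes(H)\subseteq R$, and \eqref{eq:compression-constrained} transfers it. One check is needed: $I_H\subseteq R$ and $B\subseteq R$, which holds because \texttt{ChooseR} starts from $B\cup I_H$.

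The recursive case is the main step. Fix $J$ sharing exactly $B$ with $H$ and with $K$. The key is regrouping: $H\oplus_BJ=H_L\oplus_{B'}J'$, where $J'\eqdef H[U\cup Z]\oplus_{B\cap U}J$. One must verify that $\nodes(H_L)\cap\nodes(J')=B'$. This uses that $L$ has no edges to $U$, and that $J$ meets $H$ only in $B$, so $\nodes(J)\cap(L\cup Z)=B\cap(L\cup Z)\subseteq B'$. Likewise $K_L\oplus_{B'}J'=H'\oplus_BJ$; here we use that the auxiliary vertices of $K_L$ are fresh and hence disjoint from $\nodes(J)$, so $J$ shares exactly $B$ with $H'$. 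The constraints are distributed as $(I_L,O_L)$ on $H_L$ and $\bigl(I_H\cap(U\cup Z)\cup I_J,\;O_H\cap(U\cup Z)\cup O_J\bigr)$ on $J'$. This distribution combines to $(I_H\cup I_J,O_H\cup O_J)$, because $I_H=I_L\cup(I_H\cap(U\cup Z))$ and similarly for $O_H$. The combined constraints on $K_L\oplus_{B'}J'$ are $(I',O')$ together with $(I_J,O_J)$. One subtlety is that $Z$-vertices may appear in both constraint sets, which only duplicates entries and is harmless. The induction hypothesis on the first call then transfers nonemptiness and solutions between $H\oplus_BJ$ and $H'\oplus_BJ$. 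The induction hypothesis on the second call, applied with this same $J$, transfers them between $H'\oplus_BJ$ and $K\oplus_BJ$. Composing the two equivalences gives item~1, and composing the two "same vertex set" implications gives item~2.

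It remains to check the bookkeeping identities in the recursive case. We have $I_K=I'=I_{K_L}\cup(I_H\cap(U\cup Z))=I_L\cup(I_H\cap(U\cup Z))=I_H$, and $B\cup I_H\subseteq\nodes(K)$ follows from the hypothesis $B\cup I'\subseteq\nodes(K)$. The obstacle I expect is \eqref{eq:general-forbidden}. The second call gives $O_K=(O'\cap\nodes(K))\cup(\nodes(K)\setminus\nodes(H'))$. Expanding $O'$ with the first call's formula gives $O'=(O_L\cap\nodes(K_L))\cup(\nodes(K_L)\setminus\nodes(H_L))\cup(O_H\cap(U\cup Z))$. A vertex of $\nodes(K)$ is either original, meaning it lies in $\nodes(H)$, or auxiliary. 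An auxiliary vertex is new either relative to $H'$ or relative to $H_L$; in both cases it lies outside $\nodes(H)$, because auxiliary identifiers are fresh. An original vertex is in $O'$ exactly when it is in $O_H$. Matching these cases yields $O_K=(O_H\cap\nodes(K))\cup(\nodes(K)\setminus\nodes(H))$.
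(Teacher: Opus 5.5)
Your proof is correct and follows the paper's argument: strong induction on $|\nodes(H)|$ over the three exits of \texttt{Reduce}, regrouping $H\oplus_B J$ as $H_L\oplus_{B'}\bigl(H[U\cup Z]\oplus J\bigr)$ with the split constraints, composing the two inductive guarantees (nonemptiness for item~1, same-vertex-set for item~2), and a freshness argument for \eqref{eq:general-forbidden}. The only nit is that the subscript in your $J'$ should be $B\cap(U\cup Z)$, which is the actual intersection $\nodes(H[U\cup Z])\cap\nodes(J)$, rather than $B\cap U$; the union graph itself is unaffected.
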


\begin{proof}
We proceed by strong induction on $n=|\nodes(H)|$. If $n\le r$,
line~\ref{line:HisSmall} returns $(H,I_H,O_H)$ unchanged and all claims
hold trivially. If $n>r$ and $\texttt{FindCut}$ certifies that $H$ is
$(q,k)$-unbreakable, the claims follow directly from
Theorem~\ref{thm:retained}, Corollary~\ref{cor:compression-constraints},
and~\eqref{eq:general-base-constraints} (since
$\nodes(K)\cap\nodes(H)=R$).

Suppose now that $n>r$ and $\texttt{FindCut}$ returns a separation
$(L,Z,U)$ with $|Z|\le k$ and $|L|,|U|>r$. By the choice of $L$ in
line~8, the recursive call in line~\ref{line:firstrRecursiveCall}
satisfies $|B'|\le 2k$ and $|I_L|\le|I_H|\le k$, and since $|U|>r\ge 1$,
we have $|\nodes(H_L)|=|L|+|Z|=n-|U|<n$. By the induction hypothesis,
the first recursive call terminates and returns $(K_L,I_{K_L},O_{K_L})$
with $|\nodes(K_L)|\le r$, $B'\cup I_L\subseteq\nodes(K_L)$,
$I_{K_L}=I_L$, and
$O_{K_L}=(O_L\cap\nodes(K_L))\cup(\nodes(K_L){\setminus}\nodes(H_L))$,
satisfying guarantees~(1) and~(2) with respect to $H_L$ and boundary
$B'$.

Next, consider the graph $H'=K_L\oplus_Z H[U\cup Z]$ formed in
line~\ref{line:fotmH'}. Because $\nodes(K_L)\cap\nodes(H[U\cup Z])=Z$
and $s\eqdef|\nodes(K_L)|\le r<|L|$, the number of vertices in $H'$ is
\[
 |\nodes(H')|=s+|U|\le r+|U|
 <|L|+|Z|+|U|=n.
\]
All vertices of $B$ belong to $\nodes(H')$, since
$B\cap L\subseteq B'\subseteq\nodes(K_L)$ and
$B\cap(U\cup Z)\subseteq\nodes(H[U\cup Z])$. Similarly,
$I'=I_{K_L}\cup(I_H\cap(U\cup Z))=I_H$ has size at most $k$. Thus the
second recursive call $\texttt{Reduce}(H',k,I',O',B)$ in
line~\ref{line:secondRecursiveCall} satisfies all preconditions on a
graph with strictly fewer than $n$ vertices, and by the induction
hypothesis it terminates and returns $(K,I_K,O_K)$ with
$|\nodes(K)|\le r$, $B\cup I'\subseteq\nodes(K)$, $I_K=I'=I_H$, and
guarantees~(1) and~(2) relative to $H'$.

\paragraph{Replacing $H_L$ in the attached graph.}
Let $J$ be any outside graph sharing $B$ with $H$ and $K$, with
constraints $I_J,O_J\subseteq\nodes(J)$ (with $\nodes(J){\setminus}B$
disjoint from intermediate auxiliary vertices as noted in
Section~\ref{sec:preliminaries}). Observe that attaching $J$ to $H$
along $B$ is identical to attaching the combined outside graph
\[
 C=H[U\cup Z]\oplus_{B\cap(U\cup Z)}J,
\]
with constraint sets
\[
 I_C=(I_H\cap(U\cup Z))\cup I_J,
 \qquad O_C=(O_H\cap(U\cup Z))\cup O_J,
\]
to the piece $H_L=H[L\cup Z]$ along the boundary $B'=(B\cap L)\cup Z$:
indeed, $\nodes(H_L)\cap\nodes(C)=\nodes(K_L)\cap\nodes(C)=B'$, and
\[
 H_L\oplus_{B'}C=H\oplus_B J,
 \qquad K_L\oplus_{B'}C=H'\oplus_B J.
\]
Applying the induction hypothesis for the first recursive call to the
attached graph $C$ and constraints $(I_C,O_C)$ yields
\begin{equation}\label{eq:general-first-replacement}
 \begin{aligned}
 \PMC_k(H\oplus_B J,I_H\cup I_J,O_H\cup O_J)\ne\emptyset
 &&\Longleftrightarrow&&
 \PMC_k(H'\oplus_B J,I'\cup I_J,O'\cup O_J)\ne\emptyset,
 \end{aligned}
\end{equation}
and ensures that every member of the right-hand family belongs to the
left-hand family as the same vertex set.

\paragraph{Replacing $H'$ by the final output $K$.}
Combining~\eqref{eq:general-first-replacement} with guarantee~(1) of the
induction hypothesis for the second recursive call on $H'$ immediately
gives
\[
\begin{aligned}
	&\PMC_k(H\oplus_B J,I_H\cup I_J,O_H\cup O_J)\neq\emptyset &&\Longleftrightarrow&&
	\PMC_k(K\oplus_B J,I_K\cup I_J,O_K\cup O_J)\neq\emptyset,
\end{aligned}
\]
which proves guarantee~(1) for $H$. Likewise, for any
$\Omega\in\PMC_k(K\oplus_B J,I_K\cup I_J,O_K\cup O_J)$, guarantee~(2)
for the second call implies
$\Omega\in\PMC_k(H'\oplus_B J,I'\cup I_J,O'\cup O_J)$, and guarantee~(2)
for the first call (with attachment $C$) then implies
$\Omega\in\PMC_k(H\oplus_B J,I_H\cup I_J,O_H\cup O_J)$, which establishes guarantee~(2) for $H$. Finally, because fresh auxiliary names prevent any
deleted vertex of $H_L$ from reappearing in $K$ and both calls preserve
the exclusion status of retained vertices while excluding all newly
added auxiliary vertices, we have
$O_K=(O_H\cap\nodes(K))\cup(\nodes(K){\setminus}\nodes(H))$. This completes the induction.
\end{proof}

\subsection{Returning a PMC of the original graph}\label{sec:general-search}

Algorithm~\ref{alg:general-extend} ($\texttt{Extend}$) solves
constrained PMC search on an arbitrary input graph $G$ by invoking
$\texttt{Reduce}(G,k,I,O,\emptyset)$ in line~\ref{line:callReduce} to
obtain a compressed graph $(K,I_K,O_K)$ on at most $r$ vertices, and
then testing all nonempty subsets $\Omega\subseteq\nodes(K)$ of size at
most $k$ using $\CheckPMC(K,\Omega,k,I_K,O_K)$.

\begin{algorithm}[htbp]
\caption{$\texttt{Extend}(G,k,I,O)$ on an arbitrary graph}
\label{alg:general-extend}
\begin{algorithmic}[1]
\Require $k\ge0$ and $I,O\subseteq\nodes(G)$
\Ensure A set in $\PMC_k(G,I,O)$, or $\texttt{No}$ if this family is empty
\If{$\nodes(G)=\emptyset$ or $k=0$ or $I\cap O\ne\emptyset$ or $|I|>k$}
 \State \Return $\texttt{No}$
\EndIf
\State $k\gets\min\{k,|\nodes(G)|\}$; fix $r,q$ by
 \eqref{eq:general-r} and~\eqref{eq:general-q}
\State $(K,I_K,O_K)\gets\texttt{Reduce}(G,k,I,O,\emptyset)$ \label{line:callReduce}
\For{each nonempty $\Omega\subseteq\nodes(K)$ with $|\Omega|\le k$}
 \If{$\CheckPMC(K,\Omega,k,I_K,O_K)=\texttt{Yes}$}
  \State \Return $\Omega$
 \EndIf
\EndFor
\State \Return $\texttt{No}$
\end{algorithmic}
\end{algorithm}

Applying Lemma~\ref{lem:general-reduce} to $\texttt{Reduce}(G,k,I,O,\emptyset)$
with $J$ as the empty graph shows that $\PMC_k(G,I,O)\ne\emptyset$ if
and only if $\PMC_k(K,I_K,O_K)\ne\emptyset$, and that every
$\Omega\in\PMC_k(K,I_K,O_K)$ belongs to $\PMC_k(G,I,O)$ as the exact
same vertex subset of $\nodes(G)$. Consequently,
Algorithm~\ref{alg:general-extend} is correct.

\subsection{Running time of \texorpdfstring{$\texttt{Extend}$}{Extend}}\label{sec:general-time}

\paragraph{Number of recursive calls.}
Consider any call to $\texttt{Reduce}$ on an $n$-vertex graph $H$ (with
$n>r$) in which $\texttt{FindCut}$ returns a separation $(L,Z,U)$. Let
\[
 a=|L|,\quad u=|U|,\quad z=|Z|,\quad
 s=|\nodes(K_L)|\le r.
\]
Then $n=a+u+z$, while the two recursive calls in
lines~\ref{line:firstrRecursiveCall} and~\ref{line:secondRecursiveCall}
operate on graphs $H_L$ and $H'$ of sizes $n_1=a+z$ and $n_2=u+s$,
respectively. Since $a,u>r$ and $s\le r$, we have $n_1,n_2>r$ and
\begin{equation}\label{eq:general-call-count}
 (n_1-r)+(n_2-r)=n+s-2r\le n-r.
\end{equation}
By induction on $n>r$, any call to $\texttt{Reduce}$ on an $n$-vertex
graph with $n>r$ makes at most $2(n-r)-1$ calls to $\texttt{Reduce}$ in
total (counting the call itself): for a leaf call this is $1\le 2(n-r)-1$,
and for an internal call, inequality~\eqref{eq:general-call-count} and
the induction hypothesis give at most
\[
 1+\bigl(2(n_1-r)-1\bigr)+\bigl(2(n_2-r)-1\bigr)
 \le2(n-r)-1
\]
calls. Thus, for an $N$-vertex input graph $G$,
$\texttt{Reduce}(G,k,I,O,\emptyset)$ makes $O(N)$ recursive calls, each
on a graph of at most $N$ vertices.

\paragraph{Running time per call.}
By~\eqref{eq:general-r} and~\eqref{eq:general-q}, we have
\[
 \log r=O(k^2),\qquad \log q=O(k^2).
\]
In each call to $\texttt{Reduce}$ on an $n$-vertex graph ($n\ge2$),
running $\texttt{FindCut}(H,r,k)$ takes $2^{O(k^3)}n^3\log n$ time by
Lemma~\ref{lem:general-cut}. When $\texttt{FindCut}$ certifies
$(q,k)$-unbreakability, Theorem~\ref{thm:retained} (with $|B|\le 2k$)
bounds the running time of $\texttt{ChooseR}$ by
\[
2^{O(k^2+k\log(2+q/k))}n^3 = 2^{O(k^3)}n^3,
\]
since $\log(2+q/k)\le\log(2+q)=O(k^2)$, and the subsequent call to
$\Compress$ takes $O(n^3)$ time by Theorem~\ref{thm:compression}. In a
branching call, forming $H_L$, $H'$, and the updated constraint sets
takes $O(n^2)$ time. Multiplying the per-call bound of
$2^{O(k^3)}N^3\log N$ by the $O(N)$ recursive calls bounds the total
time of $\texttt{Reduce}(G,k,I,O,\emptyset)$ by
\[
 2^{O(k^3)}N^4\log N
\]
for $N\ge2$.

\paragraph{Dependence of the final search on $k$.}
Finally, the reduced graph $K$ has at most $r$ vertices, so lines~6--10
of Algorithm~\ref{alg:general-extend} test at most
\[
 \sum_{i=1}^{\min\{k,r\}}\binom ri\le k r^k=2^{O(k^3)}
\]
subsets $\Omega\subseteq\nodes(K)$, each in time
$O(r^2+rk^2)\subseteq O(r^3)=2^{O(k^2)}$ via $\CheckPMC$
(Section~\ref{sec:recognition}). The final search therefore takes $2^{O(k^3)}$ time. Combined with the correctness of $\texttt{Extend}$, this proves Theorem~\ref{thm:general-extend}.

\end{document}